\documentclass[11pt,a4paper]{article}
\usepackage[T1]{fontenc}
\usepackage[utf8]{inputenc}
\usepackage[margin=27mm,headheight=14pt]{geometry}
\usepackage{amsmath,amssymb,amsthm,mathtools}
\usepackage{newtxtext,newtxmath}
\usepackage{microtype}
\usepackage{graphicx}
\graphicspath{{figures/}{./}}
\usepackage[backend=biber,style=authoryear,natbib=true,maxcitenames=2,maxbibnames=99]{biblatex}
\AtBeginBibliography{\small}
\usepackage{enumitem}
\usepackage{booktabs,longtable,array,placeins,needspace}
\usepackage{fancyhdr}
\usepackage[hidelinks]{hyperref}
\hypersetup{pdftitle={Kriging Beyond Gaussianity: Fixed-Domain Asymptotic Optimality},
 pdfsubject={Fixed-domain comparison of linear and optimal nonlinear prediction},
 pdfauthor={Moreno Bevilacqua and Hector Araya}}
\setlist[enumerate]{itemsep=3pt,topsep=5pt}
\numberwithin{equation}{section}
\newtheorem{theorem}{Theorem}
\newtheorem{proposition}{Proposition}
\newtheorem{lemma}{Lemma}
\newtheorem{corollary}{Corollary}
\theoremstyle{definition}
\newtheorem{assumption}{Assumption}
\theoremstyle{remark}
\newtheorem{remark}{Remark}
\newcommand{\R}{\mathbb R}
\newcommand{\N}{\mathbb N}
\newcommand{\Nzero}{\mathbb N_0}
\newcommand{\E}{\mathbb E}
\newcommand{\PP}{\mathbb P}
\newcommand{\gq}{\gamma_q}
\newcommand{\bk}{\boldsymbol k}
\newcommand{\bz}{\boldsymbol z}
\newcommand{\bG}{\boldsymbol G}
\newcommand{\bZ}{\boldsymbol Z}
\newcommand{\dd}{\,\mathrm d}
\newcommand{\norm}[1]{\left\lVert#1\right\rVert}
\newcommand{\ind}{\mathbf 1}
\newcommand{\MSPEL}{\operatorname{MSPE}_n^{\mathrm L}}
\newcommand{\MSPEN}{\operatorname{MSPE}_n^{\mathrm{NL}}}
\newcommand{\MSPEO}{\operatorname{MSPE}_n^{\mathrm O}}
\newcommand{\MSPER}{\operatorname{MSPE}_n^{\rho}}
\newcommand{\YL}{\widehat Y_n^{\mathrm L}}
\newcommand{\YN}{\widehat Y_n^{\mathrm{NL}}}
\newcommand{\YO}{\widehat Y_n^{\mathrm O}}
\newcommand{\YR}{\widehat Y_n^{\rho}}
\DeclareMathOperator{\Var}{Var}
\DeclareMathOperator{\Cov}{Cov}
\DeclareMathOperator{\dist}{dist}
\DeclareMathOperator{\asinh}{asinh}

\newcommand{\numericalpanel}[2]{%
  \includegraphics[width=\linewidth]{#1}%
}
\allowdisplaybreaks[1]
\title{Kriging Beyond Gaussianity: Fixed-Domain Asymptotic Optimality}
\author{Moreno Bevilacqua \qquad H\'ector Araya\\
\small Universidad Adolfo Ib\'a\~nez, Chile}
\date{}
\begin{document}
\maketitle
\vspace{-1.4em}

\begin{abstract}
Under squared-error loss, the optimal predictor of a non-Gaussian random
field is its conditional mean, which may be difficult to compute. We consider
a broad class of non-Gaussian random fields obtained by transforming finitely
many independent Gaussian fields sharing a common correlation function, and
study when kriging entails no first-order loss in mean squared prediction error
(MSPE) under fixed-domain asymptotics. We compare the observed-data
conditional mean, kriging with the observed covariance, kriging with the
latent correlation, and a latent-information oracle. These predictors satisfy
a finite-sample MSPE ordering. Under conditions linking transformation
regularity to the high-frequency decay of the latent spectral density, all
four MSPEs are asymptotically equivalent. For radially nonincreasing spectral
densities with polynomial tails, the required regularity order is independent
of both the spatial dimension and the number of latent fields. Exponential
correlation also yields finite-sample bounds. Examples include Weibull,
skew-Gaussian, and copula-based fields, while numerical comparisons illustrate
finite-sample differences. Counterexamples show why covariance structure and
first-order transformation regularity alone are insufficient. Finally,
standardized prediction errors converge to a Gaussian scale mixture, which is
standard normal exactly when the transformation's gradient norm is almost
surely constant.
\end{abstract}

\noindent\textit{Keywords:} fixed-domain asymptotics; non-Gaussian random fields;
kriging; covariance misspecification; nonlinear prediction.

\section{Introduction}\label{sec:introduction}

Kriging---or Gaussian process regression---is a standard tool for prediction
in the analysis of random fields, computer experiments, and machine learning
\citep{Cressie:1993,williams2006gaussian,cp2003,wendland2004scattered}.
With a known mean and covariance, simple kriging is the best linear predictor,
whether or not the underlying field is Gaussian. Under Gaussianity, it also
coincides with the conditional mean and therefore minimizes mean squared
prediction error (MSPE) among all predictors based on the observations.

For a non-Gaussian field, the optimal predictor under squared-error loss
remains the conditional mean. In general, however, this predictor depends on
the full joint distribution of the target and the observations and may be
unavailable in closed form or computationally difficult to evaluate. Kriging
provides a simpler alternative because it requires only first- and second-order
structure. A second simplification is possible in latent Gaussian models: even
when the covariance of the observed non-Gaussian field is difficult to derive
or evaluate, one may compute kriging weights directly from the latent Gaussian
correlation and apply them to the observed data. We study the predictive cost
of both simplifications under fixed-domain asymptotics.

Specifically, let
\begin{equation}\label{eq:intro-general-model}
Y(s)=F\{Z_1(s),\ldots,Z_q(s)\},
\qquad s\in D,
\end{equation}
where $Z_1,\ldots,Z_q$ are independent copies of a standardized stationary
Gaussian random field with common correlation function $\rho$, and
$F:\mathbb{R}^q\to\mathbb{R}$ is measurable. The transformation need not be
invertible or monotone. Hence, particularly when $q>1$, observing $Y$ need not
reveal the underlying latent Gaussian vector.

An important motivation for studying the class
\eqref{eq:intro-general-model} is that it encompasses many non-Gaussian
random-field constructions proposed in the literature. Examples include
log-Gaussian fields \citep{DeOliveira2006}, skew-Gaussian fields
\citep{ZhangElShaarawi2010}, Tukey $g$-and-$h$ fields
\citep{XuGenton2017}, sinh--arcsinh fields
\citep{JonesPewsey2009,BlasiEtAl2022}, Gamma and Weibull fields
\citep{BevilacquaEtAl2020}, Bernoulli threshold fields
\citep{HeagertyLele1998}, Gaussian-copula random fields
\citep{MasarottoVarin2012}, and more complex copula-based constructions
\citep{BevilacquaEtAl2024}, to mention only a few.

A useful feature of the Gaussian-functional representation  \eqref{eq:intro-general-model} is that,
under suitable conditions on $F$, regularity properties of the latent
Gaussian field are inherited by the transformed field. In particular,
mean-square continuity and mean-square differentiability are preserved under
the conditions stated in Proposition~\ref{prop:ms-regularity}; related
sample-path consequences follow from general second-order theory
\citep{Scheuerer2010}. This inheritance makes it possible to control the
regularity of the non-Gaussian field  $Y$ through  flexible correlation families with explicit smoothness parameters, such
as the Mat\'ern \citep{ppmm} and generalized Wendland families \citep{BevilacquaEtAl2019}.

We compare three prediction procedures based on the observed non-Gaussian
data: the optimal conditional mean, simple kriging based on the true
covariance of the observed field, and a simpler kriging predictor whose
weights are computed directly from the latent Gaussian correlation. To
quantify the possible value of the unobserved latent variables, we also
introduce the conditional mean given the latent Gaussian observations as a
theoretical oracle. For every finite design, this oracle has no larger MSPE
than the optimal predictor based on the observed non-Gaussian data, which in
turn has no larger MSPE than true-covariance kriging; the latter has no larger
MSPE than latent-correlation kriging.

Our main result shows that, under explicit conditions linking the
high-frequency behaviour of the latent spectral density to the regularity of the
transformation, these finite-sample differences disappear to first order.
Writing $\operatorname{MSPE}_n^{\mathrm{NL}}$ for the MSPE of the optimal
conditional mean based on the observed data,
$\operatorname{MSPE}_n^{\mathrm L}$ for that of true-covariance kriging, and
$\operatorname{MSPE}_n^{\rho}$ for that of latent-correlation kriging, we show
that
\begin{equation}\label{eq:intro-main-equivalence}
\operatorname{MSPE}_n^{\rho}
\sim
\operatorname{MSPE}_n^{\mathrm L}
\sim
\operatorname{MSPE}_n^{\mathrm{NL}}
\end{equation}
under fixed-domain asymptotics. Thus both kriging procedures are
asymptotically equivalent in MSPE to the optimal conditional mean based on
the observed non-Gaussian data. The same first-order MSPE is also attained by
the theoretical predictor based on the latent Gaussian observations. Hence
the additional information contained in the latent variables becomes
asymptotically negligible, even for multilatent and noninvertible
constructions in which those variables cannot in general be recovered from
the observed field. The result is therefore a relative-efficiency statement,
stronger than the mere fact that all MSPEs vanish as observations become dense
in a fixed region.

The closest precedent is the scalar transformed-Gaussian setting studied by
\citet{HiranoYajima2013}. For a Mat\'ern latent field, they establish, under
weighted Hermite conditions, asymptotic MSPE equivalence between linear
prediction based on the covariance of the transformed field and a
conditional-mean benchmark based on the latent Gaussian observations. 

The present development extends this prediction comparison in several
directions. Most importantly, it allows finitely many latent Gaussian fields
and noninvertible transformations, so that the latent vector need not be
recoverable from the observed field. It also replaces the Mat\'ern-specific
scalar argument by explicit conditions on the latent spectral density and on
the Gaussian-Sobolev regularity of the transformation. For radially
nonincreasing spectral densities with polynomial tails, the required
regularity order does not increase with either the spatial dimension or the
number of latent components. In the rough-latent regime, a finite expected
squared weak gradient is sufficient. A broader class of spectral densities is
also covered under a stronger transformation regularity condition.

The underlying mechanism combines two ingredients. First, Gaussian
conditioning gives an exact expression for the MSPE of the latent-information
oracle and identifies its leading behaviour. Second, the regularity conditions
imply that the observed and latent spectral densities have proportional
high-frequency behaviour. A comparison of the corresponding kriging
residuals then yields the same leading MSPE for the simpler
latent-correlation predictor, while the intermediate predictors are controlled
by the finite-sample MSPE ordering. For exponential latent correlation, the
argument can be strengthened to give finite-sample MSPE bounds. In that case,
it is sufficient that observations accumulate at the prediction target;
density throughout the entire domain is not required.

Asymptotic equivalence in MSPE does not by itself determine the distribution
of the prediction error. We therefore also study the limiting standardized
prediction error. In general, the limit is a Gaussian scale mixture whose
random variance depends on the local behaviour of the transformation. It is
standard normal precisely when the squared norm of the transformation
gradient is almost surely constant. Hence asymptotically efficient point
prediction and asymptotically Gaussian uncertainty quantification are distinct
properties. The log-Gaussian and skew-Gaussian constructions provide simple
contrasting examples of this distinction.

The regularity assumptions underlying the positive MSPE-equivalence results
are also substantive. We show that an invertible and globally Lipschitz
transformation can retain a non-vanishing advantage for nonlinear prediction
when the latent Gaussian field is sufficiently smooth. Thus first-order
regularity alone is not enough at arbitrary latent smoothness. We also
construct a continuous non-Gaussian process with Mat\'ern covariance for which
nonlinear prediction has a strictly smaller asymptotic MSPE order than any
linear predictor. This second example shows that covariance structure alone
cannot determine whether kriging is asymptotically equivalent to the optimal
nonlinear predictor. The Gaussian-functional representation and its
regularity conditions therefore play an essential role in the positive
results.

The examples in Section~\ref{sec:examples} illustrate the scope of the theory
for both scalar and genuinely multilatent constructions, including
noninvertible models for which the latent Gaussian variables cannot be
recovered from the observations. 
Bernoulli threshold fields provide a particularly informative boundary case.
They belong to the Gaussian-functional class and arise naturally in spatial
binary models \citep{HeagertyLele1998}, but their discontinuous threshold
transformation falls outside the Gaussian-Sobolev regularity conditions of
our positive results. We therefore study them separately in
Section~\ref{sec:probit}. In this case the MSPE of the predictor based on the
latent Gaussian observations remains available explicitly, but our theory
does not imply asymptotic MSPE equivalence with the optimal predictor based
only on the observed binary field. This example further illustrates the
difference between belonging to the Gaussian-functional class and satisfying
the conditions required for asymptotic kriging efficiency.

Section~\ref{sec:setup} introduces the model, the predictors, and their
finite-sample comparison. Section~\ref{sec:main} gives the asymptotic MSPE
results and the finite-sample exponential bound. Section~\ref{sec:examples}
presents representative non-Gaussian models. Section~\ref{sec:limits} studies
limiting prediction errors, Gaussian calibration, and counterexamples, and
Section~\ref{sec:numerical} gives deterministic numerical comparisons for
log-Gaussian and sinh--arcsinh fields. Section~\ref{sec:discussion}
concludes. Technical proofs and additional examples are collected in the
appendices.

\section{Model and finite-sample prediction comparisons}\label{sec:setup}

This section establishes the statistical structure used throughout the paper.
We first define the Gaussian-functional model and the covariance map that links
latent and observed dependence. We then introduce the optimal conditional mean,
true-covariance kriging, and latent-correlation kriging. Their MSPEs admit an
exact finite-sample ordering and decomposition, which separates the gain from
nonlinear prediction from the additional gain obtained by using the true
observed covariance. These identities provide the starting point for the
fixed-domain asymptotic analysis in Section~\ref{sec:main}.

Throughout, $\N=\{1,2,\ldots\}$, $\Nzero=\{0,1,\ldots\}$,
$\|\cdot\|$ denotes the Euclidean norm, $I_q$ is the $q\times q$ identity
matrix, and $\boldsymbol1_n$ is the vector of $n$ ones. We write $\ind_A$
for the indicator of an event or set $A$, and
$\dist(s,S)=\min_{t\in S}\|s-t\|$ for the distance to a nonempty finite
set $S$. For positive sequences or functions,
$u\sim v$ means that $u/v$ tends to one, and $u\asymp v$ means that this
ratio is eventually bounded above and below by positive constants.

Let $D\subset\R^d$, $d\in\N$, be compact and fix $s_\star\in D$. Let
$Z_1,\ldots,Z_q$, with $1\le q<\infty$, be independent copies of a real,
centered stationary Gaussian field on $\R^d$ with unit variance and
correlation function $\rho$; observations are restricted to $D$. Independence refers to different latent fields, not to
different sites. Their common spectral density $f$ uses the convention
\begin{equation}\label{eq:spectral-convention}
 \rho(h)=\int_{\R^d}e^{\mathrm i\lambda^{\mathsf T}h}
 f(\lambda)\dd\lambda,
 \qquad
 \int_{\R^d}f(\lambda)\dd\lambda=1.
\end{equation}
The observed field is
\begin{equation}\label{eq:model}
 Y(s)=F\{Z_1(s),\ldots,Z_q(s)\},
 \qquad s\in D,
\end{equation}
where $F:\R^q\to\R$ is Borel measurable. Write $\gq$ for standard Gaussian
measure on $\R^q$, and assume
\begin{equation}\label{eq:L2-F}
 F\in L^2(\gq),
 \qquad
 0<\Var\{F(\bG)\}<\infty,
 \qquad
 \bG\sim N_q(0,I_q).
\end{equation}

The true field law is fixed. Its mean
$m_Y=\E F(\bG)$ is known throughout, there is no observation error, and all
covariance parameters are treated as fixed. The true covariance of the
observed field is regarded as known for the benchmark true-covariance kriging
predictor introduced below, whereas the simpler predictor based on the latent
correlation does not require it to be evaluated. The function $F$ need not be
monotone, invertible, or everywhere classically differentiable. A choice of
representative on Gaussian-null sets does not change the law at deterministic
finite sets of sites. The number of latent fields $q$ is distinct from the
spatial dimension $d$.

Let $\bG,\bG'$ be independent $N_q(0,I_q)$ vectors. For $-1\le u\le1$, put
\[
 \bG_u=u\bG+\sqrt{1-u^2}\,\bG'
\]
and define
\begin{equation}\label{eq:KF-definition}
 K_F(u)=\Cov\{F(\bG),F(\bG_u)\}.
\end{equation}
The paired vectors have cross-covariance $uI_q$. The argument $u$ is a
scalar correlation, not a spatial lag, and $K_F$ is not normalized by the
marginal variance:
\[
 K_F(0)=0,
 \qquad
 K_F(1)=\Var\{F(\bG)\}.
\]
The covariance of the observed field is
\begin{equation}\label{eq:observed-covariance}
 C_Y(h)=K_F\{\rho(h)\}.
\end{equation}
Thus $K_F$ separates the effect of the pointwise transformation from the
spatial correlation. It depends on the Gaussian representation of $F$, not
only on the univariate marginal distribution of $Y$.

For distinct sites
$S_n=\{s_1,\ldots,s_n\}\subset D\setminus\{s_\star\}$, put
\[
 \boldsymbol Y_n
 =
 \bigl(Y(s_1),\ldots,Y(s_n)\bigr)^{\mathsf T},
 \qquad
 \mathcal G_n=\sigma(\boldsymbol Y_n),
\]
and write $Z_j(S_n)=(Z_j(s_1),\ldots,Z_j(s_n))^{\mathsf T}$ for the
corresponding latent observation vector. More generally, $W(S_n)$ denotes
the vector of values of a scalar field $W$ at these ordered sites. Define
\begin{align*}
 C_{Y,n}
 &=
 \{C_Y(s_i-s_k)\}_{i,k=1}^n,
 &
 \boldsymbol c_{Y,n}
 &=
 \{C_Y(s_\star-s_i)\}_{i=1}^n,
 \\
 R_n
 &=
 \{\rho(s_i-s_k)\}_{i,k=1}^n,
 &
 \boldsymbol r_n
 &=
 \{\rho(s_\star-s_i)\}_{i=1}^n.
\end{align*}
The finite-sample comparisons below assume that the covariance matrices are
positive definite; the spectral assumptions introduced in
Section~\ref{sec:main} ensure this property. Write
\[
 \boldsymbol w_{Y,n}
 =
 C_{Y,n}^{-1}\boldsymbol c_{Y,n},
 \qquad
 \boldsymbol w_{\rho,n}
 =
 R_n^{-1}\boldsymbol r_n.
\]

Kriging with the observed covariance and with the latent correlation gives,
respectively,
\begin{align}
 \YL
 &=
 m_Y+
 \boldsymbol w_{Y,n}^{\mathsf T}
 (\boldsymbol Y_n-m_Y\boldsymbol1_n),
 \label{eq:linear-predictor}
 \\
 \YR
 &=
 m_Y+
 \boldsymbol w_{\rho,n}^{\mathsf T}
 (\boldsymbol Y_n-m_Y\boldsymbol1_n).
 \label{eq:latent-covariance-predictor}
\end{align}
Both predictors act on the same observed non-Gaussian data and predict
$Y(s_\star)$ on its original scale. In
\eqref{eq:latent-covariance-predictor}, the latent Gaussian correlation is
used only to determine the kriging weights. Thus $\YR$ is neither a predictor
constructed from observed $Z_j$ values nor a transformation of a latent
Gaussian kriging predictor.
The optimal predictor based on the observed data is
\begin{equation}\label{eq:nonlinear-predictor}
 \YN
 =
 \E\{Y(s_\star)\mid\mathcal G_n\}.
\end{equation}
For $A\in\{\mathrm L,\rho,\mathrm{NL}\}$, define
\[
 \operatorname{MSPE}_n^A
 =
 \E\!\left[
 \{Y(s_\star)-\widehat Y_n^A\}^2
 \right].
\]
These are unconditional deterministic MSPEs under the true non-Gaussian law,
not variances reported by a working Gaussian model. The term linear prediction
includes the known-mean correction, or equivalently linear prediction of the
centered field.

\begin{proposition}\label{prop:finite-gains}
For every finite design with the covariance matrices above positive definite,
define
\begin{equation}\label{eq:gain-definitions}
 G_{\mathrm N,n}
 :=
 \E[(\YL-\YN)^2],
 \qquad
 G_{\mathrm C,n}
 :=
 \E[(\YR-\YL)^2].
\end{equation}
Then
\begin{equation}\label{eq:finite-hierarchy}
 \MSPEN
 \le
 \MSPEL
 \le
 \MSPER,
\end{equation}
with the exact identities
\begin{align}
 \MSPEL-\MSPEN
 &=
 G_{\mathrm N,n},
 \label{eq:projection-risk}
 \\
 \MSPER-\MSPEL
 &=
 G_{\mathrm C,n}
 \notag\\
 &=
 (\boldsymbol w_{\rho,n}-\boldsymbol w_{Y,n})^{\mathsf T}
 C_{Y,n}
 (\boldsymbol w_{\rho,n}-\boldsymbol w_{Y,n}).
 \label{eq:covariance-gain}
\end{align}
Consequently,
\begin{equation}\label{eq:two-gain-decomposition}
 \MSPER
 =
 \MSPEN 
 +
 G_{\mathrm N,n}
 +
 G_{\mathrm C,n}.
\end{equation}
The second inequality in~\eqref{eq:finite-hierarchy} is strict exactly when
$\boldsymbol w_{\rho,n}\ne\boldsymbol w_{Y,n}$.
\end{proposition}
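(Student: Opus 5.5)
The plan is to treat all three predictors as elements of the Hilbert space $L^2(\PP)$ and to read every claim off orthogonal projections; the Pythagorean theorem then gives each identity. Set $X=Y(s_\star)-m_Y$ and $\boldsymbol X_n=\boldsymbol Y_n-m_Y\boldsymbol1_n$. Let $\mathcal H_n$ be the finite-dimensional subspace of constants plus linear combinations of $\boldsymbol X_n$, and let $L^2(\mathcal G_n)$ be the closed subspace of $\mathcal G_n$-measurable square-integrable variables. By \eqref{eq:L2-F} everything lies in $L^2$, since each $Y(s_i)$ has the law of $F(\bG)$, and clearly $\mathcal H_n\subset L^2(\mathcal G_n)$.

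\textbf{Step 1: first identity.} $\YN$ is the orthogonal projection of $Y(s_\star)$ onto $L^2(\mathcal G_n)$. Because $\YL\in L^2(\mathcal G_n)$, the residual $Y(s_\star)-\YN$ is orthogonal to $\YN-\YL$. Pythagoras then gives $\MSPEL=\MSPEN+\E[(\YN-\YL)^2]$, which is \eqref{eq:projection-risk} and the first inequality in \eqref{eq:finite-hierarchy}.

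\textbf{Step 2: $\YL$ as a projection.} Next I verify that $\YL$ is the orthogonal projection of $Y(s_\star)$ onto $\mathcal H_n$. The residual $X-\boldsymbol w_{Y,n}^{\mathsf T}\boldsymbol X_n$ has mean zero, so it is orthogonal to constants. Its covariance with $\boldsymbol X_n$ is
\[
\boldsymbol c_{Y,n}-C_{Y,n}\boldsymbol w_{Y,n}=0,
\]
using \eqref{eq:observed-covariance} to identify $\Cov(X,\boldsymbol X_n)=\boldsymbol c_{Y,n}$ and $\Var(\boldsymbol X_n)=C_{Y,n}$. This is the normal equation, and positive definiteness makes $\boldsymbol w_{Y,n}$ well defined.

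\textbf{Step 3: second identity and consequences.} Since $\YR\in\mathcal H_n$, Pythagoras gives $\MSPER=\MSPEL+\E[(\YR-\YL)^2]$. Now $\YR-\YL=(\boldsymbol w_{\rho,n}-\boldsymbol w_{Y,n})^{\mathsf T}\boldsymbol X_n$ has mean zero, so its second moment is the quadratic form in \eqref{eq:covariance-gain}. Adding the two identities yields \eqref{eq:two-gain-decomposition}. Strictness follows because $C_{Y,n}$ is positive definite, so the quadratic form vanishes iff $\boldsymbol w_{\rho,n}=\boldsymbol w_{Y,n}$.

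There is no real obstacle here. The only points needing care are that $m_Y$ is known, so both linear predictors are unbiased and constants add no cross terms, and the measurability inclusion $\mathcal H_n\subset L^2(\mathcal G_n)$.
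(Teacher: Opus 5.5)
Your proof is correct and follows the paper's argument: conditional-expectation orthogonality gives the first identity, the kriging normal equations make $\YR-\YL$ orthogonal to the residual $Y(s_\star)-\YL$, and positive definiteness of $C_{Y,n}$ gives strictness. One cosmetic point: the paper already uses $\mathcal H_n$ for the latent $\sigma$-field, so choose a different symbol for your linear span to avoid confusion.
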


The proof is given in Appendix~\ref{app:finite-gains}.
Proposition~\ref{prop:finite-gains} separates two distinct finite-sample
gains. $G_{\mathrm N,n}$ measures the improvement obtained by allowing
nonlinear prediction beyond the linear class, whereas $G_{\mathrm C,n}$
measures the additional improvement obtained, within the linear class, by
using the true covariance of the observed field instead of the latent
Gaussian correlation. No dense-sampling, spectral-tail, or
transformation-regularity assumption is needed for these identities.
Section~\ref{sec:main} shows that, under the conditions of the main theorem,
both gains are asymptotically negligible relative to the optimal
observed-data MSPE.

 \section{Asymptotic optimality with observed or latent covariance}
\label{sec:main}\label{sec:exponential}

This section turns the finite-sample MSPE ordering of
Proposition~\ref{prop:finite-gains} into an asymptotic equivalence result.
The argument has three main ingredients. First, we introduce a theoretical
predictor that observes the latent Gaussian variables and show that its MSPE
is available exactly. Second, we identify the regularity conditions under
which the observed non-Gaussian field inherits the high-frequency behaviour
of the latent field. Finally, a spectral comparison for a single kriging
residual connects these two facts and yields the main result.

We begin with assumptions on the latent spectral density and on the design
sequence.

\begin{assumption}\label{ass:spectra}
For a fixed $\nu>0$, the common latent spectral density $f$ is continuous,
even and strictly positive on $\R^d$, and
\begin{equation}\label{eq:latent-tail}
 f(\lambda)
 =
 c\norm{\lambda}^{-d-2\nu}\{1+o(1)\},
 \qquad
 \norm{\lambda}\longrightarrow\infty,
 \qquad
 c>0.
\end{equation}
\end{assumption}

The limit is taken over all frequency directions; isotropy at finite
frequencies is not required. Writing $p=d+2\nu>d$, continuity, positivity,
and~\eqref{eq:latent-tail} imply the global two-sided bound
\begin{equation}\label{eq:latent-envelope}
 b(1+\norm{\lambda})^{-p}
 \le
 f(\lambda)
 \le
 B(1+\norm{\lambda})^{-p},
 \qquad \lambda\in\R^d,
\end{equation}
for some constants $0<b\le B<\infty$. Corollary~\ref{cor:latent-Matern}
verifies these conditions explicitly for the Mat\'ern family.

\begin{assumption}\label{ass:design}
The observation sites form a sequence of distinct deterministic points
$\{s_i:i\ge1\}\subset D\setminus\{s_\star\}$ such that
$
 \overline{\{s_i:i\ge1\}}=D.
$
At stage $n$, the field is observed at
$S_n=\{s_1,\ldots,s_n\}$.
\end{assumption}

Only $Y$ is observed. The same sites are used below to define a theoretical
latent-information benchmark. The nested formulation is convenient for the
main statement; Remark~\ref{rem:nonnested-designs} gives the corresponding
extension to nonnested designs.

Suppose, only for comparison, that the latent Gaussian values at the
observation sites were also available. Define
\[
 \mathcal H_n
 =
 \sigma\{Z_j(s_i):1\le j\le q,\ 1\le i\le n\}.
\]
Since the latent fields are independent copies with the same correlation,
their conditional variances at the prediction site coincide. Using
$R_n$ and $\boldsymbol r_n$ from Section~\ref{sec:setup}, write
\begin{equation}\label{eq:latent-variance}
 v_n
 =
 1-\boldsymbol r_n^{\mathsf T}R_n^{-1}\boldsymbol r_n.
\end{equation}
Thus, for every $j$,
$
 v_n
 =
 \Var\{Z_j(s_\star)\mid Z_j(S_n)\}.
$
Under Assumptions~\ref{ass:spectra} and~\ref{ass:design},
$0<v_n\le1$ and $v_n\to0$.

The latent-information predictor is
\begin{equation}\label{eq:oracle}
 \YO
 =
 \E\{Y(s_\star)\mid\mathcal H_n\},
 \qquad
 \MSPEO
 =
 \E\!\left[\{Y(s_\star)-\YO\}^2\right].
\end{equation}
It is a theoretical comparator rather than a proposed prediction procedure.
Because
$\mathcal G_n\subseteq\mathcal H_n$, the finite-sample hierarchy becomes
\begin{equation}\label{eq:sandwich}
 \MSPEO
 \le
 \MSPEN
 \le
 \MSPEL
 \le
 \MSPER.
\end{equation}

For subsequent use, define the covariance decrement
\begin{equation}\label{eq:Delta-definition}
 \Delta_F(v)=K_F(1)-K_F(1-v),\qquad 0\le v\le2.
\end{equation}
The oracle MSPE uses $0\le v\le1$; the larger domain also accommodates
covariance increments when the latent correlation is negative. The main
advantage of the enlarged information set is that its MSPE has a simple
exact form.

\begin{proposition}\label{prop:exact}
For independent standardized Gaussian replicates with common correlation
$\rho$ and $F$ satisfying~\eqref{eq:L2-F},
\begin{equation}\label{eq:oracle-risk}
 \MSPEO
 =
 K_F(1)-K_F(1-v_n).
\end{equation}
This identity holds for every finite design, without invertibility or
differentiability of $F$. If $K_F'(1^-)<\infty$ and $v_n\to0$, then
\begin{equation}\label{eq:oracle-asymptotic-mspe}
 \frac{\MSPEO}{v_n}
 \longrightarrow
 K_F'(1^-).
\end{equation}
If $q=1$ and $F$ has a measurable inverse on its image, then
$\mathcal G_n=\mathcal H_n$ and therefore
$\MSPEN=\MSPEO$ for every $n$.
\end{proposition}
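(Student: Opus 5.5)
The plan is to reduce everything to a single Gaussian conditioning step. Write $\bZ(s)=(Z_1(s),\ldots,Z_q(s))^{\mathsf T}$. Because the $Z_j$ are independent copies with the same correlation, the conditional law of $\bZ(s_\star)$ given $\mathcal H_n$ factorizes over $j$:
\[
 Z_j(s_\star)=\boldsymbol r_n^{\mathsf T}R_n^{-1}Z_j(S_n)+\sqrt{v_n}\,\varepsilon_j ,
\]
with $\boldsymbol\varepsilon=(\varepsilon_1,\ldots,\varepsilon_q)^{\mathsf T}\sim N_q(0,I_q)$ independent of $\mathcal H_n$. Put $\boldsymbol\mu_n=(\boldsymbol r_n^{\mathsf T}R_n^{-1}Z_j(S_n))_{j=1}^q$. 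Then $\boldsymbol\mu_n\sim N_q(0,(1-v_n)I_q)$ and
\[
 \YO=\E\{F(\boldsymbol\mu_n+\sqrt{v_n}\,\boldsymbol\varepsilon)\mid\boldsymbol\mu_n\}=:\phi(\boldsymbol\mu_n).
\]
If $v_n=1$ the statement is trivial, so assume $v_n<1$ and set $u=\sqrt{1-v_n}$ and $\bG=\boldsymbol\mu_n/u\sim N_q(0,I_q)$. Then $\bZ(s_\star)=u\bG+\sqrt{1-u^2}\,\boldsymbol\varepsilon$ has the law of $\bG_u$ jointly with $\bG$. Here $F\in L^2(\gq)$ guarantees that all conditional expectations exist and no regularity of $F$ is used.

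Since $\YO$ is the $L^2$ projection onto $L^2(\mathcal H_n)$, we have $\MSPEO=\Var\{Y(s_\star)\}-\Var(\YO)=K_F(1)-\Var(\YO)$. The key step is to identify $\Var(\YO)$ with $K_F(1-v_n)=K_F(u^2)$, and there are two routes.

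The first route is Mehler/Hermite. Expand $F-m_Y=\sum_{\bk\ne0}a_{\bk}H_{\bk}$ in normalized multivariate Hermite polynomials. Then $K_F(t)=\sum_{\bk\ne0}a_{\bk}^2t^{|\bk|}$, and the Ornstein--Uhlenbeck semigroup identity gives $\E\{H_{\bk}(\bG_u)\mid\bG\}=u^{|\bk|}H_{\bk}(\bG)$. Hence $\Var(\YO)=\sum a_{\bk}^2u^{2|\bk|}=K_F(u^2)$.

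The second route avoids expansions. Let $\bG''$ be a further independent copy of $\boldsymbol\varepsilon$. Then $\Var(\YO)=\Cov\{F(u\bG+\sqrt{1-u^2}\boldsymbol\varepsilon),F(u\bG+\sqrt{1-u^2}\bG'')\}$, and the two arguments are standard Gaussian vectors with cross-covariance $u^2I_q$, so this equals $K_F(u^2)$ by the definition \eqref{eq:KF-definition}.

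I would present the second route as the proof and mention the first route for the asymptotic part. This gives \eqref{eq:oracle-risk} for every design.

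For \eqref{eq:oracle-asymptotic-mspe}, write $\MSPEO/v_n=\{K_F(1)-K_F(1-v_n)\}/v_n$. This is a left difference quotient of $K_F$ at $1$, so it converges to $K_F'(1^-)$ as $v_n\to0$ by the definition of the left derivative. One point needs care: the existence of the one-sided derivative, possibly $+\infty$, and its agreement with $\sum_{\bk}|\bk|a_{\bk}^2$. This follows from the power series with nonnegative coefficients $K_F(t)=\sum a_{\bk}^2t^{|\bk|}$ by Abel's theorem and monotone convergence. I expect this to be the only mildly delicate point, and finiteness is assumed in the statement.

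For the last claim, suppose $q=1$ and $F$ has a measurable inverse $F^{-1}$ on its image. Then $Z_1(s_i)=F^{-1}(Y(s_i))$ almost surely; if $F$ was modified on a null set, this holds on a full-measure event. Hence $\mathcal H_n\subseteq\mathcal G_n$ up to null sets, and the reverse inclusion always holds. The two conditional expectations therefore coincide almost surely, giving $\MSPEN=\MSPEO$ for every $n$.
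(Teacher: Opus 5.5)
Your proposal is correct and follows essentially the paper's route: the paper also builds two conditionally independent copies $X=M_n+\sqrt{v_n}U$, $X'=M_n+\sqrt{v_n}U'$ with unconditional cross-covariance $(1-v_n)I_q$. The only cosmetic difference is that the paper writes $\MSPEO=\tfrac12\E[\{F(X)-F(X')\}^2]$, whereas you write $\MSPEO=K_F(1)-\Var(\YO)$ with $\Var(\YO)=\Cov\{F(X),F(X')\}$, which is the same computation; the asymptotic limit (difference quotient at $1$) and the invertible scalar case ($\mathcal H_n\subseteq\mathcal G_n$) are handled identically.
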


The identity follows from Gaussian conditioning. Conditionally on
$\mathcal H_n$, two independent copies of the latent target have common
conditional mean and independent residuals with covariance $v_nI_q$.
Unconditionally, their cross-covariance is therefore $(1-v_n)I_q$, so the
covariance map gives~\eqref{eq:oracle-risk}. The full argument is given in
Appendix~\ref{app:oracle-proof}. For scalar transformations, the corresponding
Hermite-series expression appears in
\citet[Section~3.1]{HiranoYajima2013}.

Proposition~\ref{prop:exact} already identifies the quantity that will govern
the leading MSPE. It remains to determine when the observed field has the
corresponding high-frequency covariance structure.

Let $H_k$ denote the probabilists' Hermite polynomial,
\[
 H_k(z)=(-1)^k e^{z^2/2}\frac{\mathrm d^k}{\mathrm dz^k}e^{-z^2/2},
 \qquad h_k(z)=\frac{H_k(z)}{\sqrt{k!}}.
\]
For a multi-index $\bk=(k_1,\ldots,k_q)\in\Nzero^q$, put
$|\bk|=\sum_{j=1}^qk_j$ and
$\Psi_{\bk}(\bz)=\prod_{j=1}^q h_{k_j}(z_j)$. These functions form an
orthonormal basis of $L^2(\gq)$, so
\begin{equation}\label{eq:Hermite}
 F=\sum_{\bk\in\Nzero^q}b_{\bk}\Psi_{\bk}
 \quad\text{in }L^2(\gq),\qquad
 b_{\bk}=\E\{F(\bG)\Psi_{\bk}(\bG)\}.
\end{equation}
In particular, $b_{\boldsymbol0}=m_Y$. Mehler's formula gives
\begin{equation}\label{eq:KF-series}
 K_F(u)
 =
 \sum_{\ell\ge1}a_\ell u^\ell,
 \qquad
 a_\ell
 =
 \sum_{|\bk|=\ell}b_{\bk}^2
 \ge0,
 \qquad
 \sum_{\ell\ge1}a_\ell
 =
 K_F(1),
\end{equation}
The derivation is recalled in Appendix~\ref{app:Hermite}. Since the
coefficients are nonnegative, the left derivative at $u=1$ exists in the
extended sense. 
The left derivative of \(K_F\) at \(1\) is
\[
K_F'(1^-)
=
\lim_{u\uparrow1}
\frac{K_F(u)-K_F(1)}{u-1}
=
\lim_{v\downarrow0}
\frac{K_F(1)-K_F(1-v)}{v}.
\]
We define
\begin{equation}\label{eq:energies}
 \mathcal D(F)
 :=
 K_F'(1^-)
 =
 \sum_{\ell\ge1}\ell a_\ell.
\end{equation}

When this quantity is finite, $F$ has a Gaussian-square-integrable weak
gradient $\nabla F=(\partial_1F,\ldots,\partial_qF)^{\mathsf T}$, where
weak derivatives are distributional derivatives on $\R^q$. In that case,
\begin{equation}\label{eq:gradient-energies}
 \mathcal D(F)
 =
 \E\norm{\nabla F(\bG)}^2
 =
 \sum_{j=1}^q\mathcal D_j(F),
 \qquad
 \mathcal D_j(F)
 =
 \E[\{\partial_jF(\bG)\}^2].
\end{equation}
Thus $\mathcal D(F)$ measures the squared gradient of the transformation
averaged under standard Gaussian measure. In parts of the Gaussian-analysis
literature, this quantity is referred to as the Gaussian Dirichlet energy.
It can be calculated either from the weak gradient or from the endpoint slope
of the covariance map; individual Hermite coefficients need not be evaluated.

For $s\ge0$, we use the Gaussian Sobolev scale
\begin{equation}\label{eq:gaussian-sobolev}
 F\in W^{s,2}(\gq)
 \quad\Longleftrightarrow\quad
 |\E F(\bG)|^2
 +
 \sum_{\ell\ge1}(1+\ell)^s a_\ell
 <\infty.
\end{equation}
Here $L^2(\gq)$ specifies the integrability weight. For integer $s$, the scale
is characterized by Gaussian-square-integrable weak derivatives; the
criteria needed below are given in
Appendix~\ref{app:Hermite-Sobolev}. At noninteger $s$ it is the Gaussian
Bessel-potential scale associated with the Ornstein--Uhlenbeck operator; see
\citet{Janson1997,LunardiPallara2020}.

The covariance-map representation also gives a simple preservation principle
for regularity. By mean-square differentiability of order $m$ we mean the
existence, as $L^2$ limits, of all spatial partial derivatives of total
order at most $m$.

\begin{proposition}
\label{prop:ms-regularity}
If $F\in L^2(\gq)$ and the latent Gaussian field is mean-square continuous,
then $Y$ is mean-square continuous. More generally, for any integer $m\ge1$,
if the latent field is mean-square differentiable of order $m$ and
$F\in W^{m,2}(\gq)$, then $Y$ is mean-square differentiable of order $m$.
\end{proposition}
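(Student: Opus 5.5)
For mean-square continuity I would work entirely through the covariance map. Since $C_Y(h)=K_F\{\rho(h)\}$, we have
\[
 \E[\{Y(s+h)-Y(s)\}^2]=2\{K_F(1)-K_F(\rho(h))\}.
\]
Mean-square continuity of the latent field means $\rho(h)\to1$ as $h\to0$. The series \eqref{eq:KF-series} has nonnegative coefficients with $\sum a_\ell=K_F(1)<\infty$. By dominated convergence (Abel's theorem), $K_F$ is continuous on $[-1,1]$, in particular at $u=1$ from the left. Hence the increment tends to zero.

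For order $m$, I would use the Hermite expansion to reduce everything to Wiener chaoses. Write $Y-m_Y=\sum_{\ell\ge1}Y_\ell$, where
\[
 Y_\ell(s)=\sum_{|\bk|=\ell}b_{\bk}\Psi_{\bk}(\bZ(s))
\]
is the projection onto the $\ell$-th chaos. The $Y_\ell$ are mutually orthogonal as fields, with $\Cov\{Y_\ell(s),Y_\ell(t)\}=a_\ell\rho(s-t)^\ell$. Each $Y_\ell(s)$ is a polynomial of degree $\ell$ in the Gaussian vector $\bZ(s)$. Since the latent field is $m$ times mean-square differentiable and polynomials in Gaussians have all moments (hypercontractivity makes $L^2$ and $L^p$ norms equivalent on finite chaoses), $Y_\ell$ is mean-square differentiable of order $m$. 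Its derivatives are given by the chain and Leibniz rules applied to products of Hermite polynomials in the components $Z_j$.

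The key step is a uniform bound on the second moments of the derivatives of $Y_\ell$. Equivalently, $\rho^\ell$ is $2m$ times differentiable at $0$, and I would bound
\[
 \bigl|\partial^{\alpha}_h\partial^{\alpha}_{h}\rho(h)^\ell\big|_{h=0}\bigr|\le C\,\ell^{m},\qquad |\alpha|\le m,
\]
with $C$ depending on derivatives of $\rho$ up to order $2m$ at $0$ and not on $\ell$. This I would get from Faà di Bruno. A derivative of total order $2m$ of $u^\ell\circ\rho$ is a sum over partitions with factors $\ell(\ell-1)\cdots(\ell-r+1)\rho^{\ell-r}$ times products of derivatives of $\rho$. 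At $h=0$, $\rho=1$ and the odd first-order derivatives vanish, since $\rho$ is even and maximal at $0$. So each block of a contributing partition has size at least $2$, giving $r\le m$ and hence the $\ell^m$ growth. I expect this combinatorial parity argument to be the main obstacle; getting the exponent $m$ rather than $2m$ is exactly what matches $F\in W^{m,2}(\gq)$.

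With this bound, for the mixed derivative covariance at $h=0$ we get
\[
 \sum_\ell \E[\{\partial^\alpha Y_\ell(s)\}^2]\le C\sum_\ell \ell^m a_\ell<\infty
\]
by \eqref{eq:gaussian-sobolev}. By orthogonality, the partial sums of $\sum_\ell\partial^\alpha Y_\ell$ are Cauchy in $L^2$. A standard closedness argument then identifies the limit as the $L^2$ derivative $\partial^\alpha Y$. Concretely, I would apply the criterion that a second-order field is mean-square differentiable of order $m$ iff its covariance $C_Y=\sum a_\ell\rho^\ell$ has continuous mixed derivatives of order $2m$ on the diagonal. The same uniform bound, applied near $0$ using continuity of the derivatives of $\rho$ and $|\rho|\le1$, gives uniform convergence of the differentiated series $\sum_\ell a_\ell\,\partial^{2\alpha}\rho^\ell$ in a neighbourhood of $0$, which yields the required derivatives of $C_Y$.
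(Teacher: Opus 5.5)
Your continuity argument is the paper's. For order $m$, your key estimate is correct, and it derives the bound behind the paper's proof by a different route. The paper works on the spectral side. By Tonelli, $\int\|\lambda\|^{2m}f_Y(\lambda)\dd\lambda=\sum_\ell a_\ell\E\|S_\ell\|^{2m}$, where $S_\ell$ is a sum of $\ell$ independent symmetric vectors with density $f$. A random-sign, Khintchine-type inequality then gives $\E\|S_\ell\|^{2m}\le C\ell^m\E\|U_1\|^{2m}$, and the spectral-moment criterion for stationary fields finishes the proof.

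Your Fa\`a di Bruno count is the characteristic-function form of the same fact. Since $\rho^\ell$ is the Fourier transform of $f^{*\ell}$, we have $(-1)^{|\alpha|}\partial^{2\alpha}\rho^\ell(0)=\int\lambda^{2\alpha}f^{*\ell}(\lambda)\dd\lambda$. Only $\nabla\rho(0)=0$ is needed to exclude singleton blocks and obtain at most $m$ blocks. Your version is elementary and purely covariance-based. The paper's version has one advantage: its terms are nonnegative spectral moments, so nothing away from $h=0$ ever needs to be controlled.

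That advantage is exactly where your write-up breaks. You claim the $C\ell^m$ bound holds uniformly near $0$ ``using continuity of the derivatives of $\rho$ and $|\rho|\le1$''. This does not follow, because for $h\ne0$ the singleton blocks no longer vanish. For $m=1$, $(\rho^\ell)''=\ell(\ell-1)\rho^{\ell-2}(\rho')^2+\ell\rho^{\ell-1}\rho''$. On a fixed neighbourhood, continuity and $|\rho|\le1$ bound the first term only by $C\ell^2$; in general you get $C\ell^{2m}$. That is not summable against $a_\ell$ when only $F\in W^{m,2}(\gq)$ is assumed.

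The uniform bound is in fact true, but only because $\rho(h)\le1-c\|h\|^2$ near $0$. This holds since $-\nabla^2\rho(0)=\int\lambda\lambda^{\mathsf T}f(\lambda)\dd\lambda$ is positive definite. A partition with $k$ odd blocks has at most $m+k/2$ blocks, and each odd block contributes $O(\|h\|)$. The corresponding term is therefore at most $C\ell^m(\ell\|h\|^2)^{k/2}\rho(h)^{\ell-m-k/2}=O(\ell^m)$. You would have to supply this argument.

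It is simpler to use only $h=0$, in one of two ways:
\begin{enumerate}
\item Apply Tonelli to the spectral moments, as the paper does. This gives $\int\lambda^{2\alpha}f_Y(\lambda)\dd\lambda=\sum_\ell a_\ell(-1)^{|\alpha|}\partial^{2\alpha}\rho^\ell(0)<\infty$ directly from your bound.
\item Make your closedness argument precise via stationarity. The quantity $\E[\{\sum_{L<\ell\le L'}\partial^\alpha Y_\ell(s)\}^2]=\sum_{L<\ell\le L'}a_\ell(-1)^{|\alpha|}\partial^{2\alpha}\rho^\ell(0)$ does not depend on $s$, so the convergence is uniform in $s$. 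Each partial sum is mean-square $C^m$, since its covariance is $C^{2m}$. The identity $W(s+te_j)-W(s)=\int_0^t\partial_jW(s+ue_j)\dd u$ then passes to the limit, and induction on the order gives the derivatives of $Y$.
\end{enumerate}
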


For measurable second-order fields, mean-square differentiability also has
sample-path consequences: under the corresponding stationary conditions,
the paths belong almost surely to the appropriate local Sobolev space.
Stronger fractional conditions can yield classical pathwise smoothness through
Sobolev embeddings; see \citet{Scheuerer2010}. The proof and these
sample-path consequences are given in Appendix~\ref{app:regularity-proof}.

The regularity condition required for the sharper spectral result is the
following.

\begin{assumption}\label{ass:Hermite}
Let $\nu>0$ be the latent smoothness parameter in
Assumption~\ref{ass:spectra}. With the coefficients in~\eqref{eq:KF-series},
\begin{equation}\label{eq:Hermite-condition}
 \sum_{\ell\ge1}a_\ell w_\nu(\ell)<\infty,
 \qquad
 w_\nu(\ell)=
 \begin{cases}
  \ell, & 0<\nu<1,\\
  \ell(1+\log\ell), & \nu=1,\\
  \ell^\nu, & \nu>1.
 \end{cases}
\end{equation}
\end{assumption}

Here $\log$ denotes the natural logarithm. The condition has a direct
Gaussian-Sobolev interpretation. Under
\eqref{eq:L2-F}, it is equivalent to
$F\in W^{1,2}(\gq)$ for $0<\nu<1$ and to
$F\in W^{\nu,2}(\gq)$ for $\nu>1$. At $\nu=1$ the logarithmic factor gives
a slightly stronger condition than $W^{1,2}$; any
$W^{1+\epsilon,2}(\gq)$ condition with $\epsilon>0$ is sufficient.
More generally, square integrability of the weak derivatives through an
integer order $M>\nu$ provides a convenient sufficient check.

Two features are worth emphasizing. First, the required regularity order
depends on the latent smoothness $\nu$, but not on the spatial dimension $d$
or on the number $q$ of latent fields. Second, the condition is sufficient,
not claimed to be necessary. In particular, the additional logarithmic factor
at $\nu=1$ is used to obtain a uniform convolution bound.

Assumption~\ref{ass:Hermite} and nonconstancy of $F$ imply
$
 0<\mathcal D(F)=K_F'(1^-)<\infty.
$
We write
\begin{equation}\label{eq:observed-coefficient-definition}
 c_Y
 =
 cK_F'(1^-)
 =
 c\mathcal D(F).
\end{equation}
Proposition~\ref{prop:exact} shows that finiteness of the endpoint slope alone
determines the leading oracle MSPE. The additional regularity in
Assumption~\ref{ass:Hermite}, when $\nu\ge1$, is needed to transfer the
latent spectral tail to the observed field.

For the sharper transfer result we use one further property of the latent
spectral density. Its role is to obtain a convolution bound that is uniform
in Hermite degree:
\begin{equation}\label{eq:radial-monotone}
 f(\lambda)
 =
 \widetilde f(\|\lambda\|),
 \qquad
 \widetilde f
 \text{ is nonincreasing on }[0,\infty).
\end{equation}
This is a condition on the spectral density, not on monotonicity of the
covariance function or directly on sample-path behaviour.
Theorem~\ref{thm:general} below removes it at the cost of stronger
transformation regularity.

The exact oracle MSPE identifies $\mathcal D(F)$ as the local cost of the
transformation. The next proposition shows that exactly the same quantity
appears in the high-frequency behaviour of the observed covariance. This is
the key link between nonlinear prediction and kriging.

\begin{proposition}\label{prop:tail}
Under Assumptions~\ref{ass:spectra} and~\ref{ass:Hermite}, together with
\eqref{eq:radial-monotone}, the observed spectral density $f_Y$ is continuous
and strictly positive and satisfies
\begin{equation}\label{eq:observed-tail}
 f_Y(\lambda)
 =
 cK_F'(1^-)
 \norm{\lambda}^{-d-2\nu}
 \{1+o(1)\},
 \qquad
 \norm{\lambda}\to\infty.
\end{equation}
Moreover, $f_Y$ is bounded above and below by positive multiples of
$(1+\norm{\lambda})^{-d-2\nu}$. In particular,
\begin{equation}\label{eq:observed-latent-tail-ratio}
 \frac{f_Y(\lambda)}{f(\lambda)}
 \longrightarrow
 K_F'(1^-)
 =
 \mathcal D(F).
\end{equation}
\end{proposition}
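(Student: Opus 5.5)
We start from the Mehler expansion \eqref{eq:KF-series}, which gives $C_Y(h)=\sum_{\ell\ge1}a_\ell\rho(h)^\ell$. Since $\rho^\ell$ is the Fourier transform of the $\ell$-fold convolution $f^{*\ell}$ (each $f^{*\ell}$ is a probability density because $\int f=1$), we obtain
\[
 f_Y=\sum_{\ell\ge1}a_\ell f^{*\ell}.
\]
The series converges in $L^1$ because $\sum a_\ell<\infty$. Continuity and strict positivity of $f_Y$ follow from the same properties of $f^{*1}=f$ together with $a_\ell\ge0$. To check this, take $\ell_0$ minimal with $a_{\ell_0}>0$. Then $f^{*\ell_0}$ is continuous and strictly positive, and uniform convergence on compacts follows from the uniform bound derived below. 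The whole proof therefore reduces to the asymptotics of $f^{*\ell}$, uniformly in $\ell$.

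The key step is a degree-uniform tail estimate of the form
\[
 f^{*\ell}(\lambda)\le C\,w_\nu(\ell)\,(1+\norm{\lambda})^{-p}, \qquad p=d+2\nu .
\]
Combined with the pointwise limit
\[
 f^{*\ell}(\lambda)\norm{\lambda}^{p}\to c\ell ,
\]
this gives \eqref{eq:observed-tail} by dominated convergence over $\ell$, with summable dominator $a_\ell w_\nu(\ell)$ by Assumption~\ref{ass:Hermite}. The limit is then $c\sum_\ell \ell a_\ell=cK_F'(1^-)$ by \eqref{eq:energies}.

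The pointwise limit for fixed $\ell$ is the standard subexponential-type property of regularly varying densities. Write $f^{*\ell}=f*f^{*(\ell-1)}$ and induct. Split the integral into the region where $\norm{\mu}\le\norm{\lambda}/2$ and its complement. On the first region $f(\lambda-\mu)\sim c\norm{\lambda}^{-p}$ uniformly, while $\int f^{*(\ell-1)}=1$. On the complement, the induction hypothesis yields $c(\ell-1)\norm{\lambda}^{-p}$. The middle cross term is negligible because both factors decay and $p>d$.

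For the uniform bound, I would use radial monotonicity \eqref{eq:radial-monotone}. View $f^{*\ell}(\lambda)$ as the density at $\lambda$ of $X_1+\dots+X_\ell$, with $X_i$ i.i.d.\ with density $f$. If the sum lies near $\lambda$, some $X_i$ satisfies $\norm{X_i}\ge\norm{\lambda}/\ell$. A union bound over the index $i$ achieving the maximum norm gives
\[
 f^{*\ell}(\lambda)\le \ell\,\sup_{\norm{\mu}\ge\norm{\lambda}/\ell} f(\mu)
\]
up to conditioning details. Radial monotonicity makes this supremum equal to $\widetilde f(\norm{\lambda}/\ell)\lesssim \ell^{p}\norm{\lambda}^{-p}$, which is far too crude. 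A sharper route is to split according to whether the largest summand exceeds $\norm{\lambda}/2$. On that event the contribution is $\le \ell\,\widetilde f(\norm{\lambda}/2)$, which is of order $\ell\norm{\lambda}^{-p}$. Otherwise at least two summands are large, or the sum of many moderate summands must reach $\norm{\lambda}$. This case is controlled through moments: for $\nu<1$ the $2\nu$-th moment $\E\norm{X}^{2\nu'}$, with $\nu'<\nu$, gives a bound linear in $\ell$ after using $\E\norm{S_\ell}^{r}\le \ell\E\norm{X}^r$ for $r\le1$. For $\nu>1$ the variance is finite, and Rosenthal/Fuk–Nagaev inequalities give the factor $\ell^{\nu}$, since $\norm{S_\ell}$ is typically $\sqrt\ell$ and the smoothing density $f^{*(\ell-1)}$ has sup bounded by $\ell^{-d/2}$ via local CLT-type bounds. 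At $\nu=1$ the truncated second moment grows logarithmically, which produces $\ell(1+\log\ell)$. To turn these probability tail bounds into density bounds, I would use monotonicity once more: the density of $S_\ell$ at $\lambda$ equals $\int f(\lambda-\mu)f^{*(\ell-1)}(\mu)\dd\mu$, and $f$ is radially nonincreasing.

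\emph{Main obstacle.} This uniform-in-$\ell$ bound, especially the case $\nu\ge1$ with the exact weight $w_\nu(\ell)$, is the step I expect to be hardest.

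Finally, the two-sided envelope follows directly. The lower bound is $f_Y\ge a_{\ell_0}f^{*\ell_0}\ge b'(1+\norm{\lambda})^{-p}$, by the pointwise asymptotic at infinity and positivity on compacts. The upper bound comes from the uniform estimate summed against $a_\ell$. The ratio \eqref{eq:observed-latent-tail-ratio} is then immediate from \eqref{eq:latent-tail} and \eqref{eq:observed-tail}.
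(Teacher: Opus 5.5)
Your architecture matches the paper's: the series $f_Y=\sum_\ell a_\ell f^{*\ell}$, the fixed-$\ell$ tail $f^{*\ell}(\lambda)\sim c\ell\|\lambda\|^{-p}$, a domination $\|\lambda\|^pf^{*\ell}(\lambda)\le Cw_\nu(\ell)$ uniform in $\ell$, and dominated convergence under Assumption~\ref{ass:Hermite}. One repair is needed in the fixed-$\ell$ step. There, $f(\lambda-\mu)\sim c\|\lambda\|^{-p}$ holds uniformly only for $\|\mu\|\le A$ with $A$ fixed, so the shell $A<\|\mu\|\le\|\lambda\|/2$ must go into your negligible middle term, with $A\to\infty$ taken last. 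That is easy.

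The genuine gap is the uniform bound, which is the substance of the proposition. Your big-jump piece $\ell\widetilde f(\|\lambda\|/2)$ is fine, but the complementary event is not handled. The probability tail is already wrong as sketched for $\nu<1$. Markov with an untruncated moment of order $2\nu'<2\nu$ gives only $\PP(\|S_\ell\|>t)\lesssim\ell t^{-2\nu'}$, which loses part of the exponent. Dominated convergence needs $\|\lambda\|^{d+2\nu}f^{*\ell}(\lambda)$ bounded with the exact exponent, so this cannot serve as a dominator. The paper instead works with the squared norms $A_i=\|U_i\|^2$, whose tails are $O(t^{-\nu})$. Markov-type bounds, truncated at level $t$ when $\nu\le1$, give $\PP(\sum_iA_i>t)\le Cw_\nu(\ell)t^{-\nu}$ (Lemma~\ref{lem:positive-sum}). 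Rademacher symmetrization with the conditional sub-Gaussian bound $\PP(\|S_\ell\|>r\mid U_1,\ldots,U_\ell)\le2d\exp\{-r^2/(2dA_\ell)\}$ then yields $\PP(\|S_\ell\|>r)\le Cw_\nu(\ell)r^{-2\nu}$. Your Fuk--Nagaev route for $\nu>1$ might reach a similar tail bound, but it is not assembled.

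The more serious missing idea is the passage from this tail bound to a pointwise density bound. Monotonicity of $f$ alone in $f^{*\ell}(\lambda)=\int f(\lambda-\mu)f^{*(\ell-1)}(\mu)\dd\mu$ does not achieve it. On $\{\|\lambda-\mu\|<\|\lambda\|/2\}$ you only get $\|f\|_\infty\PP(\|S_{\ell-1}\|>\|\lambda\|/2)\lesssim w_\nu(\ell)\|\lambda\|^{-2\nu}$, which is missing the factor $\|\lambda\|^{-d}$. Recovering that factor requires small-ball control of $S_{\ell-1}$ near $\lambda$, which is the very density bound being proved. Iterating such splits leads back to dimension-dependent weights like the $\ell^{p+1}$ of Lemma~\ref{lem:conv-bound}. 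A local-CLT bound $\|f^{*\ell}\|_\infty\lesssim\ell^{-d/2}$, even where available, only covers $\|\lambda\|\lesssim\sqrt\ell$.

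What you need is that radial nonincrease is preserved by convolution. The superlevel sets of $f$ are centered balls, the overlap volume of two balls decreases as their centers separate, and the layer-cake representation transfers this to $f^{*\ell}$. Then for $\|x\|=r$, since $f^{*\ell}\ge f^{*\ell}(x)$ on the shell $\{r/2\le\|y\|\le r\}$, one has $v_d(1-2^{-d})r^df^{*\ell}(x)\le\PP(\|S_\ell\|\ge r/2)\le Cw_\nu(\ell)r^{-2\nu}$, with $v_d$ the volume of the unit ball. This is Lemma~\ref{lem:refined-convolution}, and it is where \eqref{eq:radial-monotone} is genuinely used.
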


The proof uses
$
 f_Y(\lambda)
 =
 \sum_{\ell\ge1}a_\ell f^{*\ell}(\lambda),
$
where $f^{*\ell}$ is the $\ell$-fold convolution of $f$ on $\R^d$,
with $f^{*1}=f$. The representation holds because $\rho^\ell$ has
spectral density $f^{*\ell}$.
For each fixed $\ell$, the polynomial tail coefficient of $f^{*\ell}$ is
$\ell c$. The convolution bound in
Appendix~\ref{app:convolutions} controls these terms uniformly in $\ell$,
and Assumption~\ref{ass:Hermite} permits the high-frequency limit to pass
through the Hermite series.

Propositions~\ref{prop:exact} and~\ref{prop:tail} identify the same
coefficient $\mathcal D(F)$ in two apparently different objects: the MSPE
of the latent-information predictor and the high-frequency ratio of the
observed and latent spectral densities. The main theorem combines them.

\begin{theorem}\label{thm:main}
Suppose~\eqref{eq:L2-F}, Assumptions~\ref{ass:spectra},
\ref{ass:design}, and~\ref{ass:Hermite}, together with
\eqref{eq:radial-monotone}, hold. Then
\begin{equation}\label{eq:main-risk-scale}
 \MSPEO
 \sim
 \MSPEN
 \sim
 \MSPEL
 \sim
 \MSPER
 \sim
 \mathcal D(F)v_n.
\end{equation}
Consequently,
\begin{equation}\label{eq:main-ratio}
 \frac{\MSPEL}{\MSPEN}
 \longrightarrow1,
 \qquad
 \frac{\MSPER}{\MSPEN}
 \longrightarrow1,
\end{equation}
and
\begin{equation}\label{eq:oracle-information-gap}
 \frac{\E[(\YN-\YO)^2]}{\MSPEO}
 \longrightarrow0.
\end{equation}
All four MSPEs are strictly positive for every finite design. The conclusions
hold for every fixed $d\ge1$, finite $q\ge1$, and $\nu>0$, subject to the
corresponding transformation condition in~\eqref{eq:Hermite-condition}.
\end{theorem}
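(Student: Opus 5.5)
The argument is a sandwich. By \eqref{eq:sandwich}, $\MSPEO\le\MSPEN\le\MSPEL\le\MSPER$ for every $n$. Hence it suffices to prove
\[
\MSPEO\sim\mathcal D(F)v_n
\qquad\text{and}\qquad
\limsup_n \MSPER/\{\mathcal D(F)v_n\}\le1 .
\]
The lower end comes from Proposition~\ref{prop:exact}. Assumption~\ref{ass:Hermite} and nonconstancy give $0<K_F'(1^-)=\mathcal D(F)<\infty$, and $v_n\to0$ holds under Assumptions~\ref{ass:spectra} and~\ref{ass:design}. Therefore $\MSPEO/v_n\to\mathcal D(F)$. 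Once both ends are established, all four quantities are squeezed to $\mathcal D(F)v_n$, and \eqref{eq:main-ratio} follows immediately.

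For \eqref{eq:oracle-information-gap}, I would use $\mathcal G_n\subseteq\mathcal H_n$. Since $\YN=\E(\YO\mid\mathcal G_n)$, the Pythagorean identity gives $\E[(\YN-\YO)^2]=\MSPEN-\MSPEO=o(\MSPEO)$. Strict positivity of every MSPE follows from $\MSPEO=\Delta_F(v_n)>0$. This in turn holds because $v_n>0$ by positive definiteness, and $K_F$ is strictly increasing on $[0,1]$: its coefficients satisfy $a_\ell\ge0$ and are not all zero.

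The real work is the upper bound for $\MSPER$. I would write it as a quadratic form in the observed covariance. Let $e_n(\lambda)=e^{\mathrm i\lambda^{\mathsf T}s_\star}-\sum_i w_{\rho,n,i}e^{\mathrm i\lambda^{\mathsf T}s_i}$ be the latent kriging residual symbol. Then
\[
\MSPER=\int|e_n(\lambda)|^2f_Y(\lambda)\dd\lambda,
\qquad
v_n=\int|e_n(\lambda)|^2f(\lambda)\dd\lambda .
\]
Fix $\epsilon>0$. By Proposition~\ref{prop:tail}, there is $T$ with $f_Y\le(1+\epsilon)\mathcal D(F)f$ for $\|\lambda\|>T$. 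Splitting the integral gives
\[
\MSPER\le(1+\epsilon)\mathcal D(F)v_n+\int_{\|\lambda\|\le T}|e_n|^2 f_Y\dd\lambda .
\]
The two-sided envelope bounds $f_Y$ on the ball by a constant times $f$. So the remaining task is to show that the low-frequency part $\int_{\|\lambda\|\le T}|e_n|^2f\dd\lambda$ is $o(v_n)$, i.e.\ the kriging residual has asymptotically negligible energy at bounded frequencies.

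This low-frequency negligibility is the step I expect to be the main obstacle. It is the classical Stein-type fact behind the asymptotic equivalence of kriging under spectral densities with proportional tails. I would prove it in the style of Stein's argument. Take $g=f\ind_{\|\lambda\|\le T}$, and let $\tilde f=f+\delta g$ be a perturbation with the same tail. Optimality of the $\tilde f$-kriging weights, together with the fact that $f$-kriging and $\tilde f$-kriging have asymptotically equal errors, should force $\int|e_n|^2g=o(v_n)$. This is exactly the argument showing that kriging under $f$ and under $\tilde f$ are asymptotically equivalent when $\tilde f/f\to1$ at infinity with $\tilde f-f$ compactly supported. It relies on $v_n\to0$, which follows from density of the design, and on bounded low-frequency functions being approximable in the RKHS norm by the span of the observations.

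If this is cumbersome to derive directly, the cleanest alternative is to cite Stein's theorem in the form ``$\int|e_n|^2 g/\int|e_n|^2 f\to0$ for compactly supported bounded $g$''. One then checks that its hypotheses, positivity and polynomial envelopes, are supplied by Assumption~\ref{ass:spectra} and by \eqref{eq:latent-envelope}. Letting $\epsilon\downarrow0$ gives $\limsup\MSPER/\{\mathcal D(F)v_n\}\le1$, which completes the sandwich.
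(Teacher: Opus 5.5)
Your argument is the paper's proof in outline. The lower endpoint comes from Proposition~\ref{prop:exact}. The upper endpoint comes from writing $\MSPER=\int|e_n|^2f_Y\dd\lambda$ with the latent-correlation residual and using $f_Y/f\to\mathcal D(F)$ from Proposition~\ref{prop:tail}. The squeeze, the Pythagorean identity for \eqref{eq:oracle-information-gap}, and strict positivity go exactly as you describe. The step you single out, $\int_{\norm\lambda\le T}|e_n|^2f\dd\lambda=o(v_n)$, is precisely Lemma~\ref{lem:residual-energy}, i.e.\ $\Pi_{f,n}(B_R)\to0$. The paper's identity $\MSPER/v_n=\int(f_Y/f)\dd\Pi_{f,n}\to\mathcal D(F)$ is your $\epsilon$-splitting written compactly.

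The weak point is your self-contained derivation of that step, which as sketched does not work. Write $\tilde V_n$ for the optimal $\tilde f$-error and $\tilde e_n$ for its residual.

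First, the ``fact'' that $f$-kriging and $\tilde f$-kriging have asymptotically equal errors is Stein's theorem for the pair $(f,\tilde f)$. In the paper, Lemma~\ref{lem:spectral-comparison} is deduced from exactly the low-frequency negligibility you are trying to prove, so the argument is circular.

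Second, for $\tilde f=f+\delta g\ge f$ the inequalities point the wrong way. Optimality of the $\tilde f$-weights gives $\tilde V_n\le v_n+\delta\int|e_n|^2g$, which is a lower bound on the quantity you want to be small. Combining $v_n\le\int|\tilde e_n|^2f$ with $\tilde V_n\sim v_n$ only controls the low-frequency energy of the wrong residual, $\tilde e_n$. A downward perturbation $\tilde f=f-\delta g$ would give $\delta\int|e_n|^2g\le v_n-\tilde V_n$. However, $\tilde V_n\sim v_n$ would then still have to come from an independent source, such as equivalence of Gaussian measures.

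So as written your proof rests on citing Stein. That is acceptable once you check his hypotheses: the envelope \eqref{eq:latent-envelope} and the dense design.

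The paper avoids the citation with a direct argument.
\begin{itemize}
\item The normalized residual $u_{f,n}$ is a unit vector in $L^2(f)$ orthogonal to $\operatorname{span}\{e_{s_1},\ldots,e_{s_n}\}$.
\item Density of the nested design throughout $D$ makes these spans asymptotically dense in the closed span of $\{e_s:s\in D\}$, so $u_{f,n}\rightharpoonup0$.
\item A smooth compactly supported $\chi\equiv1$ on $D$ gives the reproducing identity $u(\lambda)=\int\widehat\chi(\xi-\lambda)u(\xi)\dd\xi$ for $u$ in that span.
\item The polynomial lower bound on $f$ makes these evaluation functionals uniformly bounded for $\lambda$ in a ball.
\end{itemize}
Hence $u_{f,n}\to0$ pointwise and boundedly on $B_R$, and dominated convergence gives $\Pi_{f,n}(B_R)\to0$.
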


The proof is a sandwich argument. Proposition~\ref{prop:exact} gives
$
 \MSPEO
 \sim
 \mathcal D(F)v_n.
$
On the other hand, Proposition~\ref{prop:tail} and the residual comparison in
Lemma~\ref{lem:residual-energy} imply
$
 \MSPER
 \sim
 \mathcal D(F)v_n.$
Since
\[
 \MSPEO
 \le
 \MSPEN
 \le
 \MSPEL
 \le
 \MSPER,
\]
the two intermediate MSPEs must have the same first-order behaviour. No
separate asymptotic comparison of the two sets of kriging weights is needed.
The formal proof is given in Appendix~\ref{app:asymptotic-proofs}.

The theorem does not eliminate the finite-sample advantages identified in
Proposition~\ref{prop:finite-gains}; it shows that they are lower order
relative to the optimal observed-data MSPE. Indeed,
\begin{equation}\label{eq:vanishing-gains}
 \frac{G_{\mathrm N,n}}{\MSPEN}
 \longrightarrow0,
 \qquad
 \frac{G_{\mathrm C,n}}{\MSPEN}
 \longrightarrow0.
\end{equation}
Thus neither the restriction to linear prediction nor the replacement of the
true observed covariance by the latent correlation produces a first-order
MSPE loss under the stated conditions. No monotonicity of these gains or
universal rate in $n$ is asserted.

\begin{remark}\label{rem:observed-scale}
Both $\YL$ and $\YR$ act directly on the observed non-Gaussian data.
Neither is obtained by predicting the latent Gaussian field and then applying
$F$. Constructing $\YL$ requires the observed covariance map $K_F$, whereas
constructing $\YR$ requires only $\rho$ and the known mean $m_Y$.
The transformation still enters the assumptions that justify asymptotic
efficiency, and $\mathcal D(F)$ determines the leading MSPE even though it is
not needed to calculate the latent-correlation kriging weights.
Unknown or misspecified means, observation error, and estimated covariance
parameters are not covered here. Finally, MSPE efficiency alone does not imply
Gaussian uncertainty calibration; Section~\ref{sec:error-calibration}
studies the limiting prediction error.
\end{remark}

The Mat\'ern family gives the canonical specialization of the general result.

\begin{corollary}\label{cor:latent-Matern}
Suppose~\eqref{eq:L2-F} and Assumptions~\ref{ass:design} and
\ref{ass:Hermite} hold. If the common latent correlation is Mat\'ern with
range $\alpha>0$ and smoothness $\nu>0$, then all conclusions of
Theorem~\ref{thm:main} hold.
\end{corollary}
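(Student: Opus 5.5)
The plan is to reduce the corollary to Theorem~\ref{thm:main}. The only hypotheses of the theorem not already assumed are Assumption~\ref{ass:spectra} and the radial monotonicity condition~\eqref{eq:radial-monotone}, so the task is to check these for the Mat\'ern spectral density. With the convention~\eqref{eq:spectral-convention}, the Mat\'ern correlation with range $\alpha$ and smoothness $\nu$ has spectral density
\[
 f(\lambda)=\frac{\Gamma(\nu+d/2)}{\pi^{d/2}\Gamma(\nu)}\,\alpha^{-2\nu}\bigl(\alpha^{-2}+\norm{\lambda}^2\bigr)^{-\nu-d/2}.
\]
The constant is fixed by the normalization $\int f=1$, equivalently $\rho(0)=1$. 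The scaling in $\alpha$ will be double-checked, but only positivity and finiteness of the constant matter for the argument.

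First, I verify Assumption~\ref{ass:spectra}. The function $f$ is continuous, even, and strictly positive on $\R^d$, since $\alpha^{-2}+\norm{\lambda}^2>0$. Factoring out $\norm{\lambda}^2$ gives
\[
 \bigl(\alpha^{-2}+\norm{\lambda}^2\bigr)^{-\nu-d/2}
 =\norm{\lambda}^{-d-2\nu}\bigl(1+\alpha^{-2}\norm{\lambda}^{-2}\bigr)^{-\nu-d/2}
 =\norm{\lambda}^{-d-2\nu}\{1+o(1)\}.
\]
This yields~\eqref{eq:latent-tail} with
\[
 c=\frac{\Gamma(\nu+d/2)}{\pi^{d/2}\Gamma(\nu)}\,\alpha^{-2\nu}>0,
\]
uniformly over directions. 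The smoothness parameter in Assumption~\ref{ass:spectra} is therefore the Mat\'ern $\nu$. This matters because Assumption~\ref{ass:Hermite}, taken as a hypothesis of the corollary, is stated with the same $\nu$, so the transformation condition matches the one required by Theorem~\ref{thm:main}.

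Next, for~\eqref{eq:radial-monotone}, $f(\lambda)=\widetilde f(\norm{\lambda})$ with
\[
 \widetilde f(r)=c\,\bigl(\alpha^{-2}+r^2\bigr)^{-\nu-d/2}.
\]
This is a negative power of an increasing function of $r\ge0$, hence nonincreasing on $[0,\infty)$. Assumption~\ref{ass:design} and~\eqref{eq:L2-F} are assumed directly, so every hypothesis of Theorem~\ref{thm:main} holds. Its conclusions~\eqref{eq:main-risk-scale}--\eqref{eq:oracle-information-gap}, together with strict positivity of the four MSPEs, then follow verbatim.

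There is no real obstacle. The only point needing care is confirming the Fourier normalization of the Mat\'ern density, in particular the power of $\alpha$ and the $\pi$ factor, under the convention without $(2\pi)^{-d}$ in~\eqref{eq:spectral-convention}. Since the value of $c$ enters only through $c>0$, and $c$ is absorbed in $c_Y$ without affecting~\eqref{eq:main-risk-scale}, even a constant error would not affect the conclusion.
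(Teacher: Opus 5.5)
Your proposal is correct and matches the paper's proof: you check that the Mat\'ern spectral density~\eqref{eq:matern-spectral-density} is continuous, even, strictly positive and radially nonincreasing, with tail constant $c=C_{\nu,d}\alpha^{-2\nu}>0$. Hence Assumption~\ref{ass:spectra} and~\eqref{eq:radial-monotone} hold, and Theorem~\ref{thm:main} applies directly. Your normalization of the density, including the power of $\alpha$ and the $\pi^{d/2}$ factor, agrees with the paper's.
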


Our range parametrization is fixed by the Mat\'ern spectral density
\begin{equation}\label{eq:matern-spectral-density}
 f(\lambda)
 =C_{\nu,d}\alpha^{-2\nu}
  (\alpha^{-2}+\|\lambda\|^2)^{-\nu-d/2},
 \qquad
 C_{\nu,d}=\frac{\Gamma(\nu+d/2)}{\pi^{d/2}\Gamma(\nu)},
\end{equation}
where $\Gamma$ is the gamma function. The verification of the spectral
conditions is given in Appendix~\ref{app:matern-corollary}.
The theorem is not restricted to Matérn correlations. For the
generalized Wendland family ${\cal GW}_{\kappa,\mu,a}$ \citep{BevilacquaEtAl2019}, the spectral
conditions of Theorem~\ref{thm:main} hold whenever
\[
\kappa>-\frac12,\qquad
\mu\ge \kappa+\frac{d+3}{2},
\]
with latent smoothness $\nu=\kappa+1/2$.

When the latent field is rough, the transformation condition simplifies
considerably.

\begin{corollary}\label{cor:rough}
Suppose~\eqref{eq:L2-F}, Assumptions~\ref{ass:spectra} and
\ref{ass:design}, and~\eqref{eq:radial-monotone} hold, with $0<\nu<1$.
If
\begin{equation}\label{eq:W12}
 \E\|\nabla F(\bG)\|^2<\infty,
 \qquad
 \bG\sim N_q(0,I_q),
\end{equation}
where the gradient is weak, then all conclusions of
Theorem~\ref{thm:main} hold.
\end{corollary}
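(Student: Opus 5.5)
Corollary~\ref{cor:rough} reduces to Theorem~\ref{thm:main}. The only hypothesis of the theorem not already assumed is Assumption~\ref{ass:Hermite}. So the plan is to show that, for $0<\nu<1$, condition~\eqref{eq:W12} is equivalent to Assumption~\ref{ass:Hermite}, and then to invoke the theorem.

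In the regime $0<\nu<1$, the weight in~\eqref{eq:Hermite-condition} is $w_\nu(\ell)=\ell$. Assumption~\ref{ass:Hermite} therefore reads $\sum_{\ell\ge1}\ell a_\ell<\infty$, which is exactly $\mathcal D(F)=K_F'(1^-)<\infty$ by~\eqref{eq:energies}.

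The remaining step is to identify $\sum_\ell \ell a_\ell$ with $\E\|\nabla F(\bG)\|^2$ whenever the weak gradient is Gaussian-square-integrable, as in~\eqref{eq:gradient-energies}. I would argue coordinatewise with the Hermite expansion~\eqref{eq:Hermite}.
\begin{itemize}
\item For one-dimensional normalized Hermite functions, $h_k'=\sqrt{k}\,h_{k-1}$. Hence $\partial_j\Psi_{\bk}=\sqrt{k_j}\,\Psi_{\bk-e_j}$.
\item Suppose $\partial_jF\in L^2(\gq)$ weakly. Integrating by parts against the Gaussian weight (Gaussian integration by parts on $\R^q$, justified for $W^{1,2}$ functions by smooth compactly supported approximation) gives $\E\{\partial_jF(\bG)\Psi_{\bk-e_j}(\bG)\}=\sqrt{k_j}\,b_{\bk}$. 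Here one uses that the adjoint of $\partial_j$ in $L^2(\gq)$ is $z_j-\partial_j$, and that $(z_j-\partial_j)\Psi_{\bk-e_j}=\sqrt{k_j}\Psi_{\bk}$.
\item Parseval then gives $\mathcal D_j(F)=\sum_{\bk}k_jb_{\bk}^2$. Summing over $j$ yields $\sum_{\bk}|\bk|b_{\bk}^2=\sum_\ell\ell a_\ell$.
\end{itemize}
This is presumably the criterion recorded in Appendix~\ref{app:Hermite-Sobolev}, so it may simply be cited. It shows that~\eqref{eq:W12} implies $\sum_\ell \ell a_\ell=\E\|\nabla F(\bG)\|^2<\infty$, i.e.\ Assumption~\ref{ass:Hermite}.

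All hypotheses of Theorem~\ref{thm:main} then hold: \eqref{eq:L2-F}, Assumptions~\ref{ass:spectra}, \ref{ass:design} and \ref{ass:Hermite}, and~\eqref{eq:radial-monotone}. Its conclusions follow, with $\mathcal D(F)=\E\|\nabla F(\bG)\|^2$.

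The only delicate point is the integration-by-parts identity for merely weakly differentiable $F$ under the Gaussian weight, since test functions must be allowed to have polynomial growth. I would handle it by truncation with smooth cutoffs $\chi_R$, using dominated convergence together with $F,\partial_jF\in L^2(\gq)$. Alternatively, I would cite the standard identification of $W^{1,2}(\gq)$ with the domain of the Ornstein--Uhlenbeck form \citep{LunardiPallara2020}.
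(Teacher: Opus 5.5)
Your proposal is correct and follows the paper's proof exactly. For $0<\nu<1$ the weight is $w_\nu(\ell)=\ell$, and Lemma~\ref{lem:derivatives} identifies $\sum_\ell \ell a_\ell$ with $\E\|\nabla F(\bG)\|^2<\infty$; that lemma is the Hermite characterization of $W^{1,2}(\gq)$ via Gaussian integration by parts, which you re-derive correctly, including the cutoff justification. Hence Assumption~\ref{ass:Hermite} holds, and Theorem~\ref{thm:main} applies, with $\mathcal D(F)>0$ guaranteed by $\Var\{F(\bG)\}>0$ in \eqref{eq:L2-F}.
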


Equivalently, in this regime it is enough that
$K_F'(1^-)<\infty$. In particular, for a Mat\'ern latent field with
$0<\nu<1$, a finite expected squared weak gradient is sufficient for
asymptotic MSPE equivalence. The short proof is given in
Appendix~\ref{app:rough-corollary}.

The exponential model, corresponding to Mat\'ern smoothness $\nu=1/2$,
admits a stronger conclusion. Let
\begin{equation}\label{eq:exponential-correlations}
 \rho(h)
 =
 \exp(-\|h\|/\alpha),
 \qquad
 \alpha>0.
\end{equation}
In this case the observed spectral density can be dominated globally by the
latent spectral density, yielding a finite-sample version of the asymptotic
sandwich.

\begin{proposition}\label{prop:exponential-sandwich}
Suppose~\eqref{eq:L2-F}, let $F\in W^{1,2}(\gq)$, and let the common latent
correlation be~\eqref{eq:exponential-correlations}. With $\mathcal D(F)$
and $\Delta_F$ defined in~\eqref{eq:energies} and
\eqref{eq:Delta-definition}, for every finite design of distinct sites
excluding $s_\star$,
\begin{equation}\label{eq:exponential-finite-sandwich}
 0
 <
 \Delta_F(v_n)
 =
 \MSPEO
 \le
 \MSPEN
 \le
 \MSPEL
 \le
 \MSPER
 \le
 \mathcal D(F)v_n.
\end{equation}
Consequently,
\begin{equation}\label{eq:exponential-finite-ratios}
 1
 \le
 \frac{\MSPEL}{\MSPEN}
 \le
 \frac{\MSPER}{\MSPEN}
 \le
 \frac{\mathcal D(F)v_n}{\Delta_F(v_n)}.
\end{equation}
If $v_n\to0$ along a sequence of finite designs, then all conclusions of
Theorem~\ref{thm:main} hold along that sequence. In particular,
$\operatorname{dist}(s_\star,S_n)\to0$ is sufficient; neither nesting nor
density throughout $D$ is required.

If, additionally, $K_F''(1^-)<\infty$, then
\begin{equation}\label{eq:exponential-finite-gap}
 0
 \le
 \MSPER-\MSPEO
 \le
 \mathcal D(F)v_n-\Delta_F(v_n)
 \le
 \tfrac12 K_F''(1^-)v_n^2.
\end{equation}
\end{proposition}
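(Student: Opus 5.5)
The plan is to prove the single new inequality $\MSPER\le\mathcal D(F)v_n$ by a spectral domination argument. The remaining inequalities in~\eqref{eq:exponential-finite-sandwich} then come from~\eqref{eq:sandwich} and Proposition~\ref{prop:exact}, which gives $\MSPEO=\Delta_F(v_n)$.

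\textbf{Step 1: strict positivity.} First I check $\Delta_F(v_n)>0$. The exponential correlation has a strictly positive spectral density, so $R_n$ and the augmented matrix including $s_\star$ are positive definite. This gives $0<v_n\le1$. Since $K_F(u)=\sum_\ell a_\ell u^\ell$ has nonnegative coefficients and is not identically zero (because $\Var F(\bG)>0$), it is strictly increasing on $[0,1]$. Hence $K_F(1)-K_F(1-v_n)>0$.

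\textbf{Step 2: the upper bound.} Write the latent-correlation kriging residual with weights $w=\boldsymbol w_{\rho,n}$ and $w_0=-1$ at $s_0=s_\star$. Then
\[
 \MSPER=\sum_{i,k=0}^n w_iw_kK_F\{\rho(s_i-s_k)\}.
\]
The key observation is that $K_F(u)-\mathcal D(F)u=\sum_\ell a_\ell(u^\ell-\ell u)$ should combine with $\rho$ into a nonpositive-definite kernel. For the exponential kernel I would argue as follows. Each $\rho^\ell(h)=\exp(-\ell\|h\|/\alpha)$ is itself an exponential correlation with range $\alpha/\ell$. I then compare $\sum_{i,k}w_iw_k e^{-\ell\|s_i-s_k\|/\alpha}$ with $\ell\sum_{i,k}w_iw_ke^{-\|s_i-s_k\|/\alpha}$, for the specific weights that make $\sum_kw_k\rho(s_i-s_k)=0$ for $i\ge1$.

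The cleanest route I see uses the spectral density of the exponential kernel in $\R^d$,
\[
 f_\beta(\lambda)\propto\beta\,(\beta^2+\|\lambda\|^2)^{-(d+1)/2},
 \qquad \beta=\ell/\alpha.
\]
The ratio $f_{\ell/\alpha}/f_{1/\alpha}=\ell\{(\alpha^{-2}+\|\lambda\|^2)/(\ell^2\alpha^{-2}+\|\lambda\|^2)\}^{(d+1)/2}$ is bounded by $\ell$ pointwise. This pointwise bound gives $\rho^\ell\preceq\ell\rho$ in the positive-definite order. Summing with weights $a_\ell$ yields $C_Y\preceq\mathcal D(F)\rho$ as kernels, so for any weight vector
\[
 \MSPER\le\mathcal D(F)\,w^{\mathsf T}R^{(0)}w=\mathcal D(F)v_n,
\]
where $R^{(0)}$ is the augmented correlation matrix and $w^{\mathsf T}R^{(0)}w=v_n$ by optimality of $\boldsymbol w_{\rho,n}$. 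The finiteness $\mathcal D(F)<\infty$ is exactly $F\in W^{1,2}(\gq)$, which justifies interchanging the series.

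\textbf{Step 3: consequences.} The ratio bounds~\eqref{eq:exponential-finite-ratios} follow by dividing the sandwich by $\MSPEN\ge\Delta_F(v_n)$. If $v_n\to0$, then $\Delta_F(v_n)/v_n\to\mathcal D(F)$ by Proposition~\ref{prop:exact}, so every ratio tends to one. This yields~\eqref{eq:main-risk-scale}, \eqref{eq:main-ratio}, and~\eqref{eq:oracle-information-gap}, the last via $\E[(\YN-\YO)^2]=\MSPEN-\MSPEO$ from the nesting $\mathcal G_n\subseteq\mathcal H_n$.

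For the sufficiency of $\dist(s_\star,S_n)\to0$, I bound $v_n$ by the error of predicting from the single nearest site $t$:
\[
 v_n\le1-\rho(s_\star-t)^2\le 2\|s_\star-t\|/\alpha.
\]

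\textbf{Step 4: second-order gap.} Finally, \eqref{eq:exponential-finite-gap} follows from $\mathcal D(F)v-\Delta_F(v)=\sum_\ell a_\ell\{\ell v-1+(1-v)^\ell\}$. Each term satisfies $\ell v-1+(1-v)^\ell\le\binom{\ell}{2}v^2$ for $v\in[0,1]$, by Taylor's theorem with the second derivative of $(1-v)^\ell$ bounded by $\ell(\ell-1)$. Summing gives $\mathcal D(F)v-\Delta_F(v)\le\tfrac12K_F''(1^-)v^2$.

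The main obstacle is Step 2: verifying $\rho^\ell\preceq\ell\rho$ for the exponential kernel in every dimension $d$. Elementwise $e^{-\ell x}\le\ell e^{-x}$ is not enough, since positive-definite order is needed. I expect the explicit spectral ratio computation above to settle it.
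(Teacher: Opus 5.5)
Your proposal is correct and follows the paper's proof essentially step for step: the pointwise bound $f_{\alpha/\ell}\le \ell f_\alpha$ is summed to $f_Y\le\mathcal D(F)f_\alpha$ and applied to the latent-kriging residual, the oracle identity gives the lower end, a single nearest site bounds $v_n$, and a second-order remainder gives the gap (your termwise Taylor bound is the series form of the paper's integral remainder for $K_F$). One small addition: the same spectral series shows $f_Y>0$, so $C_{Y,n}$ is nonsingular and $\YL$ is well defined for every design, which the hierarchy you invoke presupposes.
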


The upper bound follows from a simple scaling property. Since
$\rho(h)^\ell=\exp(-\ell\|h\|/\alpha)$, its spectral density corresponds to
range $\alpha/\ell$ and is bounded above by $\ell$ times the original
exponential spectral density. The nonnegative expansion
\eqref{eq:KF-series} therefore yields
$
 f_Y(\lambda)
 \le
 \mathcal D(F) f_\alpha(\lambda)
$
at every frequency, where $f_\alpha$ is the spectral density of
\eqref{eq:exponential-correlations}. Applying this covariance domination
to the prediction
residual generated by the latent-correlation kriging weights gives
$\MSPER\le\mathcal D(F)v_n$. The exact oracle MSPE supplies the lower bound.
Appendix~\ref{app:exponential} gives the details and verifies
\begin{equation}\label{eq:exponential-cY}
 c
 =
 K_d/\alpha,
 \qquad
 c_Y
 =
 (K_d/\alpha)K_F'(1^-),
 \qquad
 K_d
 =
 \frac{\Gamma\{(d+1)/2\}}
 {\pi^{(d+1)/2}}.
\end{equation}
The same exponential spectral calculation underlies the scalar condition in
\citet[Remark~1]{HiranoYajima2013}; here it yields finite-sample control of
the entire multilatent MSPE hierarchy.

Radial monotonicity is useful because it yields the sharper,
dimension-independent transformation condition in
Assumption~\ref{ass:Hermite}, but it is not essential to the MSPE-equivalence
mechanism. It can be removed at the cost of stronger transformation
regularity.

\begin{theorem}\label{thm:general}
Suppose~\eqref{eq:L2-F} and Assumptions~\ref{ass:spectra} and
\ref{ass:design} hold. Without assuming~\eqref{eq:radial-monotone}, replace
Assumption~\ref{ass:Hermite} by
\begin{equation}\label{eq:conservative-Hermite}
 \sum_{\ell\ge1}
 a_\ell(1+\ell)^{d+2\nu+1}
 <\infty.
\end{equation}
Then all conclusions of Theorem~\ref{thm:main} hold.
\end{theorem}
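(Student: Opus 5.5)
The plan is to follow the sandwich argument of Theorem~\ref{thm:main} with only the spectral transfer step changed. The oracle side needs nothing new. Condition~\eqref{eq:conservative-Hermite} implies $\sum_\ell \ell a_\ell<\infty$, so $K_F'(1^-)=\mathcal D(F)<\infty$. Proposition~\ref{prop:exact} together with $v_n\to0$ then gives $\MSPEO\sim\mathcal D(F)v_n$. Nonconstancy of $F$ gives $\mathcal D(F)>0$. The hierarchy~\eqref{eq:sandwich} then reduces everything to showing $\MSPER\le\mathcal D(F)v_n\{1+o(1)\}$. This upper bound comes from the residual comparison of Lemma~\ref{lem:residual-energy}, once we know that $f_Y/f\to\mathcal D(F)$ and that $f_Y$ has the two-sided envelope $(1+\norm\lambda)^{-p}$. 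So the whole task is to reprove the conclusion of Proposition~\ref{prop:tail} without~\eqref{eq:radial-monotone}, using the stronger weight.

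For the transfer we use $f_Y=\sum_\ell a_\ell f^{*\ell}$ and need a convolution bound of the form
\[
 f^{*\ell}(\lambda)\le C\,\ell^{p+1}(1+\norm\lambda)^{-p}
\]
that is uniform in $\ell$, with $p=d+2\nu$. The bound $f\le B(1+\norm\lambda)^{-p}$ is all that is available. We argue by induction on the convolution identity $f^{*\ell}=f*f^{*(\ell-1)}$. Split $\R^d$ according to whether $\norm{\mu}\ge\norm\lambda/\ell$ or $\norm{\lambda-\mu}\ge(1-1/\ell)\norm\lambda$. In the first region bound $f(\mu)\le B(1+\norm\lambda/\ell)^{-p}\le B\ell^p(1+\norm\lambda)^{-p}$ and use $\int f^{*(\ell-1)}=1$. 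In the second region use the inductive bound at the dilated point. The factor $(1-1/\ell)^{-p}$ multiplies to $O(\ell^{p})$ over the induction.

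A cleaner route may be a direct pigeonhole argument, which I would try first. If $\lambda=\mu_1+\dots+\mu_\ell$, then some $\norm{\mu_j}\ge\norm\lambda/\ell$. A union bound over $j$ then gives
\[
 f^{*\ell}(\lambda)\le \ell\cdot\sup_{\norm\mu\ge\norm\lambda/\ell}f(\mu)\le \ell B\ell^{p}(1+\norm\lambda)^{-p}.
\]
The remaining $\ell-1$ factors integrate to one. This is exactly $\ell^{p+1}=\ell^{d+2\nu+1}$, matching~\eqref{eq:conservative-Hermite}.

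With this bound, $\sum_\ell a_\ell f^{*\ell}(\lambda)\norm\lambda^{p}$ is dominated termwise by $C a_\ell\ell^{p+1}$, which is summable. For each fixed $\ell$, $f^{*\ell}(\lambda)\norm\lambda^{p}\to \ell c$. This holds because, for integrable densities with regularly varying tails in all directions, the tail of a finite convolution is the sum of the tails. It can be proved by splitting into the region where one argument is large and the rest stay bounded, using continuity of $f$ and~\eqref{eq:latent-tail}. Dominated convergence for series then gives
\[
 f_Y(\lambda)\norm\lambda^p\to c\sum_\ell\ell a_\ell=c\,\mathcal D(F).
\]
The envelope follows from the same domination for the upper bound. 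For the lower bound, note $f_Y\ge a_{\ell_0}f^{*\ell_0}$ for some $\ell_0$ with $a_{\ell_0}>0$; this term is continuous, positive and has the right tail. Continuity of $f_Y$ follows from uniform convergence of the series, since $\sup f^{*\ell}\le\sup f$.

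Once~\eqref{eq:observed-tail} and the envelope are in hand, the proof of Theorem~\ref{thm:main} applies verbatim. Lemma~\ref{lem:residual-energy} gives $\MSPER\sim\mathcal D(F)v_n$, the sandwich yields~\eqref{eq:main-risk-scale} and~\eqref{eq:main-ratio}, and~\eqref{eq:oracle-information-gap} follows from $\E[(\YN-\YO)^2]=\MSPEN-\MSPEO$ by orthogonality. I expect the main obstacle to be the fixed-$\ell$ tail asymptotics of $f^{*\ell}$ without monotonicity. Nonuniformity near the origin is harmless because $f$ is continuous and bounded, but the argument requires some care in the splitting.
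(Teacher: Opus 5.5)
Your proposal is correct and is essentially the paper's proof. The paper also reruns the proof of Proposition~\ref{prop:tail}, replacing Lemma~\ref{lem:refined-convolution} with the pigeonhole bound of Lemma~\ref{lem:conv-bound}, $f^{*\ell}(\lambda)\le B\ell^{p+1}(1+\|\lambda\|)^{-p}$, and uses the fixed-$\ell$ tails of Lemma~\ref{lem:conv-tail} with dominated convergence under \eqref{eq:conservative-Hermite} to get a bounded ratio $f_Y/f\to\mathcal D(F)$. It then finishes with the single-residual comparison \eqref{eq:latent-residual-risk} and the oracle sandwich, as you do.
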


The stronger exponent follows from the elementary convolution bound in
Lemma~\ref{lem:conv-bound}, which replaces the refined radial argument.
A convenient coefficient-free sufficient condition is Gaussian-square
integrability of the weak derivatives of $F$ through an integer order
$M\ge d+2\nu+1$. The proof is given in
Appendix~\ref{app:general-proof}.

\section{Examples}\label{sec:examples}\label{sec:clayton}

We illustrate the conditions of Section~\ref{sec:main} with four
representative non-Gaussian constructions: a smooth invertible scalar
transformation, two noninvertible multilatent fields, and a copula-based
field. The distinction matters because, for multilatent models, the
observations generally do not determine the latent Gaussian values.

Throughout, $Z_1,Z_2,\ldots$ are independent standardized Gaussian fields
with common correlation $\rho$; in scalar examples we write $Z$.
The relevant quantity is the Gaussian Dirichlet energy
$\mathcal D(F)=\E\|\nabla F(\bG)\|^2$. The checks of weak and fractional
regularity, as well as further examples, are in Appendix~\ref{app:examples}.
We write $\Phi$ and $\phi$ for the standard normal distribution function
and density, $z_p=\Phi^{-1}(p)$ for $0<p<1$, and $Q$ for a marginal quantile
function on $(0,1)$.

\subsection{Scalar transformations}

For $q=1$, consider the log-Gaussian field
\begin{equation}\label{eq:lognormal-example}
Y(s)=\exp\{\mu+\omega Z(s)\},\qquad
\mu\in\R,\quad\omega>0,\qquad
\mathcal D(F)=\omega^2e^{2\mu+2\omega^2}.
\end{equation}
The transformation is smooth and invertible and satisfies the conditions
of Theorem~\ref{thm:main} for every finite latent smoothness $\nu>0$.
Invertibility also makes the nonlinear MSPE available exactly; see
\eqref{eq:lognormal-exact-risk}. The covariance map is given in
Appendix~\ref{app:lognormal-regularity}.

The sinh--arcsinh and Tukey $g$-and-$h$ fields provide other smooth scalar
examples under their moment restrictions
\citep{JonesPewsey2009,BlasiEtAl2022,XuGenton2017}.
Likewise, Gaussian-copula fields $Y(s)=Q\{\Phi(Z(s))\}$ are covered when
$Q\circ\Phi$ has the required Gaussian-Sobolev regularity; see
Appendix~\ref{app:additional-scalars}.

\subsection{Noninvertible multilatent transformations}

The skew-Gaussian field of \citet{ZhangElShaarawi2010} is
\begin{equation}\label{eq:skew-example}
Y(s)=\mu+\delta|Z_1(s)|+\tau Z_2(s),\qquad
\mu,\delta\in\R,\quad\tau>0.
\end{equation}
Its weak gradient gives $\mathcal D(F)=\delta^2+\tau^2$.
The absolute-value component has Gaussian-Sobolev regularity of every
order $s<3/2$, but not of order $3/2$. Consequently,
Theorem~\ref{thm:main} applies for $0<\nu<3/2$ under
\eqref{eq:radial-monotone}. The covariance map and regularity calculation
are given in Appendix~\ref{app:fractional-examples}.

Another noninvertible example is the Weibull field of
\citet{BevilacquaEtAl2020},
\begin{equation}\label{eq:Weibull-example}
Y(s)=\tau\left\{\frac{Z_1(s)^2+Z_2(s)^2}{2}\right\}^{1/k},
\qquad \tau,k>0.
\end{equation}
Its margins are Weibull with shape $k$ and scale $\tau$, and
\begin{equation}\label{eq:Weibull-energy}
\mathcal D(F)=\frac{2\tau^2}{k^2}\Gamma(2/k)<\infty,
\qquad k>0.
\end{equation}
Thus Corollary~\ref{cor:rough} applies for every $k>0$ throughout
$0<\nu<1$, including cases where the transformation is not classically
differentiable at the origin. The weak-gradient argument is given in
Appendix~\ref{app:weibull-regularity}.

\subsection{Copula-based random fields}

A genuinely multilatent example is the Clayton-like spatial copula of
\citet{BevilacquaEtAl2024}. For $q_0\in\N$, let
$A_{q_0}(s)=\sum_{j=1}^{q_0}Z_j(s)^2$ and
$B_2(s)=Z_{q_0+1}(s)^2+Z_{q_0+2}(s)^2$.
Define
\begin{equation}\label{eq:clayton-uniform-field}
U_{q_0}(s)=\left\{\frac{A_{q_0}(s)}{A_{q_0}(s)+B_2(s)}\right\}^{q_0/2},
\qquad Y(s)=U_{q_0}(s).
\end{equation}
The corresponding pointwise map is
\begin{equation}\label{eq:clayton-map}
F_{q_0}(\bz)=
\left(\frac{\sum_{j=1}^{q_0}z_j^2}{\sum_{j=1}^{q_0+2}z_j^2}\right)^{q_0/2},
\qquad \bz\ne\boldsymbol0,
\end{equation}
with any value assigned at $\boldsymbol0$. Since the ratio in
\eqref{eq:clayton-uniform-field} has a
$\operatorname{Beta}(q_0/2,1)$ distribution before the power is applied,
$U_{q_0}(s)$ is uniform on $(0,1)$. Despite the noninvertibility,
\begin{equation}\label{eq:clayton-energy}
\mathcal D(F_{q_0})=K_{F_{q_0}}'(1^-)
=\frac{2q_0}{3(3q_0-2)},\qquad q_0\ge1.
\end{equation}

Prescribed marginal distributions can be obtained by setting
\begin{equation}\label{eq:clayton-quantile-field}
Y(s)=Q\{U_{q_0}(s)\}.
\end{equation}
For a strictly increasing, locally absolutely continuous quantile $Q$,
the corresponding gradient energy is determined by
\begin{equation}\label{eq:clayton-quantile-energy}
I_{q_0}(Q)=q_0\int_0^1[Q'(t)]^2
t^{\,2-2/q_0}(1-t^{2/q_0})\dd t.
\end{equation}

\begin{corollary}\label{cor:clayton}
Suppose Assumptions~\ref{ass:spectra} and~\ref{ass:design} hold with
$0<\nu<1$, and assume~\eqref{eq:radial-monotone}.
For the uniform Clayton-like field,
\begin{equation}\label{eq:clayton-uniform-risk}
\MSPEO\sim\MSPEN\sim\MSPEL\sim\MSPER
\sim\frac{2q_0}{3(3q_0-2)}v_n,\qquad q_0\ge1.
\end{equation}
For the quantile field~\eqref{eq:clayton-quantile-field}, suppose that
$Q$ is strictly increasing and locally absolutely continuous and that
\begin{equation}\label{eq:clayton-quantile-conditions}
\int_0^1Q(t)^2\dd t<\infty,\qquad I_{q_0}(Q)<\infty.
\end{equation}
Then $F_{q_0,Q}=Q\circ F_{q_0}$ belongs to
$W^{1,2}(\gamma_{q_0+2})$ and
$\mathcal D(F_{q_0,Q})=I_{q_0}(Q)$, so
\begin{equation}\label{eq:clayton-quantile-risk}
\MSPEO\sim\MSPEN\sim\MSPEL\sim\MSPER
\sim I_{q_0}(Q)v_n.
\end{equation}
\end{corollary}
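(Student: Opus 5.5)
The plan is to reduce Corollary~\ref{cor:clayton} to Corollary~\ref{cor:rough} with $q=q_0+2$ latent fields. Because $0<\nu<1$, that corollary asks only for \eqref{eq:L2-F} and a finite expected squared weak gradient \eqref{eq:W12}; the leading constant is then $\mathcal D(F)$. So in each case it suffices to check three things: $F\in L^2(\gamma_{q_0+2})$ with positive variance, $F$ has a weak gradient with $\E\|\nabla F(\bG)\|^2<\infty$, and the value of that energy.

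\textbf{Uniform field.} Here $0\le F_{q_0}\le 1$, and $F_{q_0}(\bG)$ is uniform, so its variance is $1/12$. On $\R^{q_0+2}\setminus\{\boldsymbol0\}$, $F_{q_0}$ is smooth. The origin has zero Gaussian capacity, because $q_0+2\ge3$: there a bounded function with locally square-integrable classical gradient extends across the point. Concretely, I would test against cutoffs $\chi_\epsilon(\bz)=\chi(\|\bz\|/\epsilon)$ and note that the boundary term is $O(\epsilon^{q_0+1})$, since $|\nabla F_{q_0}|=O(1/\|\bz\|)$. The classical gradient is therefore the weak gradient.

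To compute the energy, write $R=A/(A+B)$, with $A=\|\bz_{1:q_0}\|^2$ and $B=z_{q_0+1}^2+z_{q_0+2}^2$. A direct computation gives
\[
\|\nabla R\|^2=\frac{4AB}{(A+B)^3},
\]
and hence
\[
\|\nabla F_{q_0}\|^2=\frac{q_0^2}{4}\,R^{q_0-2}\,\frac{4AB}{(A+B)^3}.
\]
Under $\bG$, the variable $R$ is $\mathrm{Beta}(q_0/2,1)$ and is independent of $A+B\sim\chi^2_{q_0+2}$, with $AB/(A+B)^2=R(1-R)$. Therefore
\[
\E\|\nabla F_{q_0}\|^2=q_0^2\,\E\{R^{q_0-1}(1-R)\}\,\E\{(A+B)^{-1}\},\qquad \E\{(A+B)^{-1}\}=\frac{1}{q_0}.
\]
A Beta integral then gives
\[
\E\{R^{q_0-1}(1-R)\}=\frac{q_0}{2}\cdot\frac{2}{3q_0-2}\cdot\frac{2}{3q_0}.
\]
Combining these should reproduce \eqref{eq:clayton-energy}. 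I would double-check this constant against the stated value $2q_0/\{3(3q_0-2)\}$.

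\textbf{Quantile field.} Since $U=F_{q_0}(\bG)$ is uniform, $\E Q(U)^2=\int_0^1 Q^2<\infty$, and the variance is positive because $Q$ is strictly increasing. For the gradient, I would use a chain rule for locally absolutely continuous $Q$ composed with a $W^{1,2}_{\mathrm{loc}}$ function. First truncate $Q$ to Lipschitz approximations $Q_M$, for instance by clamping $Q'$ and restricting to $[1/M,1-1/M]$. For these, the chain rule is standard and gives
\[
\nabla(Q_M\circ F)=Q_M'(F)\,\nabla F.
\]
Next condition on $U=t$. By the polar and Beta structure above, $R=t^{2/q_0}$, and
\[
\E\bigl\{\|\nabla F_{q_0}\|^2\,\big|\,U=t\bigr\}=q_0\,t^{2-2/q_0}\bigl(1-t^{2/q_0}\bigr).
\]
This uses $F=R^{q_0/2}$, the independence of $R$ and $A+B$, and $\E(A+B)^{-1}=1/q_0$. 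It yields
\[
\E\|\nabla(Q_M\circ F)\|^2=\int_0^1 Q_M'(t)^2\,q_0\,t^{2-2/q_0}\bigl(1-t^{2/q_0}\bigr)\dd t.
\]
Monotone convergence then shows these energies are bounded by $I_{q_0}(Q)<\infty$, and $Q_M\circ F\to Q\circ F$ in $L^2$. Closedness of the weak gradient in $W^{1,2}(\gamma)$ (weak compactness plus identification of the limit) gives $Q\circ F\in W^{1,2}$ with $\mathcal D=I_{q_0}(Q)$.

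\textbf{Conclusion and main obstacle.} Both cases now satisfy the hypotheses of Corollary~\ref{cor:rough}, which yields \eqref{eq:clayton-uniform-risk} and \eqref{eq:clayton-quantile-risk}. The main obstacle is the unbounded-$Q'$ chain rule and limit passage near $t\in\{0,1\}$, together with the origin singularity. The conditional-energy identity is what makes the bound exactly $I_{q_0}(Q)$.
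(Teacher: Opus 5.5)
Your proposal is correct and follows essentially the paper's argument. The shared steps are: reduction to Corollary~\ref{cor:rough}; the independent $\chi^2_{q_0+2}$ and $\operatorname{Beta}(q_0/2,1)$ decomposition, which gives $\|\nabla F_{q_0}\|^2=q_0^2R^{q_0-1}(1-R)/(A+B)$; a radial cutoff at the origin; and truncated Lipschitz quantiles $Q_M$ with a closedness argument. Your constants do reproduce $2q_0/\{3(3q_0-2)\}$ and $I_{q_0}(Q)$.

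One claim needs correcting. $F_{q_0}=\{A/(A+B)\}^{q_0/2}$ is not smooth on $\R^{q_0+2}\setminus\{\boldsymbol0\}$ when $q_0$ is odd. For $q_0=1$ it equals $|z_1|/\|\bz\|$, which has a kink along $\{z_1=0\}$. As the paper does, argue instead that $F_{q_0}$ is locally Lipschitz away from $\boldsymbol0$, so its a.e.\ gradient is its weak gradient there; your cutoff at $\boldsymbol0$ then applies unchanged.

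Also, weak compactness alone only gives $\mathcal D(F_{q_0,Q})\le I_{q_0}(Q)$ by lower semicontinuity. To get equality, note that $Q_M'(F_{q_0})\nabla F_{q_0}\to Q'(F_{q_0})\nabla F_{q_0}$ in $L^2(\gamma_{q_0+2})$ by dominated convergence. This works because $0\le Q_M'\le Q'$, and your conditional-energy identity gives $\E[Q'(F_{q_0}(\bG))^2\|\nabla F_{q_0}(\bG)\|^2]=I_{q_0}(Q)<\infty$. This is the paper's Cauchy-sequence step.
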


The calculations are given in Appendix~\ref{app:clayton}. In particular,
for $\operatorname{Beta}(a,b)$ margins the quantile conditions hold for
all $a,b>0$ if $q_0\ge2$; if $q_0=1$, they hold exactly when $a<2$.

Table~\ref{tab:models} summarizes these and three additional constructions
whose details have been moved to Appendix~\ref{app:examples}.
\begin{table}[htbp]
\centering\small
\setlength{\tabcolsep}{4pt}
\renewcommand{\arraystretch}{1.22}
\caption{Representative Gaussian-functional fields. Entries give
$\mathcal D(F)$ and sufficient conditions for asymptotic MSPE equivalence,
under the corresponding latent spectral and design assumptions.
Here $q$ denotes the number of latent Gaussian fields.}
\label{tab:models}
\begin{tabular}{@{}>{\raggedright\arraybackslash}p{30mm}
>{\raggedright\arraybackslash}p{10mm}
>{\raggedright\arraybackslash}p{44mm}
>{\raggedright\arraybackslash}p{58mm}@{}}
\toprule
Field & $q$ & $\mathcal D(F)$ & Sufficient conditions\\
\midrule
Log-Gaussian & $1$ & $\omega^2e^{2\mu+2\omega^2}$ & All finite $\nu>0$\\
Skew-Gaussian & $2$ & $\delta^2+\tau^2$ & $0<\nu<3/2$\\
Weibull & $2$ & $2\tau^2\Gamma(2/k)/k^2$ & $0<\nu<1$, every $k>0$\\
Clayton-like, uniform & $q_0+2$ & $2q_0/\{3(3q_0-2)\}$ & $0<\nu<1$, $q_0\ge1$\\
Clayton-like, $Q$ & $q_0+2$ & $I_{q_0}(Q)$ & $0<\nu<1$, $\int_0^1Q^2<\infty$, $I_{q_0}(Q)<\infty$\\
Spatial-scale modulation & $2$ & $(1+\sigma^2)e^{2\sigma^2}$ & All finite $\nu>0$\\
Product & $2$ & $2$ & All finite $\nu>0$\\
Gamma, Gaussian squares & $q$ & $4q\tau^2$ & All finite $\nu>0$\\
\bottomrule
\end{tabular}
\end{table}

The conditions in Table~\ref{tab:models} are sufficient, not necessary,
and refer to the particular Gaussian-functional representations specified
here or in Appendix~\ref{app:examples}. The spatial-probit field is treated
separately in Section~\ref{sec:probit}, since its threshold map falls
outside the positive Sobolev-regularity results.

\FloatBarrier
\section{Scope and limitations}
\label{sec:limits}

The preceding results establish asymptotic MSPE equivalence under explicit
distributional and regularity conditions. We now examine two complementary
questions: what this equivalence implies for the distribution of the
prediction error, and how it can fail when the assumptions are violated.

We first characterize the limiting standardized prediction error and identify
the additional condition required for asymptotically Gaussian calibration.
We then show, through two counterexamples, that both transformation regularity
and the Gaussian-functional structure are substantive: an invertible
transformation can retain a nonvanishing nonlinear-prediction advantage when
it is insufficiently regular relative to the latent smoothness, while a
continuous non-Gaussian process with Mat\'ern covariance can exhibit a
different prediction order altogether. Finally, the spatial-probit model
illustrates a discontinuous transformation that belongs to the
Gaussian-functional representation but falls outside the regularity conditions
of the positive results.

\subsection{Limiting prediction errors and Gaussian calibration}
\label{sec:error-calibration}

MSPE equivalence controls average squared prediction error, but does not
determine the limiting distribution of that error. The same Gaussian
conditioning argument used for the latent-information MSPE shows that the
expected squared gradient of the transformation determines the asymptotic
MSPE, whereas variation in the gradient norm determines whether the
standardized error is Gaussian.

\begin{proposition}
\label{prop:prediction-error-limit}
Suppose the assumptions of Theorem~\ref{thm:main} or
Theorem~\ref{thm:general} hold, or those of
Proposition~\ref{prop:exponential-sandwich} hold along a sequence with
$v_n\to0$. Let $\boldsymbol G,\boldsymbol U$ be independent
$N_q(0,I_q)$ vectors and define
\begin{equation}\label{eq:general-error-limit}
 W_F
 :=
 \frac{\nabla F(\boldsymbol G)^{\mathsf T}\boldsymbol U}
      {\sqrt{\mathcal D(F)}}.
\end{equation}
Then, for each $A\in\{\mathrm O,\mathrm{NL},\mathrm L,\rho\}$,
\begin{equation}\label{eq:standardized-error-convergence}
 \frac{Y(s_\star)-\widehat Y_n^A}
      {\sqrt{\operatorname{MSPE}_n^A}}
 \xrightarrow{d}
 W_F.
\end{equation}
The convergence holds jointly with
$
 \boldsymbol Z_\star
 =
 \bigl(Z_1(s_\star),\ldots,Z_q(s_\star)\bigr)^{\mathsf T},
$
with joint limit $(\boldsymbol G,W_F)$. Moreover,
\begin{equation}\label{eq:error-mixture-variance}
 W_F\mid\boldsymbol G
 \sim
 N\!\left(
 0,
 \frac{\|\nabla F(\boldsymbol G)\|^2}{\mathcal D(F)}
 \right),
\end{equation}
where $N(0,0)$ denotes point mass at zero if the gradient vanishes.
Thus $W_F$ has mean zero and variance one. In particular,
\begin{equation}\label{eq:normal-error-criterion}
 W_F\sim N(0,1)
 \quad\Longleftrightarrow\quad
 \|\nabla F(\boldsymbol G)\|^2
 =
 \mathcal D(F)
 \quad\text{almost surely}.
\end{equation}
For the latent-information predictor alone,
\eqref{eq:standardized-error-convergence} requires only
$F\in W^{1,2}(\gamma_q)$,
$0<\mathcal D(F)<\infty$, and $v_n\to0$.
\end{proposition}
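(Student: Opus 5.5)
The approach is to prove the result first for the oracle $\YO$, using an explicit Gaussian representation, and then to transfer it to the other three predictors through the MSPE equivalences already established. For the oracle, write $\bZ_\star=\boldsymbol M_n+\sqrt{v_n}\,\boldsymbol U_n$. Here $\boldsymbol M_n=\E(\bZ_\star\mid\mathcal H_n)$, and $\boldsymbol U_n\sim N_q(0,I_q)$ is independent of $\mathcal H_n$; this holds because the $q$ latent fields are independent and share the conditional variance $v_n$. Then $\YO=P_{v_n}F(\boldsymbol M_n)$ with $P_tF(\boldsymbol m)=\E F(\boldsymbol m+\sqrt t\,\boldsymbol U)$, and the error is
\[
 e_n:=Y(s_\star)-\YO = F(\boldsymbol M_n+\sqrt{v_n}\boldsymbol U_n)-\E_{\boldsymbol U}F(\boldsymbol M_n+\sqrt{v_n}\boldsymbol U).
\]
Since $\boldsymbol M_n\sim N(0,(1-v_n)I_q)$, the pair $(\boldsymbol M_n,\boldsymbol U_n)$ has a fixed law for each $v_n$. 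I would therefore realize everything on a single probability space with $\boldsymbol M_n=\sqrt{1-v_n}\,\boldsymbol G$ and $\boldsymbol U_n=\boldsymbol U$. The claim then reduces to a deterministic-function statement: $v_n^{-1/2}e_n\to\nabla F(\boldsymbol G)^{\mathsf T}\boldsymbol U$ in $L^2$. Combined with $\MSPEO/v_n\to\mathcal D(F)$ from Proposition~\ref{prop:exact}, this gives the oracle case jointly with $\bZ_\star=\sqrt{1-v_n}\boldsymbol G+\sqrt{v_n}\boldsymbol U\to\boldsymbol G$.

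To prove the $L^2$ statement, I would approximate $F$ in $W^{1,2}(\gq)$ by smooth compactly supported $F_k$ with $\E\|\nabla(F-F_k)(\bG)\|^2\to0$. For smooth $F_k$, a Taylor expansion and dominated convergence give the limit directly. The remainder requires a bound uniform in $t$:
\[
 t^{-1}\E\bigl[\{H(\sqrt{1-t}\bG+\sqrt t\boldsymbol U)-P_tH(\sqrt{1-t}\bG)\}^2\bigr]\le \mathcal D(H),
\]
applied with $H=F-F_k$. This follows from the exact formula already proved: the left side equals $\Delta_H(t)/t=t^{-1}\sum_\ell a_\ell(H)\{1-(1-t)^\ell\}\le\sum_\ell\ell a_\ell(H)$. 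The linear term $\nabla H(\sqrt{1-t}\bG)^{\mathsf T}\boldsymbol U$ has second moment $\E\|\nabla H(\sqrt{1-t}\bG)\|^2$, which must also be small uniformly in $t$. Because of the slightly scaled Gaussian argument, I expect this to be the delicate step. To handle it, I would instead compare against $\nabla F(\bG)^{\mathsf T}\boldsymbol U$ and use the $L^2$-continuity of dilations $t\mapsto \nabla F(\sqrt{1-t}\,\cdot)$ in $L^2(\gq)$. That continuity holds because the density ratio of $N(0,(1-t)I)$ to $N(0,I)$ is bounded near $t=0$ on the relevant scale, after approximating by continuous functions.

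For $A\in\{\mathrm{NL},\mathrm L,\rho\}$, write $Y(s_\star)-\widehat Y_n^A=e_n+(\YO-\widehat Y_n^A)$. Since $\mathcal G_n\subseteq\mathcal H_n$ and all of $\YN,\YL,\YR$ are $\mathcal H_n$-measurable, Pythagoras gives $\E(\YO-\widehat Y_n^A)^2=\operatorname{MSPE}_n^A-\MSPEO$. Under Theorem~\ref{thm:main}, Theorem~\ref{thm:general}, or Proposition~\ref{prop:exponential-sandwich} with $v_n\to0$, this quantity is $o(v_n)$. Slutsky's lemma and $\operatorname{MSPE}_n^A\sim\mathcal D(F)v_n$ then transfer the limit to every $A$, jointly with $\bZ_\star$. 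On the coupled space the convergence is in $L^2$; the original sequence has the same joint law for each $n$, and this yields joint convergence in distribution.

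The remaining claims are direct consequences. Conditionally on $\boldsymbol G$, the variable $W_F$ is a linear form in an independent standard normal vector, which gives \eqref{eq:error-mixture-variance}, mean zero, and variance $\E\|\nabla F\|^2/\mathcal D(F)=1$. For \eqref{eq:normal-error-criterion}, compare characteristic functions: $\E\exp\{-\tfrac{u^2}{2}S\}=e^{-u^2/2}$ for all $u$, where $S=\|\nabla F(\bG)\|^2/\mathcal D(F)$. Uniqueness of the Laplace transform then forces $S=1$ almost surely. Equivalently, $\E S=1$ together with $\E S^2=1$ (from matching fourth moments, $\E W_F^4=3\E S^2$) gives $\Var S=0$. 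The final sentence of the statement holds because the oracle step used only $F\in W^{1,2}$, $0<\mathcal D(F)<\infty$, and $v_n\to0$.
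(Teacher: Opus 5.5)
Your proposal is correct. Its architecture matches the paper's: the same coupling $X_v=\sqrt{1-v}\,\boldsymbol G+\sqrt v\,\boldsymbol U$, the same contraction bound $\|T_vH\|_{L^2}^2=\Delta_H(v)/v\le\mathcal D(H)$ from the conditional-copy identity, and the same transfer to $\mathrm{NL},\mathrm L,\rho$ via $\E[(\widehat Y_n^A-\YO)^2]=\operatorname{MSPE}_n^A-\MSPEO=o(v_n)$ and Slutsky.

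Where you genuinely differ is the core step $T_vF\to L_F:=\nabla F(\boldsymbol G)^{\mathsf T}\boldsymbol U$ in $L^2$. You use a three-$\varepsilon$ argument. You approximate $F$ in $W^{1,2}(\gq)$ by smooth (or Hermite-polynomial) $F_k$ and Taylor-expand the approximant. You then control both $T_v(F-F_k)$ and $L_{F-F_k}$ by $\mathcal D(F-F_k)^{1/2}$, uniformly in $v$. The step you flagged as delicate is not: once you compare against $L_F$ evaluated at $\boldsymbol G$, dilation continuity is needed only for the smooth approximant. In any case the density ratio $(1-t)^{-q/2}e^{-t\|x\|^2/\{2(1-t)\}}\le(1-t)^{-q/2}$ settles it directly.

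The paper instead computes $\E[(T_vF)L_F]=K_F'(\sqrt{1-v})$ by Gaussian integration by parts in $\boldsymbol U$ together with \eqref{eq:KF-derivative-coupling}. This gives the exact formula
\[
\|T_vF-L_F\|_{L^2}^2=\frac{\Delta_F(v)}{v}+\mathcal D(F)-2K_F'\bigl(\sqrt{1-v}\bigr),
\]
so convergence reduces to the endpoint behaviour of a power series with nonnegative coefficients.

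Your route is more elementary and needs neither the pairing identity nor the derivative coupling, but it is purely qualitative. The paper's is quantitative: it expresses the error entirely through the covariance map, with no Taylor remainder, and for instance gives $\|T_vF-L_F\|_{L^2}^2\le 2K_F''(1^-)v$ when $K_F''(1^-)<\infty$.

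For the normality criterion, your Laplace-transform uniqueness argument is a valid alternative to the paper's strict-Jensen argument. Your fourth-moment variant is also sound, because $\E W_F^4=3\E S^2$ holds in $[0,\infty]$ by Tonelli and normality forces this value to be $3$.
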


The proof in Appendix~\ref{app:prediction-error} uses Gaussian
conditioning and an exact $L^2$ identity for $W^{1,2}(\gamma_q)$
transformations, obtained by Gaussian integration by parts.
The asymptotic MSPE comparisons transfer the oracle limit to the
three observed-data predictors. No invertibility assumption is required.

The proposition separates two notions that are easily conflated. The
coefficient
\[
 \mathcal D(F)
 =
 \E\|\nabla F(\boldsymbol G)\|^2
\]
determines the first-order MSPE, whereas the random quantity
$\|\nabla F(\boldsymbol G)\|^2$ determines the shape of the limiting
standardized error distribution. Hence MSPE efficiency does not by itself
imply Gaussian calibration.

When~\eqref{eq:normal-error-criterion} holds, for every fixed
$0<\eta<1$,
\[
 \widehat Y_n^A
 \pm
 z_{1-\eta/2}
 \sqrt{\operatorname{MSPE}_n^A}
\]
has marginal coverage tending to $1-\eta$. If the gradient norm is not
almost surely constant, the limiting error is instead a nonnormal Gaussian
scale mixture, so standard normal quantiles are not generally calibrated.

The log-Gaussian and skew-Gaussian fields give contrasting examples. For
the log-Gaussian transformation in~\eqref{eq:lognormal-example},
\[
 F'(z)=\omega e^{\mu+\omega z},
\]
and therefore
\begin{equation}\label{eq:lognormal-error-limit}
 W_F
 =
 e^{\omega G-\omega^2}U,
 \qquad
 G,U\ \text{independent }N(0,1).
\end{equation}
Its fourth moment is $3e^{4\omega^2}>3$, so the limiting standardized error
is non-Gaussian despite asymptotic MSPE efficiency. By contrast, the
skew-Gaussian transformation~\eqref{eq:skew-example} has weak gradient
$
 (\delta\operatorname{sign}z_1,\tau)
$
where $\operatorname{sign}(z)$ is $1$ for $z>0$, $-1$ for $z<0$, and
$0$ at zero. The gradient has constant squared norm
$\delta^2+\tau^2=\mathcal D(F)$ almost surely. Its standardized prediction errors are
therefore asymptotically standard normal whenever the corresponding
MSPE-equivalence conditions hold.

For the log-Gaussian model, the failure of Gaussianity under global MSPE
scaling can nevertheless be removed by a model-specific local normalization.
The joint convergence in Proposition~\ref{prop:prediction-error-limit} gives,
for each $A\in\{\mathrm{NL},\mathrm L,\rho\}$,
\begin{equation}\label{eq:lognormal-local-standardization}
 \frac{Y(s_\star)-\widehat Y_n^A}
      {\omega|\widehat Y_n^A|\sqrt{v_n}}
 \xrightarrow{d}
 N(0,1).
\end{equation}
The ratio may be assigned any value on the event $\widehat Y_n^A=0$;
its probability tends to zero, as shown in
Appendix~\ref{app:error-limit-examples}. Intervals based on this local
scale have asymptotically correct marginal Gaussian coverage. This normalization is specific to the log-Gaussian model:
neither it nor Proposition~\ref{prop:prediction-error-limit} establishes
convergence of the full predictive distribution conditional on the
observations.

\subsection{A regularity boundary within the latent-Gaussian class}
\label{sec:regularity-boundary}

The asymptotic criterion for rough latents cannot be used at arbitrary
smoothness, even for a monotone transformation.
\begin{proposition}\label{prop:kink}
Let $Z$ be a standardized stationary Gaussian process on a compact interval,
with Mat\'ern correlation
$\rho(h)=(1+|h|)e^{-|h|}$ and smoothness $\nu=3/2$. Set
\begin{equation}\label{eq:kink-field}
 Y(s)=Z(s)+\epsilon|Z(s)|,\qquad 0<\epsilon<1.
\end{equation}
Under Assumption~\ref{ass:design}, simple kriging with the true mean and
covariance satisfies
\begin{equation}\label{eq:kink-ratio}
 \frac{\MSPEL}{\MSPEN}
 \longrightarrow 1+\frac{2\epsilon^2}{\pi(1+\epsilon^2)}>1.
\end{equation}
The transformation is invertible and globally Lipschitz, belongs to
$W^{1,2}(\gamma_1)$, and has moments of every order, but is not in
$W^{3/2,2}(\gamma_1)$.
\end{proposition}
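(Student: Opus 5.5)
The plan is to compute the two MSPEs separately to first order in $v_n$ and take their ratio. The nonlinear side is exact. Since $0<\epsilon<1$, the map $F(z)=z+\epsilon|z|$ is strictly increasing with slopes $1\pm\epsilon$, so it is invertible and globally Lipschitz. By the last part of Proposition~\ref{prop:exact}, $\MSPEN=\MSPEO=K_F(1)-K_F(1-v_n)$. The weak derivative is $1+\epsilon\operatorname{sign}z$, so $\mathcal D(F)=\tfrac12(1+\epsilon)^2+\tfrac12(1-\epsilon)^2=1+\epsilon^2$. Proposition~\ref{prop:exact} then gives $\MSPEN\sim(1+\epsilon^2)v_n$.

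For the regularity claims I would use the Hermite expansion. The function $z$ contributes only $a_1$, and $|z|$ contributes only even coefficients, so there are no cross terms. A standard computation gives $b_{2m}(|\cdot|)^2\asymp m^{-5/2}$. Hence $\sum_\ell \ell^{s}a_\ell<\infty$ exactly when $s<3/2$: this gives $F\in W^{1,2}(\gamma_1)$ but $F\notin W^{3/2,2}(\gamma_1)$. Moments of every order are immediate from the linear growth of $F$.

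The linear side requires the observed covariance. I would use the closed form
\[
K_{|\cdot|}(u)=\tfrac{2}{\pi}\bigl(\sqrt{1-u^2}+u\arcsin u-1\bigr),
\]
so that
\[
C_Y(h)=\rho(h)+\epsilon^2K_{|\cdot|}\{\rho(h)\},
\]
with no cross term because $z$ is odd and $|z|$ is even. Put $w=1-u$ and expand at $u=1$. The $\sqrt{w}$ terms cancel, and one gets $K_{|\cdot|}(1-w)=K_{|\cdot|}(1)-w+\kappa w^{3/2}+O(w^2)$ with an explicit constant $\kappa>0$. Next use $\rho(h)=1-h^2/2+|h|^3/3+O(h^4)$, so that $1-\rho(h)\sim h^2/2$ and $\{1-\rho(h)\}^{3/2}\sim|h|^3/2^{3/2}$. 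It follows that $C_Y(h)=C_Y(0)+\alpha_2h^2+\beta|h|^3+o(|h|^3)$, with principal irregular coefficient
\[
\beta=\tfrac13(1+\epsilon^2)+\epsilon^2\kappa\,2^{-3/2}.
\]
The remainder is a smooth even function plus terms of order $|h|^{4}$ and $|h|^5$. The point is that the term $\kappa w^{3/2}$, which is invisible to $K_F'(1^-)$, produces an additional $|h|^3$ singularity of the same order as that of $\rho$. In spectral language, the high-degree Hermite terms $f^{*\ell}$ carry tail mass $\ell c|\lambda|^{-4}$ whose sum is not governed by $\sum_\ell \ell a_\ell$, because $\sum_\ell \ell^{3/2}a_\ell=\infty$. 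This is the failure of Assumption~\ref{ass:Hermite} that Proposition~\ref{prop:tail} exploits.

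I would then convert this covariance expansion into a spectral tail $f_Y(\lambda)\sim c_Y\lambda^{-4}$ with $c_Y/c=3\beta$, where $c$ is the Matérn tail constant and $\rho$ has principal irregular term $|h|^3/3$. The natural route is to write $C_Y-(3\beta)\rho$ as a function whose $|h|^3$ term vanishes and whose remaining singular terms are of order $|h|^{4}\log|h|$ or $|h|^5$, and show its Fourier transform is $O(\lambda^{-5})$. This can be done with an Abelian argument, or more concretely by handling $\sqrt{1-\rho^2}$ and $\arcsin\rho$ through their explicit singular expansions; the only delicate point is uniformity in $\lambda$. Given this tail, Stein's comparison of best linear predictors under spectral densities with asymptotically proportional tails applies. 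Here one uses the latent-residual argument behind Lemma~\ref{lem:residual-energy} with $f_Y/f\to 3\beta$ in place of $\mathcal D(F)$, which gives $\MSPEL\sim 3\beta\,v_n$. Finally,
\[
\frac{\MSPEL}{\MSPEN}\to\frac{3\beta}{1+\epsilon^2}=1+\frac{3\kappa\,2^{-3/2}\epsilon^2}{1+\epsilon^2}.
\]
Checking $3\kappa\,2^{-3/2}=2/\pi$ from the arcsine expansion would confirm the constant in \eqref{eq:kink-ratio}.

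I expect the main obstacle to be the rigorous passage from the $|h|^3$ expansion of $C_Y$ to a genuine pointwise spectral tail. That step needs control of the lower-order singular remainders of $K_{|\cdot|}\circ\rho$, and the uniform convolution bound used in Proposition~\ref{prop:tail} is exactly what fails here, so it cannot be borrowed.
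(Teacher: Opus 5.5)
Your proposal is correct and follows essentially the same route as the paper. The steps match: exact $\MSPEN=\MSPEO\sim(1+\epsilon^2)v_n$ by invertibility; $C_Y=\rho+\epsilon^2A(\rho)$ with $|h|^3$ coefficient $\tfrac{1+\epsilon^2}{3}+\tfrac{2\epsilon^2}{3\pi}$ (your check $3\kappa 2^{-3/2}=2/\pi$ does go through with $\kappa=4\sqrt2/(3\pi)$); a spectral tail $f_Y\sim\tfrac{2}{\pi}(1+\epsilon^2+2\epsilon^2/\pi)\lambda^{-4}$; and Lemma~\ref{lem:spectral-comparison}.

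The step you flag as the main obstacle is short in the paper: $C_Y(t)$ has a convergent one-sided power series at $0^+$, with no logarithmic terms, so four integrations by parts and Riemann--Lebesgue give the tail directly. One point to fix: the single latent residual of Lemma~\ref{lem:residual-energy} only gives $\MSPER/v_n\to 3\beta$, which bounds $\MSPEL$ from above. The matching lower bound for $\MSPEL$ must come from the two-sided comparison, since the oracle bound $(1+\epsilon^2)v_n$ no longer matches.
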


For the same transformation the positive theorem applies to Mat\'ern
smoothness $\nu<3/2$. Thus compatibility of latent smoothness and
transformation regularity is a genuine statistical issue, not merely a
technical feature of the sufficient convolution bound. The proof in
Appendix~\ref{app:kink} identifies the extra
high-frequency coefficient created by the kink and uses the exact observed-data
MSPE supplied by invertibility.
Moreover, the spectral densities in this example remain proportional at high
frequencies, but with a coefficient different from $\mathcal D(F)$.
Lemma~\ref{lem:spectral-comparison} therefore gives
$\MSPER/\MSPEL\to1$: using the latent correlation retains the same
nonunit ratio to the nonlinear MSPE. Thus covariance simplification can
remain efficient within the linear class while the gain from nonlinear
prediction persists. This example does not establish necessity of the full
Sobolev condition for every transformation.

\subsection{A continuous counterexample with Mat\'ern covariance}
\label{sec:covariance-counterexample}

The need for distributional assumptions is not just a technical feature of
Hermite expansions.

\begin{proposition}
\label{prop:counterexample}
There exists a stationary, continuous, non-Gaussian process on $\R$ with
finite moments of every order and covariance
\begin{equation}\label{eq:counter-covariance}
 C(h)=\frac{\lambda}{4a^3}(1+a|h|)e^{-a|h|},
                 \qquad a,\lambda>0,
\end{equation}
which is Mat\'ern with smoothness $3/2$, such that on the designs
\[
 S_N=\{j/N:-N\le j\le N,\ j\ne0\}\subset[-1,1],
                 \qquad s_\star=0,
\]
the best linear and optimal nonlinear MSPEs satisfy
\begin{equation}\label{eq:counter-rates}
 \operatorname{MSPE}_N^{\mathrm L}\asymp N^{-3},\qquad
 \operatorname{MSPE}_N^{\mathrm{NL}}=O(N^{-4}),\qquad
 \operatorname{MSPE}_N^{\mathrm{NL}}/\operatorname{MSPE}_N^{\mathrm L}\longrightarrow0.
\end{equation}
Here the design has $2N$ observations; $N$ denotes the reciprocal mesh.
The dyadic subsequence $N=2^m$ is nested and dense in $[-1,1]$.
\end{proposition}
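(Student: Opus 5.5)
We would build the process as a linear filter of compound Poisson noise, so that it has exactly the Matérn-$3/2$ second-order structure but very rigid paths. Let $L$ be a two-sided compound Poisson process on $\R$ with rate $\kappa>0$ and i.i.d.\ centered jumps $J_i$, for instance $J_i\sim N(0,\sigma^2)$ so that all moments are finite. Set $\lambda=\kappa\sigma^2$ and
\[
 X(t)=\int_{-\infty}^t g(t-u)\dd L(u)=\sum_{i}J_i\,g(t-\tau_i),
 \qquad g(t)=t e^{-at}\ind_{\{t>0\}} .
\]
Then $X$ is the stationary solution of $(\mathrm D+a)^2X=\dot L$. Campbell's formula gives spectral density $\lambda/\{2\pi(a^2+\omega^2)^2\}$, hence covariance $\lambda(4a^3)^{-1}(1+a|h|)e^{-a|h|}$, which is \eqref{eq:counter-covariance}. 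All moments are finite because the jumps have all moments and $g$ is bounded and integrable. Continuity holds since $g$ is continuous with $g(0)=0$, there are finitely many jumps on compacts, and the tail of the sum converges uniformly. The process is non-Gaussian, for example because its fourth cumulant $\kappa\E J^4\int g^4>0$. The key structural fact is that, on any interval containing no jump time, $X(t)=(A+Bt)e^{-at}$ for random coefficients $A,B$. Moreover $X$ is continuous at each jump time $\tau_i$, while $X'$ jumps by $J_i$ there.

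\textbf{Linear rate.} The best linear MSPE depends only on the covariance. It therefore equals the simple-kriging variance of a Gaussian process with covariance \eqref{eq:counter-covariance}. For Matérn smoothness $\nu=3/2$ in $d=1$ on the grid $S_N$, this variance is $\asymp N^{-2\nu}=N^{-3}$. We would obtain this from the standard fixed-domain kriging-variance bounds for spectral densities $\asymp(1+\omega^2)^{-2}$ (the bounds underlying Assumption~\ref{ass:spectra}). Alternatively, we could use the exact Markov structure of the order-two autoregressive state: conditioning on $(X,X')$ reduces the computation to the two neighbours on each side, and a short expansion then yields an explicit $cN^{-3}$ leading term.

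\textbf{Nonlinear rate.} Since the conditional mean minimizes MSPE, it suffices to exhibit one measurable predictor $\widetilde X_N$ of $X(0)$ from the data with $\E\{X(0)-\widetilde X_N\}^2=O(N^{-4})$. We would use only the four points $\pm1/N,\pm2/N$. From $X(-2/N),X(-1/N)$ we solve for the unique left curve $\ell(t)=(A_-+B_-t)e^{-at}$, and from $X(1/N),X(2/N)$ for the right curve $r(t)$. If the two curves coincide, we predict $\ell(0)$. Otherwise there is a unique root $\hat\tau$ of $\ell=r$ in $(-1/N,1/N)$; this root exists on the good event below because $(\ell-r)e^{at}$ is affine. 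We predict $\ell(0)$ if $\hat\tau>0$ and $r(0)$ otherwise, and truncate to a default value if no root lies in $(-1/N,1/N)$.

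Let $E_N$ be the event that at most one jump time falls in $(-2/N,2/N)$ and that any such jump lies in $(-1/N,1/N)$. On $E_N$ the predictor is exact. With no jump, $\ell=r=X$. With one jump at $\tau$, $\ell=X$ on $(-2/N,\tau]$ and $r=X$ on $[\tau,2/N)$, so continuity of $X$ forces $\hat\tau=\tau$ and the side selection returns $X(0)$. The complement of $E_N$ is contained in the union of two events: two or more jumps in $(-2/N,2/N)$, which has probability $O(N^{-2})$; and one jump in $[1/N,2/N)\cup(-2/N,-1/N]$, which we handle by repeating the construction with the points $\pm1/N,\pm3/N$, or with more points per side so that jumps falling between sample points are localized. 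This leaves failure probability $O(N^{-2})$.

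On the failure event the error is bounded by $C N^{-1}\sum_{\tau_i\in(-3/N,3/N)}|J_i|$ plus a term of order $\sup|X'|N^{-1}$ times the fitting constants. Hölder's inequality together with the finite moments gives an error of order $O(N^{-2})\cdot O(N^{-2})=O(N^{-4})$, up to an arbitrarily small loss that is removed by using Cauchy--Schwarz with exact Poisson counts. This yields $\operatorname{MSPE}^{\mathrm{NL}}_N=O(N^{-4})$, and the ratio in \eqref{eq:counter-rates} tends to zero.

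\textbf{Main obstacle.} The hardest step is the careful case analysis when a jump lies near or between the sampling points adjacent to the target. There we must show that a fixed finite-point local rule recovers $X(0)$ exactly except on an event of probability $O(N^{-2})$, and that the fitted curves have errors with uniformly bounded moments after scaling by $N$, since the $2\times2$ fits are well conditioned at spacing $1/N$ only after this rescaling. Finally, the grids $N=2^m$ are nested and dense, so $S_N$ fits the design framework of Assumption~\ref{ass:design}.
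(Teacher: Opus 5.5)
Your construction and overall mechanism are sound. However, the predictor as you state it is \emph{not} exact on an event of probability $\asymp N^{-1}$, and that alone would cost the rate.

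Put $h=1/N$ and $W(t)=e^{at}X(t)$. Suppose the only jump in $(-2h,2h)$ lies at $\tau\in(h,2h)$. The left line is then exact on $[-2h,\tau]\ni0$. The right secant passes through $W(h)$, which lies on that left line, so the two lines meet exactly at $\hat\tau=h\notin(-h,h)$. Your truncation therefore sends this case to the default; symmetrically, $\hat\tau=-h$ when $\tau\in(-2h,-h)$. Unless the default happens to be exact, the error is of order $h$ times an $O(1)$ random variable on an event of probability $\asymp h$. That contributes $\asymp h^3$, the same order as the linear MSPE. Your proposed remedy with $\pm1/N,\pm3/N$ has the identical defect, because the root again falls on the shared sample point.

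The repair is to drop the truncation and select by the sign of $\hat\tau$: predict $\ell(0)$ if $\hat\tau>0$ or $\ell\equiv r$, and $r(0)$ otherwise. A secant straddling a kink always meets the clean line at the inner sample point $\pm h$ of the kinked side. Hence the rule is exact whenever at most one jump lies in $(-2h,2h)$, and the failure event genuinely has probability $O(h^2)$. On that event both candidates are secant extrapolations, and second differences annihilate the affine contributions of jumps outside $(-2h,2h)$. The error is therefore at most $Ch\sum_{\tau_i\in(-2h,2h)}|J_i|$. Conditioning on the Poisson count $N_h$ gives $O(h^2)\,\E[N_h^2\ind_{\{N_h\ge2\}}]=O(h^4)$ directly, with no $\sup|X'|$ term and no H\"older loss.

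With that repair, your argument is a genuine variant of the paper's. The paper takes positive unit jumps (plain Poisson shot noise), so $W$ is convex. Both secant extrapolations $2W(\mp h)-W(\mp2h)$ then underestimate $W(0)$, and their maximum is exact as soon as one of $[-2h,0]$, $[0,2h]$ has no point in its interior. This removes kink localization and all case analysis. Your centered, signed-jump version shows that the phenomenon does not depend on positivity or convexity, at the price of the localization step.

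For the linear rate, nothing in the paper gives a rate for $v_n$; Assumption~\ref{ass:spectra} only yields $v_n\to0$. You would therefore need to import an external kriging-variance result. The paper instead proves $\operatorname{MSPE}_N^{\mathrm L}\asymp N^{-3}$ directly:
\begin{itemize}
\item the upper bound uses the average of $Y(\pm h)$, whose MSPE is $\lambda h^3/6+O(h^4)$;
\item the lower bound applies Cauchy--Schwarz in the spectral domain against $\widehat{\psi_h}$, for a $C^1$ bump with $\psi_h(0)=1$ that vanishes at every site, giving $\liminf_N N^3\operatorname{MSPE}_N^{\mathrm L}\ge\lambda/24$.
\end{itemize}
Your Markov alternative also works, but only as a lower bound. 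The state $(X,X')$ at $\pm h$ is not observed, so it does not reduce the best linear predictor to the neighbouring sites. Rather, conditioning on it gives more information than the data, and $\Var\{X(0)\mid X(\pm h),X'(\pm h)\}\asymp h^3$ for the Gaussian process with the same covariance.
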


The construction is the Poisson moving average
\[
 Y(t)=\int_{-\infty}^t(t-u)e^{-a(t-u)}\,\mathcal N(\dd u),
\]
where $\mathcal N$ is a homogeneous Poisson random measure with intensity
$\lambda\dd u$. It is therefore a
L\'evy-driven process. Non-Gaussian Mat\'ern fields driven by noise are
an established modelling alternative; see \citet{Bolin2014,WallinBolin2015}
for SPDE constructions, inference and prediction. The one-sided Poisson
filter here is not identified with those specific models: with non-Gaussian
noise, equal covariance spectral densities need not imply equal field laws.
After multiplication by a known smooth function, its paths are locally
piecewise linear. A rule based on four nearby observations exploits the
rarity of two local Poisson points, whereas the best linear MSPE is fixed
by the covariance. Appendix~\ref{app:counterexample} proves the comparison.

The positive theorem therefore does not say that every non-Gaussian field
with Mat\'ern covariance behaves as a Gaussian field for point prediction.
It says that, for the specified class of functions of finitely many Gaussian latent fields and its
regularity conditions, the additional distributional information does not
improve the leading MSPE. The counterexample has a different
local distributional mechanism despite its familiar covariance.

\subsection{Spatial probit: a discontinuous transformation}
\label{sec:probit}

Latent Gaussian threshold models for binary spatial data are studied by
\citet{HeagertyLele1998}. We consider a stationary, noise-free version with
no nugget and fix the latent variance to one:
\begin{equation}\label{eq:probit-field}
 Y(s)=\ind_{\{Z(s)>a\}},\qquad a\in\R,\qquad
 \E Z(s)=0,\qquad \Var Z(s)=\sigma^2=1.
\end{equation}
Here $Z$ has correlation $\rho$. Fixing the variance removes the latent
scale ambiguity: multiplying both $Z$ and its threshold by the same positive
constant leaves $Y$ unchanged. The threshold $a$ is fixed, and the known
mean is $p_a=\Phi(-a)$. Write $\Phi_2(x,y;u)$ and $\phi_2(x,y;u)$
for the bivariate standard normal distribution function and density with
correlation $u$.

\begin{proposition}
\label{prop:probit-boundary}
For~\eqref{eq:probit-field}, let $F_a(z)=\ind_{\{z>a\}}$ and
$K_a=K_{F_a}$. Then
\begin{equation}\label{eq:probit-covariance-map}
 K_a(u)=\Phi_2(-a,-a;u)-p_a^2,\qquad
 K_a(1)=p_a(1-p_a),\qquad C_Y(h)=K_a(\rho(h)).
\end{equation}
The endpoint values are understood by continuity. Although $F_a$ is bounded,
$F_a\notin W^{1,2}(\gamma_1)$ and $K_a'(1^-)=\infty$. For every finite
nonsingular latent design,
\begin{equation}\label{eq:probit-oracle}
 \MSPEO=K_a(1)-K_a(1-v_n).
\end{equation}
If the latent spectral and dense-design assumptions hold, then
\begin{equation}\label{eq:probit-oracle-asymptotic}
 \MSPEO\sim\frac{\phi(a)}{\sqrt{\pi}}\,v_n^{1/2}.
\end{equation}
At $a=0$, the covariance and oracle MSPE simplify to
\begin{equation}\label{eq:probit-symmetric}
 C_Y(h)=\frac{1}{2\pi}\arcsin\{\rho(h)\},\qquad
 \MSPEO=\frac{1}{2\pi}\arccos(1-v_n)
       \sim\frac{1}{\pi\sqrt{2}}\,v_n^{1/2}.
\end{equation}
\end{proposition}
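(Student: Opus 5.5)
We prove the claims of Proposition~\ref{prop:probit-boundary} in order, starting with the covariance map. With $F_a(z)=\ind_{\{z>a\}}$ we have
\[
\E\{F_a(G)F_a(G_u)\}=\PP(G>a,G_u>a)=\Phi_2(-a,-a;u),
\]
by symmetry of the bivariate normal law. Subtracting $p_a^2$ gives $K_a(u)$. At $u=1$ the same limit by continuity yields $\Phi(-a)-p_a^2=p_a(1-p_a)$, and $C_Y=K_a\circ\rho$ is just \eqref{eq:observed-covariance}. The oracle identity \eqref{eq:probit-oracle} then needs no new work. Proposition~\ref{prop:exact} only requires $F\in L^2(\gamma_1)$, which holds since $F_a$ is bounded, and a nonsingular latent design.

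For the irregularity we use Plackett's identity,
\[
\partial_u\Phi_2(x,y;u)=\phi_2(x,y;u),
\qquad\text{so}\qquad
K_a'(u)=\phi_2(-a,-a;u)=\frac{1}{2\pi\sqrt{1-u^2}}\exp\Bigl\{-\frac{a^2}{1+u}\Bigr\}.
\]
This tends to $\infty$ as $u\uparrow1$, so $K_a'(1^-)=\infty$. By \eqref{eq:energies} and the characterization \eqref{eq:gaussian-sobolev} at $s=1$, this gives $F_a\notin W^{1,2}(\gamma_1)$. A direct check is also available: a weak derivative of $F_a$ would have to be the Dirac mass $\delta_a$, which is not an $L^2$ function.

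For the asymptotics, integrate $K_a'$ over $[1-v,1]$. With $1-u^2=(1-u)(1+u)$ and $1+u\to2$,
\[
\Delta_{F_a}(v)=\int_{1-v}^1 K_a'(u)\dd u\sim\frac{e^{-a^2/2}}{2\pi\sqrt2}\int_0^v t^{-1/2}\dd t=\frac{e^{-a^2/2}\sqrt{2v}}{2\pi}.
\]
Since $\phi(a)=e^{-a^2/2}/\sqrt{2\pi}$, this equals $\phi(a)\,v^{1/2}/\sqrt{\pi}$. The step to justify is that the factor $\exp\{-a^2/(1+u)\}(1+u)^{-1/2}$ is continuous near $u=1$. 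This lets us apply a dominated or uniform ratio limit for $t^{-1/2}$ integrals. The limit is then applied at $v=v_n$, and $v_n\to0$ under Assumptions~\ref{ass:spectra} and~\ref{ass:design}, as stated before \eqref{eq:latent-variance}.

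For $a=0$, Sheppard's formula gives $\Phi_2(0,0;u)=\tfrac14+\arcsin(u)/(2\pi)$. Hence $K_0(u)=\arcsin(u)/(2\pi)$ and
\[
\MSPEO=\frac{1}{2\pi}\Bigl\{\frac{\pi}{2}-\arcsin(1-v_n)\Bigr\}=\frac{1}{2\pi}\arccos(1-v_n).
\]
The expansion $\arccos(1-v)\sim\sqrt{2v}$ gives the constant $1/(\pi\sqrt2)$, which agrees with the general formula at $a=0$. The only delicate point in the whole argument is the endpoint-singular integral asymptotic, and it is elementary.
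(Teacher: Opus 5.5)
Your proposal is correct and follows essentially the same route as the paper: the orthant-probability form of $K_a$, Plackett's identity for $K_a'$ (your $\phi_2(-a,-a;u)$ equals the paper's $\phi_2(a,a;u)$ by symmetry of the bivariate density), the integrable $t^{-1/2}$ endpoint singularity giving $\Delta_{F_a}(v)\sim\phi(a)v^{1/2}/\sqrt{\pi}$ and $K_a'(1^-)=\infty$, the Hermite characterization giving $F_a\notin W^{1,2}(\gamma_1)$, and Proposition~\ref{prop:exact} for the oracle identity. The only cosmetic difference is at $a=0$: you quote Sheppard's formula, whereas the paper integrates $K_0'(u)=1/(2\pi\sqrt{1-u^2})$ from $K_0(0)=0$; these amount to the same computation.
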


The proof and the corresponding small-lag covariance behaviour are given in
Appendix~\ref{app:probit}. The construction already has the form
\eqref{eq:model} with $q=1$; it is transformation regularity, rather than
the representation, that fails. In particular, the ordinary derivative of
an indicator is zero away from its threshold, but this is not a
square-integrable weak derivative across the jump.

For this binary target, the optimal observed-data predictor is the
conditional success probability
$\YN=\PP\{Y(s_\star)=1\mid\mathcal G_n\}$, not a hard classification
rule. The oracle instead observes the latent Gaussian values and need not
be asymptotically equivalent in MSPE. The finite-sample hierarchy
$\MSPEO\le\MSPEN\le\MSPEL\le\MSPER$ remains valid. However, neither
$\MSPEL/\MSPEN\to1$ nor $\MSPER/\MSPEN\to1$ is established by this
example, and no further efficiency conclusion is asserted for it. The
square-root oracle scale alone does not determine the MSPE ratios.

\section{Numerical examples}\label{sec:numerical}

We illustrate the finite-sample prediction comparison with two scalar
invertible transformations: log-Gaussian and sinh--arcsinh. Their
invertibility is useful because observing the transformed field at a finite
set of locations is equivalent to observing the latent Gaussian field at the
same locations. Consequently, the optimal observed-data MSPE is available
exactly through Proposition~\ref{prop:exact}, and no Monte Carlo approximation
of the conditional mean is required.

The log-Gaussian model provides a fully analytic benchmark, whereas the
sinh--arcsinh model gives a second smooth transformation with separate
asymmetry and tail parameters. The examples are deliberately scalar so that
the nonlinear benchmark is exact; the multilatent scope of the theory is
illustrated analytically in Section~\ref{sec:examples}.

\subsection{Design and deterministic MSPE comparison}

The two experiments use different spatial dimensions. The log-Gaussian
experiment is carried out on the fixed domain $D=[0,1]$, whereas the
sinh--arcsinh experiment uses $D=[0,1]^2$. In both cases, the standardized
latent field has Mat\'ern smoothness $\nu=1/2$, hence exponential correlation
\begin{equation}\label{eq:numerical-exponential-correlation}
 \rho_\alpha(h)
 =
 \exp\{-\|h\|/\alpha\},
 \qquad
 \alpha\in\{0.10,0.30,0.60\}.
\end{equation}
Here $\|\cdot\|$ denotes absolute value in dimension one and Euclidean
distance in dimension two. There is no nugget, and all model parameters are
treated as known.

For the log-Gaussian experiment, the prediction target is $s_\star=1/3$ and
we use the nested dyadic designs
\begin{equation}\label{eq:numerical-design-log}
 S_{n_j}^{\mathrm{LG}}
 =
 \{i2^{-j}:i=0,\ldots,2^j\},
 \qquad
 n_j=2^j+1,
 \qquad
 j=4,\ldots,12.
\end{equation}
Thus
$
 n_j
 \in
 \{17,33,65,129,257,513,1025,2049,4097\},
$
and the target is never observed.

For the sinh--arcsinh experiment, the prediction target is
$s_\star=(1/3,2/5)$ and we use the nested tensor-product designs
\begin{equation}\label{eq:numerical-design-sas}
 S_{n_j}^{\mathrm{SAS}}
 =
 \left\{
 (i2^{-j},k2^{-j}):i,k=0,\ldots,2^j
 \right\},
 \qquad
 n_j=(2^j+1)^2,
 \qquad
 j=2,\ldots,6.
\end{equation}
Hence
$
 n_j
 \in
 \{25,81,289,1089,4225\},
$
and again the target is never observed. Both sequences correspond to
fixed-domain asymptotics.

For a fixed design and range parameter $\alpha$, let
$\boldsymbol w_{\rho,n}$ and $\boldsymbol w_{Y,n}$ denote the
latent-correlation and true-covariance kriging weights defined in
Section~\ref{sec:setup}. The latent conditional variance and the two linear
MSPEs are
\begin{align}
 v_n
 &=
 1-\boldsymbol r_n^{\mathsf T}\boldsymbol w_{\rho,n},
 \notag\\
 \MSPEL
 &=
 K_F(1)
 -
 \boldsymbol c_{Y,n}^{\mathsf T}\boldsymbol w_{Y,n},
 \label{eq:numerical-linear-risks}\\
 \MSPER
 &=
 K_F(1)
 -
 2\boldsymbol w_{\rho,n}^{\mathsf T}\boldsymbol c_{Y,n}
 +
 \boldsymbol w_{\rho,n}^{\mathsf T}
 C_{Y,n}\boldsymbol w_{\rho,n}.
 \notag
\end{align}
These quantities are obtained by deterministic finite-dimensional linear
algebra.

Because both transformations considered below are scalar and invertible,
$
 \sigma\{Y(S_n)\}
 =
 \sigma\{Z(S_n)\}.
$
Therefore
\begin{equation}\label{eq:numerical-nonlinear-risk}
 \MSPEN
 =
 \MSPEO
 =
 K_F(1)-K_F(1-v_n)
\end{equation}
holds exactly for every design and every $\alpha$. Thus all comparisons below
use the true optimal predictor based on the observed non-Gaussian data as
benchmark.

\subsection{Log-Gaussian and sinh--arcsinh models}

For the log-Gaussian field we take
\begin{equation}\label{eq:numerical-log}
 F(z)
 =
 \exp(\mu+\omega z),
 \qquad
 (\mu,\omega)=(0,2).
\end{equation}
The choice $\omega=2$ gives a strongly right-skewed marginal distribution,
making finite-sample differences between linear and nonlinear prediction
visible. Since the covariance map is available in closed form,
\eqref{eq:lognormal-exact-risk} gives
\begin{equation}\label{eq:numerical-log-risk}
 \MSPEN
 =
 e^{2\mu+2\omega^2}
 \left\{
 1-e^{-\omega^2v_n}
 \right\}.
\end{equation}
Thus all MSPEs in this experiment are available analytically up to the
finite-dimensional kriging calculations.

For the sinh--arcsinh field we use
\begin{equation}\label{eq:numerical-sas}
 F(z)
 =
 \mu+\sigma
 \sinh\left\{
 \frac{\asinh z+\varepsilon}{\kappa}
 \right\},
 \qquad
 (\mu,\sigma,\varepsilon,\kappa)
 =
 (0,1,0.75,0.80).
\end{equation}
This gives a smooth strictly increasing transformation with both asymmetry
and tail modification. Its marginal density is available analytically by
change of variables, whereas its covariance map is evaluated numerically.

Let $(z_a,w_a)_{a=1}^M$ denote Gauss--Hermite nodes and weights for
integration with respect to standard Gaussian measure, and put
\[
 \widehat m_M
 =
 \sum_{a=1}^M w_aF(z_a).
\]
For $-1<u<1$, we evaluate
\begin{equation}\label{eq:numerical-sas-quadrature}
 K_{F,M}(u)
 =
 \sum_{a,b=1}^M
 w_aw_bF(z_a)
 F\left(
 uz_a+\sqrt{1-u^2}\,z_b
 \right)
 -
 \widehat m_M^{\,2}.
\end{equation}
At $u=1$, the corresponding one-dimensional Gaussian expectation is used
directly.

We take $M=40$. As a numerical stability check, all configurations were
recomputed with $M=80$ and $M=120$. Relative to $M=40$, either RMSE ratio
defined below changed by less than $7\times10^{-7}$. Thus quadrature error is
negligible at the scale of the prediction comparisons.

\subsection{RMSE ratios and finite-sample gains}

For $A\in\{\mathrm{NL},\mathrm L,\rho\}$, define
\begin{equation}\label{eq:numerical-rmse}
 \operatorname{RMSE}_n^A
 :=
 \left(
 \operatorname{MSPE}_n^A
 \right)^{1/2}.
\end{equation}
We report the two ratios
\begin{equation}\label{eq:numerical-ratios}
 R_{\mathrm L}(n)
 :=
 \frac{\operatorname{RMSE}_n^{\mathrm L}}
      {\operatorname{RMSE}_n^{\mathrm{NL}}}
 =
 \left(
 \frac{\MSPEL}{\MSPEN}
 \right)^{1/2},
 \qquad
 R_{\rho}(n)
 :=
 \frac{\operatorname{RMSE}_n^{\rho}}
      {\operatorname{RMSE}_n^{\mathrm{NL}}}
 =
 \left(
 \frac{\MSPER}{\MSPEN}
 \right)^{1/2}.
\end{equation}
Proposition~\ref{prop:finite-gains} gives
\begin{equation}\label{eq:numerical-rmse-order}
 1
 \le
 R_{\mathrm L}(n)
 \le
 R_{\rho}(n)
\end{equation}
for every finite design.

The two departures in~\eqref{eq:numerical-rmse-order} have a direct
interpretation. Using the finite-sample gains from
Proposition~\ref{prop:finite-gains},
\begin{equation}\label{eq:numerical-rmse-gains}
 R_{\mathrm L}(n)^2-1
 =
 \frac{G_{\mathrm N,n}}{\MSPEN},
 \qquad
 R_{\rho}(n)^2-R_{\mathrm L}(n)^2
 =
 \frac{G_{\mathrm C,n}}{\MSPEN}.
\end{equation}
Thus the distance of $R_{\mathrm L}(n)$ from one measures the relative cost
of restricting prediction to the linear class, whereas the separation between
$R_{\rho}(n)$ and $R_{\mathrm L}(n)$ measures the additional cost of replacing
the true observed covariance by the latent Gaussian correlation.
Both examples also satisfy the finite-sample bound of
Proposition~\ref{prop:exponential-sandwich}:
\begin{equation}\label{eq:numerical-rmse-bound}
 1
 \le
 R_{\mathrm L}(n)
 \le
 R_{\rho}(n)
 \le
 \left\{
 \frac{\mathcal D(F)v_n}
 {K_F(1)-K_F(1-v_n)}
 \right\}^{1/2}.
\end{equation}
This bound is not used to construct the numerical curves; it provides an
analytic finite-sample reference for the same configurations.

Figure~\ref{fig:numerical-all} summarizes the results. The two experiments
illustrate different finite-sample decompositions of the same asymptotic
equivalence.
For the log-Gaussian field, both simplifications have a visible effect at
small sample sizes. With $\alpha=0.10$ and $n=17$,
\[
 R_{\mathrm L}(n)\approx1.0589,
 \qquad
 R_{\rho}(n)\approx1.1065.
\]
Thus both the restriction to linear prediction and the additional replacement
of the true covariance by the latent correlation contribute appreciably to
the finite-sample MSPE. By $n=4097$, both ratios are approximately $1.0011$.

For the sinh--arcsinh field, the situation is different. The two linear
predictors are almost indistinguishable: their RMSE ratios differ by less than
$4.5\times10^{-4}$ over all configurations considered. Hence essentially all
of the visible finite-sample discrepancy from the optimal predictor is due to
the restriction to linear prediction, while replacing the true observed
covariance by the latent correlation has almost no additional effect.

Convergence along the design sequence need not be monotone. For example,
when $\alpha=0.10$, the sinh--arcsinh true-covariance ratio increases from
approximately $1.0040$ at $n=25$ to $1.0100$ at $n=289$, and then decreases
to $1.0045$ at $n=4225$. This is consistent with
Theorem~\ref{thm:main}, which concerns the limiting ratios rather than their
finite-sample monotonicity.

The two experiments differ in both transformation and spatial dimension and
are therefore not intended to isolate a dimension effect.

\begin{figure}[htbp]
\centering

\begin{minipage}[t]{0.485\textwidth}
\centering
\numericalpanel{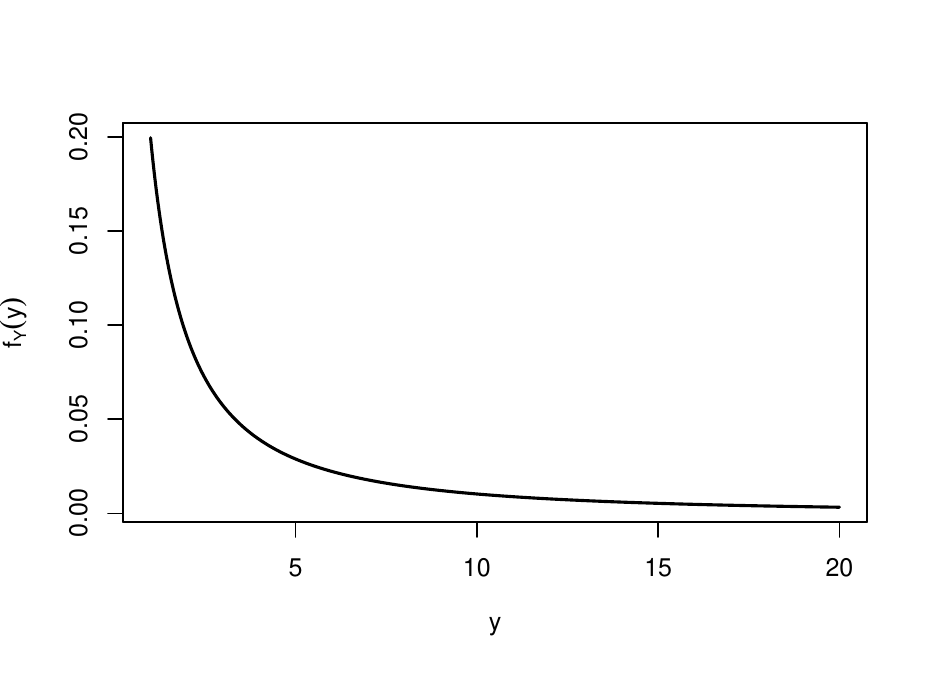}
{Log-Gaussian marginal density}
\par\smallskip
{\small (a) Log-Gaussian marginal density}
\end{minipage}\hfill
\begin{minipage}[t]{0.485\textwidth}
\centering
\numericalpanel{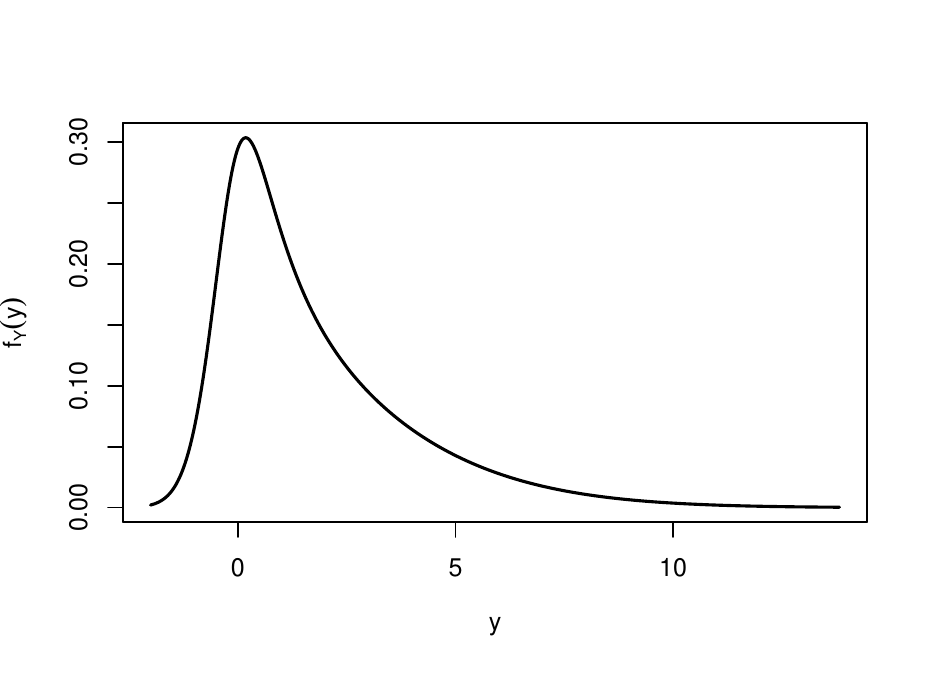}
{Sinh--arcsinh marginal density}
\par\smallskip
{\small (b) Sinh--arcsinh marginal density}
\end{minipage}

\medskip

\begin{minipage}[t]{0.485\textwidth}
\centering
\numericalpanel{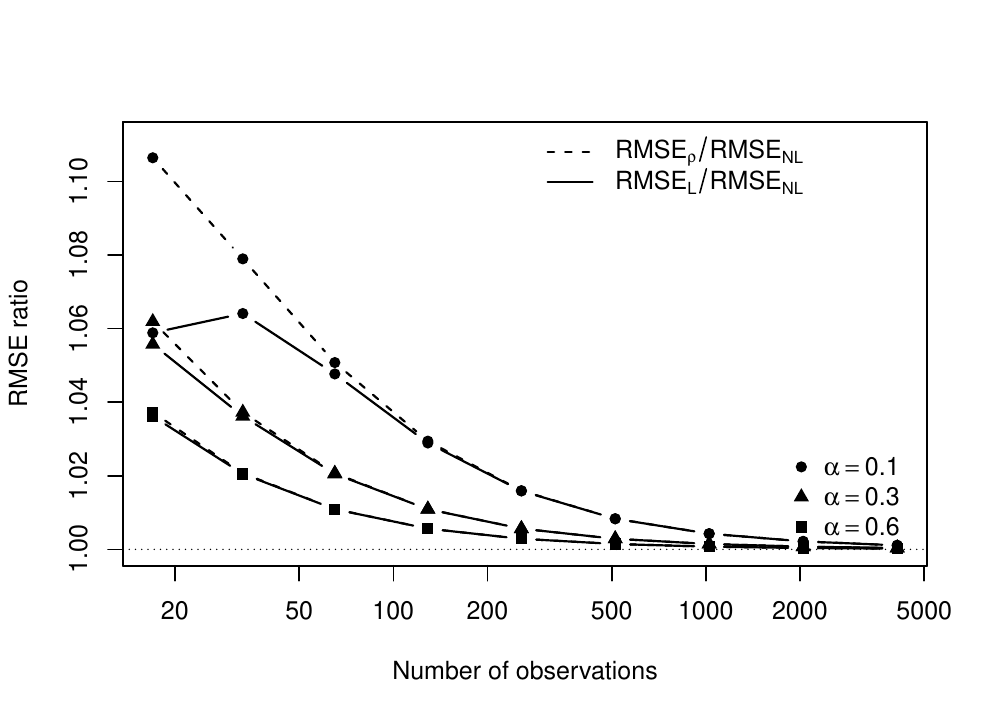}
{Log-Gaussian RMSE ratios}
\par\smallskip
{\small (c) Log-Gaussian, $D=[0,1]$}
\end{minipage}\hfill
\begin{minipage}[t]{0.485\textwidth}
\centering
\numericalpanel{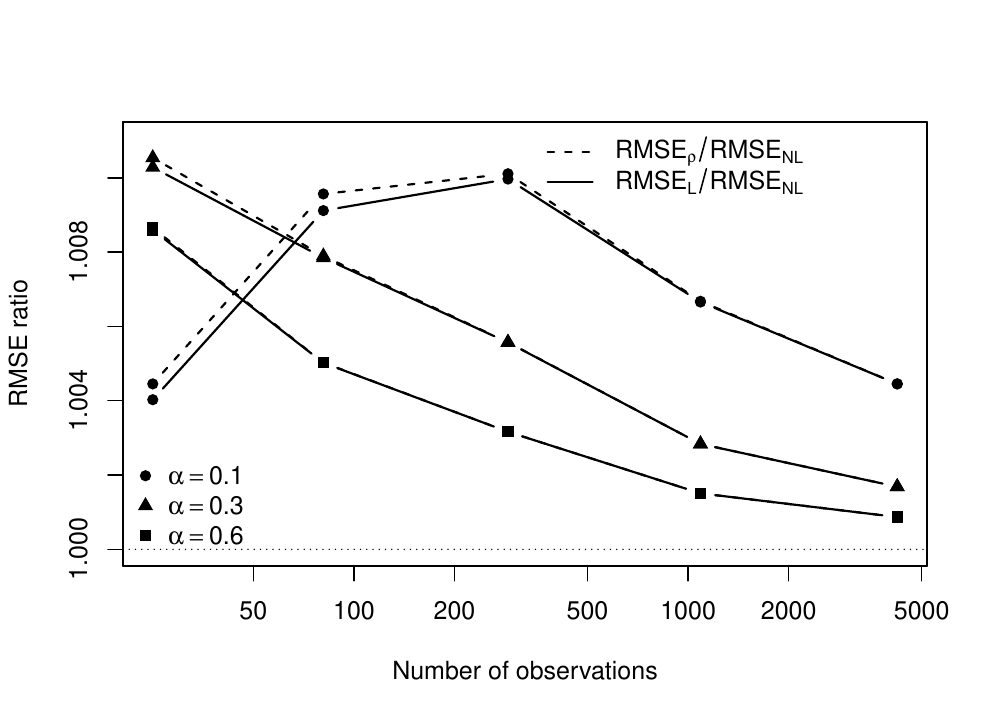}
{Sinh--arcsinh RMSE ratios}
\par\smallskip
{\small (d) Sinh--arcsinh, $D=[0,1]^2$}
\end{minipage}

\caption{Numerical illustrations for the log-Gaussian and sinh--arcsinh
fields
with underlying correlation
$
\rho_\alpha(h)
 =
 \exp\{-\|h\|/\alpha\}$. Panels~(a)--(b) show the marginal densities, and panels~(c)--(d)
show RMSE ratios relative to the optimal observed-data predictor.
Solid lines represent $R_{\mathrm L}(n)$ and dashed lines represent
$R_{\rho}(n)$. Circle, triangle, and square symbols correspond respectively
to $\alpha=0.10$, $0.30$, and $0.60$; the dotted horizontal line is one.}
\label{fig:numerical-all}
\end{figure}

\FloatBarrier
\section{Discussion}\label{sec:discussion}

We studied two simplifications in non-Gaussian spatial prediction: replacing
the optimal conditional mean by kriging, and replacing the true covariance of
the observed field by the latent Gaussian correlation. For the Gaussian-functional
class considered here, under compatible transformation regularity,
\[
 \MSPEO\sim\MSPEN\sim\MSPEL\sim\MSPER
 \sim\mathcal D(F)v_n.
\]
Thus latent-correlation kriging, applied directly to the observed non-Gaussian
data, attains the same first-order MSPE as the optimal observed-data
conditional mean. Correct covariance modelling and nonlinear prediction may
still improve finite-sample performance, but their gains are asymptotically
negligible under the stated conditions.

The quantity $\mathcal D(F)=\E\|\nabla F(\boldsymbol G)\|^2$ plays a central
role: it is both the endpoint slope of the covariance map and the coefficient
governing the observed high-frequency spectral behaviour and the leading
prediction MSPE. The Gaussian-Sobolev assumptions also ensure that the
transformed field inherits the corresponding mean-square regularity of the
latent field. For rough latent fields, a finite expected squared weak gradient
is sufficient under radial monotonicity of the spectral density; smoother
latent fields require higher transformation regularity. The required order is
independent of both the spatial dimension and the number of latent components,
although finite-sample constants need not be.

The prediction-error limit shows that MSPE efficiency and Gaussian uncertainty
calibration are distinct. Standardized errors converge in general to a
Gaussian scale mixture whose random variance depends on
$\|\nabla F(\boldsymbol G)\|^2$. A standard normal limit occurs precisely when
this gradient norm is almost surely constant. The log-Gaussian and
skew-Gaussian examples illustrate the distinction, while the former also
admits a model-specific local normalization restoring marginal Gaussian
calibration.

Relative to the scalar Mat\'ern setting of \citet{HiranoYajima2013}, our
results allow multiple latent fields and noninvertible transformations, cover
broader spectral classes, and include an explicit finite-sample bound for
exponential correlation. In that case the whole MSPE hierarchy is bounded
between $\Delta_F(v_n)$ and $\mathcal D(F)v_n$, so asymptotic efficiency
requires only $v_n\to0$, including designs that accumulate locally around the
prediction target.

The counterexamples clarify the scope of the theory. An invertible Lipschitz
transformation can retain a nonvanishing nonlinear-prediction advantage when
its regularity is insufficient relative to latent smoothness, and a
non-Gaussian process with Mat\'ern covariance can admit a nonlinear predictor
with a strictly faster MSPE rate than any linear predictor. The spatial-probit
example further shows that the Gaussian-functional representation alone is not
enough when the transformation lacks the required Sobolev regularity.

Our numerical examples provide tractable finite-sample illustrations rather
than exhaustive coverage of the multilatent class. Throughout, the mean and
covariance parameters are treated as known and there is no observation error.
Allowing a nugget discontinuity in the common latent correlation leads to a
different asymptotic prediction problem, because the prediction error then
contains a positive irreducible component. Relative-efficiency comparisons
must therefore be formulated in terms of excess rather than total MSPE.
Whether the first-order equivalence established here survives under such a
discontinuity is left for future work.

Several of the non-Gaussian models considered here are implemented
in the R package \texttt{GeoModels} \citep{GeoModels}, which provides
kriging based on the covariance of the observed process.

\clearpage

\appendix
\section*{Organization of the appendices}

All proofs, model-specific calculations, and extensions are collected here.
Appendix~\ref{app:Hermite} first proves the finite-sample comparison, then
establishes the Hermite, weak-derivative, and spectral identities used to
prove the exact oracle MSPE and preservation of mean-square regularity.
Appendix~\ref{app:convolutions} proves the observed spectral-tail result.
Appendix~\ref{app:comparison} establishes the spectral comparison of
kriging MSPEs, following the perspective of \citet{Stein1993} but proving
directly the result used here. Appendix~\ref{app:main-proofs} follows the
order of the main statements: Theorem~\ref{thm:main},
Corollaries~\ref{cor:latent-Matern} and~\ref{cor:rough},
Proposition~\ref{prop:exponential-sandwich}, and Theorem~\ref{thm:general}.
Appendix~\ref{app:examples} verifies the main examples and collects
supplementary smooth and polynomial constructions. Appendix~\ref{app:prediction-error} proves
Proposition~\ref{prop:prediction-error-limit} and the local log-Gaussian
normalization. Appendix~\ref{app:boundaries} contains the proofs of
Propositions~\ref{prop:kink}, \ref{prop:counterexample}, and
\ref{prop:probit-boundary}, again in their order in the main text.
Appendix~\ref{app:additional-models} collects the remaining extensions.

\section{Finite-sample comparison, Gaussian identities, and regularity}
\label{app:Hermite}

Throughout the appendices, positive constants denoted by $C$ may change
between occurrences. Dependence relevant to a uniform bound is stated
explicitly. For a scalar or vector random variable $V$ and $1\le r<\infty$,
we write $\|V\|_{L^r}=(\E\|V\|^r)^{1/r}$, with absolute value in the
scalar case. A subscript $\varepsilon$ on an expectation or probability
means that only the independent random signs are averaged over, with the
remaining variables held fixed.

The proofs of Propositions~\ref{prop:finite-gains}, \ref{prop:exact}, and
\ref{prop:ms-regularity} follow their order in the main text. The auxiliary
Gaussian and spectral identities are established before the proofs that use
them.

\subsection{Finite-sample MSPE ordering and gain decomposition}
\label{app:finite-gains}

\begin{proof}[Proof of Proposition~\ref{prop:finite-gains}]
Since
$
 \YN
 =
 \E\{Y(s_\star)\mid\mathcal G_n\},
$
the residual $Y(s_\star)-\YN$ is orthogonal in $L^2$ to every
$\mathcal G_n$-measurable square-integrable random variable. Both $\YL$ and
$\YN$ are $\mathcal G_n$-measurable, so in particular
$Y(s_\star)-\YN$ is orthogonal to $\YL-\YN$. Hence
\[
 Y(s_\star)-\YL
 =
 \{Y(s_\star)-\YN\}
 +
 (\YN-\YL)
\]
is an orthogonal decomposition. Squaring and taking expectations gives
\[
 \MSPEL-\MSPEN
 =
 \E[(\YL-\YN)^2]
 =
 G_{\mathrm N,n}.
\]

For the second comparison, $\YL$ is the best linear predictor based on the
observations with the mean fixed at its true value. The kriging normal
equations therefore imply that the residual $Y(s_\star)-\YL$ is orthogonal
to every centered linear combination of the observations. Moreover,
\[
 \YR-\YL
 =
 (\boldsymbol w_{\rho,n}-\boldsymbol w_{Y,n})^{\mathsf T}
 (\boldsymbol Y_n-m_Y\boldsymbol1_n),
\]
so $\YR-\YL$ is precisely such a centered linear combination. Consequently,
\[
 Y(s_\star)-\YR
 =
 \{Y(s_\star)-\YL\}
 +
 (\YL-\YR)
\]
is again an orthogonal decomposition, and therefore
\[
 \MSPER-\MSPEL
 =
 \E[(\YR-\YL)^2]
 =
 G_{\mathrm C,n}.
\]
Using the covariance matrix of $\boldsymbol Y_n$ gives
\[
 G_{\mathrm C,n}
 =
 (\boldsymbol w_{\rho,n}-\boldsymbol w_{Y,n})^{\mathsf T}
 C_{Y,n}
 (\boldsymbol w_{\rho,n}-\boldsymbol w_{Y,n}).
\]
Since $C_{Y,n}$ is positive definite, this quantity is zero if and only if
$\boldsymbol w_{\rho,n}=\boldsymbol w_{Y,n}$, which gives the stated
strictness condition. The MSPE ordering and
\eqref{eq:two-gain-decomposition} follow immediately.
\end{proof}

\subsection{Hermite representation and Gaussian weak derivatives}
\label{app:Hermite-Sobolev}

We use the normalized Hermite basis and coefficients defined in
\eqref{eq:Hermite}--\eqref{eq:KF-series}. In particular,
$b_{\boldsymbol0}=m_Y$ and $\sum_{\ell\ge1}a_\ell=\Var\{F(\bG)\}$.
The following Gaussian-chaos identities make the derivative criteria
explicit; see also \citet{Janson1997}.

\begin{lemma}\label{lem:derivatives}
Let $F\in L^2(\gq)$ have the expansion~\eqref{eq:Hermite}.
Then $F\in W^{1,2}(\gq)$ if and only if
$\sum_{\bk}|\bk|b_{\bk}^2<\infty$, in which case
\eqref{eq:gradient-energies} holds, with
$\mathcal D_j(F)=\sum_{\bk}k_j b_{\bk}^2$ and
$\mathcal D(F)=\sum_{\ell\ge1}\ell a_\ell$. More generally, if all weak derivatives
through integer order $M$ are in $L^2(\gq)$, then
\begin{equation}\label{eq:Hermite-Sobolev}
 \sum_{\bk}(1+|\bk|)^M b_{\bk}^2<\infty.
\end{equation}
\end{lemma}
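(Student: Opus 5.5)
The plan is to work entirely in the Hermite (Wiener chaos) basis and to use the two standard commutation identities for one-dimensional Hermite polynomials: $h_k'(z)=\sqrt{k}\,h_{k-1}(z)$, and the Gaussian integration-by-parts formula $\E\{g'(G)h_{k-1}(G)\}=\sqrt{k}\,\E\{g(G)h_k(G)\}$ for $g$ with a Gaussian-square-integrable weak derivative. Here $W^{1,2}(\gq)$ is understood as the space of $F\in L^2(\gq)$ whose distributional partial derivatives on $\R^q$ are functions in $L^2(\gq)$. The proof then has two directions: computing the derivative of a finite chaos truncation, and closing in the weak topology.

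\emph{Sufficiency.} Suppose $\sum_{\bk}|\bk|b_{\bk}^2<\infty$, and let $F_N=\sum_{|\bk|\le N}b_{\bk}\Psi_{\bk}$. Then $F_N$ is a polynomial and
\[
\partial_jF_N=\sum_{|\bk|\le N}b_{\bk}\sqrt{k_j}\,\Psi_{\bk-e_j}.
\]
The $\Psi_{\bk-e_j}$ are orthonormal, so $\E\{\partial_jF_N-\partial_jF_M\}^2=\sum_{M<|\bk|\le N}k_jb_{\bk}^2$. Hence $\partial_jF_N$ is Cauchy in $L^2(\gq)$ and converges to some $g_j$ with $\E g_j^2=\sum_{\bk}k_jb_{\bk}^2$.

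To identify $g_j$ as the weak derivative of $F$, test against $\varphi\in C_c^\infty(\R^q)$. Since $\varphi$ has compact support, the Gaussian density is bounded above and below on $\operatorname{supp}\varphi$. Therefore $L^2(\gq)$ convergence implies $L^2(\operatorname{supp}\varphi,\dd z)$ convergence, and we may pass to the limit in $\int F_N\partial_j\varphi\dd z=-\int\partial_jF_N\,\varphi\dd z$. Summing over $j$ gives $\mathcal D(F)=\sum_{\bk}|\bk|b_{\bk}^2=\sum_\ell\ell a_\ell$, which is \eqref{eq:gradient-energies}.

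\emph{Necessity.} Suppose $F\in W^{1,2}(\gq)$ with $\partial_jF\in L^2(\gq)$. The key step is the integration-by-parts identity
\[
\E\{\partial_jF(\bG)\Psi_{\bk-e_j}(\bG)\}=\sqrt{k_j}\,b_{\bk},
\]
that is, $\int \partial_jF\,\Psi_{\bk-e_j}\phi_q\dd z=-\int F\,\partial_j(\Psi_{\bk-e_j}\phi_q)\dd z$. It uses $-\partial_j(h_{k-1}\phi)=\sqrt{k}\,h_k\phi$, with $\phi_q$ the Gaussian density. Granting this identity, Bessel's inequality gives $\sum_{\bk}k_jb_{\bk}^2\le\E(\partial_jF)^2<\infty$. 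Combining with the sufficiency part shows equality.

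The identity is the expected obstacle, because $\Psi\phi_q$ is not compactly supported while $F$ is only locally a weak-differentiable function. I would handle it with cutoffs $\chi_R(z)=\chi(z/R)$ applied to the test function $\Psi_{\bk-e_j}\phi_q\chi_R$. The extra term is $\int F\,\Psi\phi_q\,\partial_j\chi_R$, which is bounded by $R^{-1}\|F\|_{L^2(\gq)}\|\Psi\|_{L^2(\gq)}$ via Cauchy--Schwarz and tends to zero. The remaining terms converge by dominated convergence, since $F,\partial_jF\in L^2(\gq)$ and polynomials are in every $L^p(\gq)$.

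\emph{Higher order.} If all weak derivatives through order $M$ lie in $L^2(\gq)$, iterate the identity: for a multi-index $\boldsymbol\alpha$ with $|\boldsymbol\alpha|\le M$,
\[
\E\{\partial^{\boldsymbol\alpha}F\,\Psi_{\bk-\boldsymbol\alpha}\}=\Bigl(\prod_j\sqrt{k_j(k_j-1)\cdots(k_j-\alpha_j+1)}\Bigr)\,b_{\bk}.
\]
This follows by applying the first-order case to $\partial^{\boldsymbol\alpha-e_j}F\in W^{1,2}$. Bessel then gives $\sum_{\bk}\prod_j(k_j)_{\alpha_j}b_{\bk}^2<\infty$, where $(k)_{\alpha}$ is the falling factorial.

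To conclude, sum over all $\boldsymbol\alpha$ with $|\boldsymbol\alpha|\le M$. For $|\bk|\ge 2M$ some coordinate satisfies $k_j\ge|\bk|/q$. Taking $\boldsymbol\alpha=Me_j$ gives $(k_j)_M\ge c_{q,M}|\bk|^M$ once $|\bk|\ge 2qM$; the finitely many smaller $\bk$ are trivially summable. This yields \eqref{eq:Hermite-Sobolev}.
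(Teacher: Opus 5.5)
Your proof is correct and follows essentially the same route as the paper. For necessity, both use the Gaussian integration-by-parts identity $\E\{\partial_jF\,\Psi_{\bk-e_j}\}=\sqrt{k_j}\,b_{\bk}$, justified by removing a cutoff, followed by Parseval/Bessel; for sufficiency, both show that the Hermite truncations and their derivatives are Cauchy and then invoke closedness of the weak derivative on compacts. The only difference is the order-$M$ step: you bound $(1+|\bk|)^M$ by the pure falling factorials $(k_j)_M$ via pigeonhole, whereas the paper sums all order-$M$ derivatives with multinomial weights and uses the Vandermonde identity $\sum_{|\boldsymbol a|=M}\frac{M!}{\boldsymbol a!}\frac{\bk!}{(\bk-\boldsymbol a)!}=(|\bk|)_M$, which gives an exact norm identity with constants independent of $q$; either suffices for the stated bound.
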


\begin{proof}
The derivative of the normalized polynomial is
$h_k'=\sqrt{k}\,h_{k-1}$. Let $e_j$ be the $j$th coordinate vector
in $\R^q$. Gaussian integration by parts gives, for each coordinate,
\[
 \langle\partial_j F,\Psi_{\boldsymbol\ell}\rangle_{L^2(\gq)}
    =\sqrt{\ell_j+1}\,b_{\boldsymbol\ell+e_j}.
\]
The identity holds first for smooth functions with a cutoff, and then for
weak derivatives by approximation. Cutoffs can be removed because the
polynomial and its Gaussian integration-by-parts adjoint are square
integrable against Gaussian measure. Parseval proves
$\norm{\partial_jF}_{L^2(\gq)}^2=\sum_{\bk}k_jb_{\bk}^2$.
Conversely, if these coefficient sums are finite, Hermite partial sums
converge together with their first derivatives in $L^2(\gq)$. Closedness
of the weak derivative, also seen locally where the Gaussian density is
bounded above and below, gives a weak gradient of the limit.

For a multi-index $\boldsymbol a$ of order $M$, iteration gives
\[
 \norm{\partial^{\boldsymbol a}F}_{L^2(\gq)}^2
   =\sum_{\bk\ge\boldsymbol a}
       \frac{\bk!}{(\bk-\boldsymbol a)!}\,b_{\bk}^2.
\]
Here multi-index inequalities and factorials are coordinatewise. Summing
with coefficients $M!/\boldsymbol a!$ uses the identity
\[
 \sum_{|\boldsymbol a|=M,\,\boldsymbol a\le\bk}
 \frac{M!}{\boldsymbol a!}\frac{\bk!}{(\bk-\boldsymbol a)!}
   =(|\bk|)_M,
\]
where $(k)_M=k(k-1)\cdots(k-M+1)$, with value zero for $k<M$.
Since $(1+k)^M\le C_M\{1+(k)_M\}$ for integer $k\ge0$,
Parseval and the finite sum of order-$M$ derivative norms prove
\eqref{eq:Hermite-Sobolev}. Conversely, finiteness of that weighted
coefficient sum makes the Hermite partial sums Cauchy together with all
derivatives through order $M$. Closedness of these weak derivatives proves
the integer-order characterization used in~\eqref{eq:gaussian-sobolev}.
\end{proof}

\subsection{Covariance-map identities and spectral representation}
\label{app:covariance-identities}

\begin{lemma}\label{lem:KF-identities}
For $F\in L^2(\gq)$, the map in~\eqref{eq:KF-definition} has the
expansion~\eqref{eq:KF-series}. It is continuous on $[-1,1]$, analytic
on $(-1,1)$, and
\begin{equation}\label{eq:KF-endpoint}
 \lim_{v\downarrow0}\frac{K_F(1)-K_F(1-v)}{v}
 =\lim_{u\uparrow1}K_F'(u)=\sum_{\ell\ge1}\ell a_\ell\in[0,\infty].
\end{equation}
If $F\in W^{1,2}(\gq)$, this value is $\E\norm{\nabla F(\bG)}^2$ and,
for $-1<u<1$,
\begin{equation}\label{eq:KF-derivative-coupling}
 K_F'(u)=\E\{\nabla F(\bG)^{\mathsf T}\nabla F(\bG_u)\}.
\end{equation}
Consequently, for $0\le v\le1$,
\begin{equation}\label{eq:KF-decrement-integral}
 K_F(1)-K_F(1-v)
 =\int_{1-v}^1 K_F'(u)\dd u
 =\frac12\E[\{F(\bG)-F(\bG_{1-v})\}^2].
\end{equation}
The last equality only requires $F\in L^2(\gq)$.
\end{lemma}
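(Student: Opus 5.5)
The plan has three stages: derive the Mehler expansion, read off the analytic properties and the endpoint limit from it, and then turn to the derivative identities, which need the weak-derivative machinery of Lemma~\ref{lem:derivatives}.

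\emph{Mehler expansion.} First verify it on the Hermite basis. For one coordinate and $|u|\le1$, the standard Gaussian computation gives
\[
\E\{h_k(G)h_m(uG+\sqrt{1-u^2}G')\}=u^k\delta_{km}.
\]
One way to get this is the generating function $\E\{e^{tX-t^2/2}e^{sY-s^2/2}\}=e^{tsu}$ for a standard bivariate normal pair $(X,Y)$ with correlation $u$. Because the coordinates of $(\bG,\bG_u)$ are independent pairs, taking products gives $\E\{\Psi_{\bk}(\bG)\Psi_{\boldsymbol m}(\bG_u)\}=u^{|\bk|}\delta_{\bk\boldsymbol m}$. The map $(F_1,F_2)\mapsto\E\{F_1(\bG)F_2(\bG_u)\}$ is a bounded bilinear form on $L^2(\gq)$, with bound $\|F_1\|\|F_2\|$ by Cauchy--Schwarz since $\bG_u\sim N_q(0,I_q)$. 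So we may pass to the $L^2$ limit of partial sums. Subtracting $b_{\boldsymbol0}^2=m_Y^2$ then yields \eqref{eq:KF-series}.

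\emph{Continuity, analyticity, and the endpoint limit.} Continuity on $[-1,1]$ follows from the Weierstrass test, since $\sum a_\ell<\infty$ and $|u^\ell|\le1$. Analyticity on $(-1,1)$ holds because the radius of convergence is at least one. For the endpoint limit \eqref{eq:KF-endpoint}, write
\[
\frac{K_F(1)-K_F(1-v)}{v}=\sum_\ell a_\ell\,\frac{1-(1-v)^\ell}{v}.
\]
Each term is nonnegative and increases to $\ell a_\ell$ as $v\downarrow0$: the quotient $\{1-(1-v)^\ell\}/v=\sum_{i<\ell}(1-v)^i$ is decreasing in $v$. Monotone convergence therefore gives the limit $\sum_\ell\ell a_\ell$ in $[0,\infty]$. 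Similarly, $K_F'(u)=\sum_\ell\ell a_\ell u^{\ell-1}$ on $(0,1)$ increases to the same value as $u\uparrow1$.

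\emph{Derivative coupling.} For $F\in W^{1,2}(\gq)$, Lemma~\ref{lem:derivatives} identifies $\sum\ell a_\ell$ with $\E\|\nabla F(\bG)\|^2$. For \eqref{eq:KF-derivative-coupling}, apply the Mehler computation to each $\partial_jF$. By the identity $\langle\partial_jF,\Psi_{\boldsymbol\ell}\rangle=\sqrt{\ell_j+1}\,b_{\boldsymbol\ell+e_j}$ from that lemma,
\[
\E\{\partial_jF(\bG)\,\partial_jF(\bG_u)\}=\sum_{\boldsymbol\ell}(\ell_j+1)\,b_{\boldsymbol\ell+e_j}^2\,u^{|\boldsymbol\ell|}=\sum_{\bk}k_j\,b_{\bk}^2\,u^{|\bk|-1}.
\]
Summing over $j$ gives $\sum_{\bk}|\bk|b_{\bk}^2u^{|\bk|-1}=\sum_\ell\ell a_\ell u^{\ell-1}=K_F'(u)$, the termwise derivative, which is valid inside the radius of convergence.

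\emph{Decrement identity.} The first equality in \eqref{eq:KF-decrement-integral} is the fundamental theorem of calculus on $[1-v,1-\epsilon]$, followed by $\epsilon\downarrow0$ using continuity at $1$. The second equality needs only $L^2$. Expanding the square and using that $\bG$ and $\bG_{1-v}$ have the same law,
\[
\tfrac12\E[\{F(\bG)-F(\bG_{1-v})\}^2]=\E F(\bG)^2-\E\{F(\bG)F(\bG_{1-v})\}=K_F(1)-K_F(1-v),
\]
where the mean terms cancel.

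The only delicate point I expect is justifying the passage of the bilinear form through the $L^2$ limits, and the termwise differentiation for the gradient coupling. Both are handled by Cauchy--Schwarz and by the absolute convergence of $\sum\ell a_\ell|u|^{\ell-1}$ for $|u|<1$; the latter holds for any $F\in L^2(\gq)$, since $\ell|u|^{\ell-1}$ is bounded in $\ell$. In fact, for $|u|<1$ the right side of \eqref{eq:KF-derivative-coupling} converges absolutely even without finiteness of $\mathcal D(F)$; the $W^{1,2}$ hypothesis is needed only to give the left side meaning.
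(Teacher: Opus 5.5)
Your proposal is correct and follows the paper's proof essentially step by step. It uses the generating-function Mehler identity extended by $L^2$ approximation and Cauchy--Schwarz, monotone convergence of $\{1-(1-v)^\ell\}/v\uparrow\ell$ for the endpoint slope, the Hermite coefficients of $\partial_jF$ from Lemma~\ref{lem:derivatives} for the coupling, and equal marginals for the decrement. Your variants amount to the same argument: applying the $L^2$ Mehler identity to $\partial_jF$ directly instead of arguing on polynomials and extending, and using the fundamental theorem of calculus on $[1-v,1-\epsilon]$ instead of termwise integration.

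The only slip is in your last sentence, where left and right are swapped. It is $K_F'(u)$, the left side of \eqref{eq:KF-derivative-coupling}, that is an absolutely convergent series for every $F\in L^2(\gq)$. The hypothesis $F\in W^{1,2}(\gq)$ is what gives the gradient expectation on the right its meaning.
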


\begin{proof}
For jointly standard normal $G,G'$ with correlation $u$, the identity
\[
 \E[e^{tG-t^2/2}e^{wG'-w^2/2}]=e^{utw}
\]
gives
\begin{equation}\label{eq:Hermite-pair}
 \E\{h_k(G)h_\ell(G')\}=\ind_{\{k=\ell\}}u^k.
\end{equation}
Applying this coordinatewise to $(\bG,\bG_u)$, whose paired coordinates
are independent, proves $K_F(u)=\sum_{\bk\ne0}b_{\bk}^2u^{|\bk|}$
for finite Hermite sums. Approximation in $L^2$ and Cauchy--Schwarz
extend the identity to $F$. Grouping by total degree gives
\eqref{eq:KF-series}. Since $\sum_\ell a_\ell<\infty$, this series converges
absolutely and uniformly on $[-1,1]$, and is analytic inside that
interval. Termwise differentiation on compact subintervals gives
$a_\ell=K_F^{(\ell)}(0)/\ell!$.

For $0<v\le1$,
\begin{equation}\label{eq:product-bound}
 0\le\frac{1-(1-v)^\ell}{v}
 =\sum_{j=0}^{\ell-1}(1-v)^j\le \ell.
\end{equation}
As $v\downarrow0$, the middle expression increases to $\ell$.
Monotone convergence gives the first limit in~\eqref{eq:KF-endpoint};
the same argument applied to $\sum_\ell \ell a_\ell u^{\ell-1}$ for $0\le u<1$
gives the second. Lemma~\ref{lem:derivatives} identifies finiteness of
$\sum_\ell \ell a_\ell$ with $F\in W^{1,2}(\gq)$ and proves the gradient formula.

For the derivative identity, $h_k'=\sqrt{k}\,h_{k-1}$ and
\eqref{eq:Hermite-pair} give
\[
 \E\{\nabla F(\bG)^{\mathsf T}\nabla F(\bG_u)\}
 =\sum_{\bk\ne0}|\bk|b_{\bk}^2u^{|\bk|-1}
 =\sum_{\ell\ge1}\ell a_\ell u^{\ell-1}
\]
first for polynomials. Convergence of Hermite partial sums and their
weak gradients in $L^2$, with Cauchy--Schwarz for the paired expectations,
extends the identity to $W^{1,2}$. The derivative series is uniformly
absolutely convergent on $[0,1]$ when the energy is finite, so integration
gives the first equality in~\eqref{eq:KF-decrement-integral}. The second
follows directly from identical marginal distributions:
\[
 \tfrac12\E[\{F(\bG)-F(\bG_u)\}^2]=K_F(1)-K_F(u).
\]
Taking $u=1-v$ completes the proof.
\end{proof}

When the expected squared gradient is finite, the derivative coupling also gives the exact
remainder identity
\begin{equation}\label{eq:gradient-remainder}
 \mathcal D(F)v-\Delta_F(v)
 =\frac12\int_{1-v}^1
       \E\|\nabla F(\bG)-\nabla F(\bG_u)\|^2\dd u.
\end{equation}
It follows by subtracting the derivative integral from $v\mathcal D(F)$
and using equal Gaussian marginals. The integrand tends to zero as
$u\uparrow1$, first for polynomial gradients and then by $L^2$
approximation. This supplies a probabilistic explanation of the
leading oracle limit without a pointwise Taylor expansion of $F$.
The covariance differentiation identity is a form of Gaussian
interpolation; see \citet{Voigtlaender2021}.

\paragraph{Spectral representation of the observed covariance.}

Let $f$ be the unit-mass spectral density of the common latent correlation
in~\eqref{eq:spectral-convention}. Write $f^{*\ell}$ for its $\ell$-fold
convolution, with $f^{*1}=f$. Since $f^{*\ell}$ has unit mass and Fourier
transform $\rho^\ell$, the covariance expansion~\eqref{eq:KF-series} gives
\begin{equation}\label{eq:observed-spectral-density}
 f_Y(\lambda)=\sum_{\ell\ge1}a_\ell f^{*\ell}(\lambda).
\end{equation}
Indeed, the terms are nonnegative, and Tonelli's theorem gives
$\int_{\R^d}f_Y(\lambda)\dd\lambda=\sum_{\ell\ge1}a_\ell=K_F(1)<\infty$.
Thus the series converges in $L^1(\R^d)$. The same integrability bound
justifies termwise Fourier transformation, whose result is
$\sum_{\ell\ge1}a_\ell\rho(h)^\ell=K_F\{\rho(h)\}=C_Y(h)$.
The constant Hermite term is absent because the covariance uses the
centered field; it contributes no spectral atom. This representation
requires only $F\in L^2(\gq)$ and the existence of the latent spectral
density. No tail, radial-monotonicity, or transformation-regularity
assumption is needed.

\subsection{Conditional copies and the latent-information MSPE}
\label{app:oracle-proof}

\begin{proof}[Proof of Proposition~\ref{prop:exact}]
Write $\bZ_\star=(Z_1(s_\star),\ldots,Z_q(s_\star))^{\mathsf T}$ and
$M_n=\E(\bZ_\star\mid\mathcal H_n)$. Gaussian conditioning gives
\[
 \bZ_\star\mid\mathcal H_n\sim N_q(M_n,v_nI_q).
\]
The vector $M_n$ is a linear function of the Gaussian observations and
is therefore Gaussian. The law of total covariance gives
$\Cov(M_n)=(1-v_n)I_q$, so $M_n\sim N_q\{0,(1-v_n)I_q\}$.

On an enlarged probability space if necessary, let $U,U'$ be independent
$N_q(0,I_q)$ vectors whose pair is independent of $\mathcal H_n$, and put
\[
 X=M_n+\sqrt{v_n}U,\qquad X'=M_n+\sqrt{v_n}U'.
\]
Given $\mathcal H_n$, $X$ and $X'$ are independent and each has the
conditional distribution of $\bZ_\star$. In particular,
\[
 \E\{F(X)F(X')\mid\mathcal H_n\}
 =\bigl[\E\{F(\bZ_\star)\mid\mathcal H_n\}\bigr]^2.
\]
Expanding the squared difference and using the equal conditional second
moments therefore gives
\[
 \Var\{F(\bZ_\star)\mid\mathcal H_n\}
 =\frac12\E\!\left[\{F(X)-F(X')\}^2\mid\mathcal H_n\right].
\]
Taking expectations and applying the
law of iterated expectations,
\begin{align*}
 \MSPEO
 &=\E\!\left[\Var\{F(\bZ_\star)\mid\mathcal H_n\}\right]\\
 &=\frac12\E\!\left[
       \E\!\left\{\{F(X)-F(X')\}^2\mid\mathcal H_n\right\}\right]\\
 &=\frac12\E\!\left[\{F(X)-F(X')\}^2\right].
\end{align*}
All expectations are finite because $F\in L^2(\gq)$.
Unconditionally, $(X,X')$ is jointly Gaussian, with zero mean and
\[
 \Cov(X)=\Cov(X')=\Cov(M_n)+v_nI_q=I_q,
 \qquad \Cov(X,X')=\Cov(M_n)=(1-v_n)I_q.
\]
Thus $(X,X')\stackrel d=(\bG,\bG_{1-v_n})$ in the definition of $K_F$.
Using the identical marginal means gives
\begin{align*}
 \frac12\E\!\left[\{F(X)-F(X')\}^2\right]
 &=\Var\{F(X)\}-\Cov\{F(X),F(X')\}\\
 &=K_F(1)-K_F(1-v_n).
\end{align*}
This also covers $v_n=1$, when $M_n=0$ almost surely. Dividing by $v_n$
and letting $v_n\to0$ proves~\eqref{eq:oracle-asymptotic-mspe} whenever
$K_F'(1^-)<\infty$.
Finally, if $q=1$ and $F$ has a measurable inverse on its image, each
latent observation can be recovered measurably from the corresponding
observed value. Thus $\mathcal H_n\subseteq\mathcal G_n$; the reverse
inclusion always holds because $Y(s)=F\{Z(s)\}$. Consequently,
$\mathcal G_n=\mathcal H_n$, and $\MSPEN=\MSPEO$ for every $n$.
\end{proof}

For subsequent use, the Hermite representation~\eqref{eq:KF-series}
also gives
\begin{equation}\label{eq:oracle-Hermite}
 \MSPEO=\sum_{\ell\ge1}a_\ell\{1-(1-v_n)^\ell\}.
\end{equation}
This coefficient identity is a consequence of the covariance-map formula,
not an ingredient of its probabilistic proof.

Strict positivity of $f$ makes every latent covariance matrix on distinct
sites positive definite: its spectral quadratic form can vanish only
if a finite sum of distinct exponentials vanishes identically. The Schur
complement gives $v_n>0$. For every $\varepsilon>0$, covariance continuity
and density provide an observed site $s_i$ such that
$\E[\{Z(s_\star)-Z(s_i)\}^2]<\varepsilon$. For all $n\ge i$,
\[
 0<v_n\le\E[\{Z(s_\star)-Z(s_i)\}^2]<\varepsilon.
\]
Thus $v_n\to0$, without a geometric rate or sample-path regularity
assumption on $Y$.

\subsection{Preservation of mean-square regularity}
\label{app:regularity-proof}

\begin{proof}[Proof of Proposition~\ref{prop:ms-regularity}]
By~\eqref{eq:observed-covariance},
\[
 C_Y(h)=K_F\{\rho(h)\}.
\]
For $F\in L^2(\gq)$, the Hermite representation
\eqref{eq:KF-series} implies that $K_F$ is continuous at one. Hence
$\rho(h)\to1$ gives
\[
 \E[\{Y(s+h)-Y(s)\}^2]
 =2\bigl[K_F(1)-K_F\{\rho(h)\}\bigr]
 \longrightarrow0,
\]
which proves mean-square continuity.

For the higher-order statement, the spectral criterion for a stationary
field says that the spatial derivative of multi-index $\boldsymbol a$
exists in mean square exactly when
$\int_{\R^d}\prod_{j=1}^d\lambda_j^{2a_j}f(\lambda)\dd\lambda<\infty$.
Since
\[
 \|\lambda\|^{2m}
 =\sum_{|\boldsymbol a|=m}\frac{m!}{\boldsymbol a!}
       \prod_{j=1}^d\lambda_j^{2a_j},
\]
existence of all derivatives of total order at most $m$ is equivalent to
\[
 \int_{\R^d}\|\lambda\|^{2m}f(\lambda)\dd\lambda<\infty.
\]
Here $\boldsymbol a\in\Nzero^d$ and
$\boldsymbol a!=\prod_{j=1}^d a_j!$. Lower-order moments follow from
$\|\lambda\|^{2k}\le1+\|\lambda\|^{2m}$ for $0\le k\le m$.
By~\eqref{eq:observed-spectral-density} and Tonelli's theorem, the
spectral moments of $Y$ can be evaluated term by term.
If $U_1,U_2,\ldots$ are independent random vectors with density $f$, then
$f^{*\ell}$ is the density of $S_\ell=U_1+\cdots+U_\ell$. Evenness of $f$
makes these vectors symmetric and, since the $2m$th moment is finite,
centered. The even-moment bound for sums with independent random signs gives
\[
 \E\|S_\ell\|^{2m}
 \le C_{m,d}\,\ell^m\E\|U_1\|^{2m}.
\]
For completeness, introduce independent signs $\varepsilon_i$ with
$\PP(\varepsilon_i=1)=\PP(\varepsilon_i=-1)=1/2$, independent of the
$U_i$. Expanding the even power coordinatewise yields
\[
 \E_\varepsilon\left|\sum_i\varepsilon_i U_{ij}\right|^{2m}
 \le C_m\left(\sum_i U_{ij}^2\right)^m.
\]
Combine $\|x\|^{2m}\le d^{m-1}\sum_j|x_j|^{2m}$ with
$(\sum_i a_i)^m\le\ell^{m-1}\sum_i a_i^m$ for $a_i\ge0$, and then
average over the $U_i$. This proves the bound, including $m=1$.
Therefore
\begin{align*}
 \int_{\R^d}\|\lambda\|^{2m}f_Y(\lambda)\dd\lambda
 &=\sum_{\ell\ge1}a_\ell\E\|S_\ell\|^{2m}\\
 &\le C_{m,d}\E\|U_1\|^{2m}
       \sum_{\ell\ge1}a_\ell\ell^m<\infty,
\end{align*}
because $F\in W^{m,2}(\gq)$ by~\eqref{eq:gaussian-sobolev}. This proves mean-square differentiability of order $m$ for $Y$.
\end{proof}

For a measurable version of the stationary field, the spectral-moment
condition obtained above also implies that the sample paths are almost surely
in the usual local Sobolev space $W_{\mathrm{loc}}^{m,2}(\R^d)$,
whose weak derivatives through order $m$ are locally square integrable.
This is the stationary specialization of the
general second-order result in \citet[Corollary~1]{Scheuerer2010}. The
fractional-order results and Sobolev embeddings developed there give
stronger sufficient conditions for classical sample-path differentiability.
Independently, on every event where the latent components have continuous,
respectively $C^m$, trajectories, composition with a continuous,
respectively $C^m$, function $F$ preserves that regularity.

\section{Convolution bounds and the observed high-frequency coefficient}
\label{app:convolutions}

This appendix provides the spectral ingredient needed for the main theorem.
The convolution bounds control the spectral density of the transformed field
uniformly over Hermite degree and lead to Proposition~\ref{prop:tail}.

\subsection{A bound uniform in convolution order}
\label{app:uniform-convolution}

\begin{lemma}\label{lem:conv-bound}
If $p>d$ and $f$ is a probability density satisfying
$f(x)\le B(1+\norm x)^{-p}$, then for every $\ell\ge1$,
\begin{equation}\label{eq:conv-bound}
 f^{*\ell}(x)\le B\ell^{p+1}(1+\norm x)^{-p},\qquad x\in\R^d.
\end{equation}
Also $\norm{f^{*\ell}}_\infty\le\norm f_\infty$.
\end{lemma}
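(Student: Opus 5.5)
The sup-norm bound comes first: by Young's inequality, $\|f^{*\ell}\|_\infty=\|f*f^{*(\ell-1)}\|_\infty\le\|f\|_\infty\|f^{*(\ell-1)}\|_1=\|f\|_\infty$, because every convolution power is again a probability density. For the pointwise tail bound we read $f^{*\ell}$ as the density of $S_\ell=U_1+\cdots+U_\ell$, with the $U_i$ independent and distributed according to $f$. The key observation is that if $\|S_\ell\|$ is large, say at least $\|x\|$, then some summand has norm at least $\|x\|/\ell$. We make this precise with a partition of unity rather than events: split the integral defining $f^{*\ell}(x)$ according to which coordinate $u_i$ has the largest norm. Then
\[
 f^{*\ell}(x)\le\sum_{i=1}^\ell\int_{\{\|u_i\|\ge\|u_k\|\ \forall k\}}
 \prod_{k=1}^{\ell}f(u_k)\,\delta\Bigl(x-\sum_k u_k\Bigr)\dd u .
\]
Concretely, we integrate out $u_i=x-\sum_{k\ne i}u_k$ and keep the remaining variables free.

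On the $i$th piece we have $\|u_i\|\ge\|x\|/\ell$, since $\|x\|\le\sum_k\|u_k\|\le\ell\|u_i\|$. Hence
\[
 f(u_i)\le B(1+\|x\|/\ell)^{-p}\le B\ell^{p}(1+\|x\|)^{-p}.
\]
The last inequality uses $1+\|x\|\le\ell(1+\|x\|/\ell)$. We bound $f(u_i)$ by this constant and drop the maximality restriction. The remaining integral is $\int\prod_{k\ne i}f(u_k)\dd u_k=1$, because $u_i$ has already been eliminated through the delta constraint and the other $\ell-1$ variables range freely over $\R^d$. Each of the $\ell$ pieces therefore contributes at most $B\ell^p(1+\|x\|)^{-p}$, and summing over $i$ gives $B\ell^{p+1}(1+\|x\|)^{-p}$, which is \eqref{eq:conv-bound}.

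The only delicate point is rigor in the pointwise statement. A density is defined only almost everywhere, and the delta notation above is informal. We handle this in two steps. First, write $f^{*\ell}(x)=\int f\bigl(x-\sum_{k\ge2}u_k\bigr)\prod_{k\ge2}f(u_k)\dd u_2\cdots\dd u_\ell$, a version that is defined everywhere as an iterated integral by Tonelli. Second, permute the roles of the variables: on the region where coordinate $i$ carries the maximum norm, the integrand is symmetric under relabelling, so we may rewrite the integral so that $u_i$ is the eliminated variable. The case $\ell=1$ is trivial. The hypothesis $p>d$ is used only to ensure that $f$ is integrable, so that the $f^{*\ell}$ are well-defined probability densities; the argument itself never uses it.
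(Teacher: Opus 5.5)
Your proposal is correct and is essentially the paper's proof. Some summand must have norm at least $\|x\|/\ell$; that factor is bounded by $B\ell^{p}(1+\|x\|)^{-p}$, the remaining $\ell-1$ factors integrate to one after a unit-Jacobian change of variables, and summing over the $\ell$ choices gives $\ell^{p+1}$. The sup-norm bound is the same $L^\infty$ contraction. Partitioning by the maximal-norm coordinate instead of taking a union over the overlapping events $\{\|u_i\|\ge\|x\|/\ell\}$ is an inessential variation. As a minor aside, $p>d$ is not needed even for integrability, since $f$ is already assumed to be a probability density.
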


\begin{proof}
The case $\ell=1$ is immediate. For $\ell\ge2$, write the convolution as
\[
 f^{*\ell}(x)
 =\int_{(\R^d)^{\ell-1}}
   \left\{\prod_{i=1}^{\ell-1}f(x_i)\right\}
   f\!\left(x-\sum_{i=1}^{\ell-1}x_i\right)
   \dd x_1\cdots\dd x_{\ell-1},
\]
and put $x_\ell=x-\sum_{i=1}^{\ell-1}x_i$, so that
$x_1+\cdots+x_\ell=x$.
At least one summand has norm at least $\norm x/\ell$. Bound the integral
by the sum over these $\ell$ events. On event $i$, bound the factor $f(x_i)$ by
\[
 B(1+\norm x/\ell)^{-p}\le B\ell^p(1+\norm x)^{-p}.
\]
Using the remaining $\ell-1$ variables as integration coordinates and
dropping the event restriction bounds their integral by one.
Summing over $i$ gives~\eqref{eq:conv-bound}. Convolution with a probability
density is an $L^\infty$ contraction, proving the second assertion.
\end{proof}

\subsection{A convolution bound with a dimension-independent order}
\label{app:refined-convolution}

The auxiliary random vectors used here have the spectral density as their
probability density; they are random frequencies, not the latent fields.

\begin{lemma}\label{lem:positive-sum}
If nonnegative random variables $A_1,\ldots,A_\ell$ satisfy
$\mathbb P(A_i>t)\le Ct^{-\nu}$ for $t>0$, then
\begin{equation}\label{eq:positive-sum-bound}
 \mathbb P\left(\sum_{i=1}^\ell A_i>t\right)
 \le C_\nu w_\nu(\ell)t^{-\nu}.
\end{equation}
The constant depends only on $\nu$ and the common tail constant $C$,
not on $\ell$. Independence of the $A_i$ is not required.
\end{lemma}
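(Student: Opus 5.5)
We use the layer-cake formula together with a dyadic, weighted union bound; no independence is ever invoked. The approach splits according to the three regimes of $w_\nu$.

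\emph{Case $0<\nu<1$.} For any deterministic weights $\theta_i>0$ with $\sum_i\theta_i\le1$, the event $\{\sum_iA_i>t\}$ is contained in $\bigcup_i\{A_i>\theta_it\}$. Hence
\[
 \PP\Bigl(\sum_{i=1}^\ell A_i>t\Bigr)\le Ct^{-\nu}\sum_{i=1}^\ell\theta_i^{-\nu}.
\]
Choosing $\theta_i=1/\ell$ gives only $\ell^{1+\nu}$, which is too weak. The idea is instead to use that the bound must hold for the sum uniformly, and to give a sharper estimate by truncation. Fix a level $u=t/\ell^{a}$ and split
\[
 \Bigl\{\sum_iA_i>t\Bigr\}\subseteq\bigcup_i\{A_i>t\}\ \cup\ \Bigl\{\sum_iA_i\ind_{\{A_i\le t\}}>t\Bigr\}.
\]
The first event has probability at most $C\ell t^{-\nu}$. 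For the second, apply Markov's inequality to the truncated sum:
\[
 \E\bigl[A_i\ind_{\{A_i\le t\}}\bigr]\le\int_0^t\PP(A_i>s)\dd s\le\frac{C}{1-\nu}\,t^{1-\nu},
\]
valid because $\nu<1$. Summing over $i$ and dividing by $t$ gives $C\ell t^{-\nu}/(1-\nu)$. Together these yield the bound $C_\nu\,\ell\,t^{-\nu}$, and Markov requires no independence.

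\emph{Case $\nu=1$.} The same split applies, but now $\int_0^t\PP(A_i>s)\dd s$ diverges logarithmically. We truncate at $t$ and bound
\[
 \int_0^t\min(1,Cs^{-1})\dd s\le C'\{1+\log_+(t)\}.
\]
This is not uniform in $t$, so the split must be done more carefully. Truncate at $t$ but estimate $\int_{t/\ell}^{t}$ using the tail, and use the trivial bound $s\le t/\ell$ below that level:
\[
 \sum_i\E\bigl[A_i\ind_{\{A_i\le t/\ell\}}\bigr]\le\ell\cdot\frac{t}{\ell}\cdot\PP(\cdot)\text{-weighted terms}.
\]
More cleanly, write
\[
 \sum_iA_i\le\sum_iA_i\ind_{\{A_i>t\}}+\sum_iA_i\ind_{\{t/(2\ell)<A_i\le t\}}+\frac t2,
\]
so that the last piece cannot exceed $t/2$ on its own. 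Markov applied to the middle piece, using $\int_{t/(2\ell)}^t Cs^{-1}\dd s=C\log(2\ell)$ plus a boundary term $\ell\cdot(t/2\ell)\PP(A_i>t/2\ell)\le C\ell\cdot \ell t^{-1}$, gives too much. To fix this, note that the boundary term should be written via the layer-cake formula as $\int_{t/2\ell}^t\PP(A_i>s)\dd s+(t/2\ell)\PP(A_i>t/2\ell)$. The second summand equals $(t/2\ell)\cdot C(2\ell/t)=C$ per index, hence $C\ell$ in total, and after dividing by $t$ it contributes $C\ell t^{-1}$. The total is therefore $C\ell(1+\log\ell)t^{-1}$, matching $w_1$. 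The same three-piece split reproves the case $\nu<1$, now with the threshold $t/(2\ell)$.

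\emph{Case $\nu>1$.} Here Markov on the first moment loses, since the tail integral converges at infinity but the bound is of order $\ell\,(t/\ell)^{1-\nu}/t=\ell^{\nu}t^{-\nu}$. This is in fact exactly $w_\nu(\ell)t^{-\nu}$, so we use the same split with threshold $t/(2\ell)$:
\[
 \int_{t/2\ell}^{\infty}Cs^{-\nu}\dd s\asymp(t/\ell)^{1-\nu},
\]
together with the boundary term $(t/2\ell)\cdot C(t/2\ell)^{-\nu}$. After summing over $\ell$ indices and dividing by $t/2$, both give $C_\nu\ell^{\nu}t^{-\nu}$. The first piece, $\bigcup_i\{A_i>t\}$, is dominated since $\ell\le\ell^\nu$. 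Alternatively, the union bound with $\theta_i=1/\ell$ gives $\ell^{1+\nu}$, confirming that the Markov route is the sharper one.

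So the unified plan is this: on the complement of the small-values event, $\sum_iA_i\ind_{\{A_i>t/(2\ell)\}}>t/2$. Applying Markov, the problem reduces to bounding $\E[A\ind_{\{A>t/2\ell\}}]$ using the truncation at $t$ for $\nu\le1$ and the full layer-cake integral for $\nu>1$. The main obstacle is keeping the constant independent of $t$ in the regime $t<\ell$ or $t$ small. We handle this by noting that the claimed bound exceeds $1$ whenever $t^\nu\le C_\nu w_\nu(\ell)$, so only large $t$ need be treated, which makes the logarithmic and boundary terms uniformly controlled.
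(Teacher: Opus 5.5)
Your proof is correct. For $0<\nu<1$ and $\nu=1$ it is the paper's argument. You split off $\{\max_iA_i>t\}$ by a union bound, and on the complement you apply Markov's inequality to the truncated sum. At $\nu=1$ you use the threshold $b=t/(2\ell)$, so that summands below $b$ contribute at most $t/2$, together with $\E[A_i\ind_{\{b<A_i\le t\}}]\le b\,\PP(A_i>b)+\int_b^t\PP(A_i>x)\dd x\le C\{1+\log(2\ell)\}$.

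For $\nu>1$ your route differs from the paper's. You keep the same threshold $b=t/(2\ell)$ and drop the upper truncation, which is not needed because the tail is integrable. You then use $\E[A_i\ind_{\{A_i>b\}}]\le \frac{C\nu}{\nu-1}b^{1-\nu}$, which gives $\PP(\sum_iA_i>t)\le\frac2t\,\ell\,\frac{C\nu}{\nu-1}\bigl(\frac{t}{2\ell}\bigr)^{1-\nu}=C_\nu\ell^\nu t^{-\nu}$.

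The paper instead uses a self-referential estimate. Every event $B$ satisfies $\E[A_i\ind_B]\le C_\nu\PP(B)^{1-1/\nu}$, and taking $B=\{\sum_iA_i>t\}$ itself yields $t\,\PP(B)\le C_\nu\ell\,\PP(B)^{1-1/\nu}$. This is the normability of weak $L^\nu$ for $\nu>1$. It needs no threshold and gives subadditivity of the weak-$L^\nu$ quasinorm directly.

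Your version buys uniformity. One device handles all three regimes: absorb the summands below $t/(2\ell)$ into $t/2$, then apply Markov to the rest. It also shows where the weight comes from: $w_\nu(\ell)t^{-\nu}$ is, up to constants, $\ell/t$ times a truncated first moment of order $t^{1-\nu}$, $1+\log\ell$, or $(t/\ell)^{1-\nu}$ in the three regimes.

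Two clean-up remarks. Your closing reduction to large $t$ is unnecessary, since every estimate above holds for all $t>0$. The false starts should also be removed: the unused level $t/\ell^{a}$, and the intermediate boundary-term bound $C\ell\cdot\ell t^{-1}$, which you later correct to $C\ell$.
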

\begin{proof}
Since the variables are nonnegative, for $t>0$,
\begin{align*}
 \left\{\sum_{i=1}^\ell A_i>t\right\}
 &=\{\max_i A_i>t\}\\
 &\quad\cup\left\{\max_i A_i\le t,\ \sum_{i=1}^\ell A_i>t\right\}.
\end{align*}
For $0<\nu<1$, split according to whether $\max_i A_i>t$. The probability
of that event is at most $C\ell t^{-\nu}$. On its complement, Markov's
inequality and
\[
 \E[A_i\ind_{\{A_i\le t\}}]\le\int_0^t\mathbb P(A_i>x)\dd x
 \le\frac{C}{1-\nu}t^{1-\nu}
\]
give the same order.
For $\nu>1$, every event $B$ of probability $p_B$ satisfies
\[
 \E[A_i\ind_B]\le\int_0^\infty\min\{p_B,Cx^{-\nu}\}\dd x
 \le C_\nu p_B^{1-1/\nu}.
\]
Take $B=\{\sum_iA_i>t\}$. If $p_B>0$, then
$tp_B\le C_\nu\ell p_B^{1-1/\nu}$, proving the weight $\ell^\nu$.
For $\nu=1$, set $b=t/(2\ell)$. On
$\{\max_iA_i\le t,\ \sum_iA_i>t\}$ the sum of the terms with $A_i>b$
exceeds $t/2$. Moreover,
\[
 \E[A_i\ind_{\{b<A_i\le t\}}]
 \le b\mathbb P(A_i>b)+\int_b^t\mathbb P(A_i>x)\dd x
 \le C\{1+\log(2\ell)\}.
\]
Markov's inequality and the maximum bound prove the logarithmic case.
\end{proof}

\begin{lemma}\label{lem:refined-convolution}
Let $f$ be a radially nonincreasing probability density on $\R^d$ satisfying
$f(x)\le B(1+\|x\|)^{-d-2\nu}$, with $\nu>0$. Then
\begin{equation}\label{eq:refined-convolution}
 f^{*\ell}(x)\le C w_\nu(\ell)(1+\|x\|)^{-d-2\nu},
 \qquad \ell\ge1,
\end{equation}
where $C$ does not depend on $\ell$ or $x$.
\end{lemma}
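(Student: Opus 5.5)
The bound is a density version of Lemma~\ref{lem:positive-sum}, applied to random frequencies. Let $U_1,\ldots,U_\ell$ be i.i.d.\ with density $f$ and $S_\ell=U_1+\cdots+U_\ell$, so that $f^{*\ell}$ is the density of $S_\ell$. Two bounds on $f^{*\ell}$ are available, and we combine them.

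\emph{Bounded region.} For $\|x\|\le 1$ we use $\|f^{*\ell}\|_\infty\le\|f\|_\infty\le B$ from Lemma~\ref{lem:conv-bound}. This gives the claim there, since $w_\nu(\ell)\ge1$.

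\emph{Large $\|x\|$: tail bound.} From $f(y)\le B(1+\|y\|)^{-d-2\nu}$, integration in polar coordinates gives $\PP(\|U_i\|>t)\le Ct^{-2\nu}$. The exponent $2\nu$ is not the one we want, so we do not apply Lemma~\ref{lem:positive-sum} to $\|U_i\|$ directly. Instead set $A_i=\|U_i\|^2$, which has tail $Ct^{-\nu}$. By the triangle inequality and Cauchy--Schwarz, $\|S_\ell\|^2\le\ell\sum_iA_i$, but this costs an extra factor of $\ell$. A cleaner route is to work with the radial tail of $S_\ell$ itself and then convert tail mass into a pointwise density bound using radial monotonicity.

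\emph{Radial monotonicity of $f^{*\ell}$.} A convolution of radially nonincreasing (symmetric unimodal) densities is again radially nonincreasing, by Anderson's theorem or its Riesz-type rearrangement consequence. So $g=f^{*\ell}$ is radially nonincreasing. For such $g$ and $r=\|x\|$, the annulus $\{r/2\le\|y\|\le r\}$ has volume $c_dr^d$ and $g\ge g(x)$ on it. Hence
\[
g(x)\le \frac{\PP(\|S_\ell\|>r/2)}{c_d r^d}.
\]
It therefore suffices to prove $\PP(\|S_\ell\|>t)\le Cw_\nu(\ell)t^{-2\nu}$.

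\emph{Tail of the sum.} The crude bound $\|S_\ell\|\le\sum_i\|U_i\|$ fed into Lemma~\ref{lem:positive-sum} with exponent $2\nu$ gives weight $w_{2\nu}(\ell)$, which is too large when $\nu>1/2$. Symmetry has to be used. We split into $\{\max_i\|U_i\|>t/2\}$, whose probability is at most $C\ell t^{-2\nu}$, and the event where all summands are truncated at level $t/2$. On the truncated event, Markov's inequality on the second moment gives
\[
\PP(\|S_\ell\|>t)\le t^{-2}\,\E\|S^{(t)}_\ell\|^2,
\]
where $S^{(t)}_\ell$ is the truncated sum. By symmetry the truncated summands remain centered, so
\[
\E\|S^{(t)}_\ell\|^2=\sum_i\E\bigl[A_i\ind_{\{A_i\le t^2/4\}}\bigr].
\]
This is exactly the quantity controlled in the proof of Lemma~\ref{lem:positive-sum} with $A_i=\|U_i\|^2$ at level $t^2$. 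For $\nu<1$ it is at most $C\ell t^{2-2\nu}$; for $\nu=1$ it is at most $C\ell\log$, which is handled by the same splitting at $b=t^2/(2\ell)$ used in that proof. For $\nu>1$ the second moment is finite and gives only $\ell t^{-2}$. To reach $\ell^\nu t^{-2\nu}$ there, we instead apply Lemma~\ref{lem:positive-sum} with $\nu>1$ directly to the nonnegative variables $A_i$ and the quantity $\|S_\ell\|^2$. The point is that for centered sums $\|S_\ell\|^2$ behaves like $\sum_iA_i$. Rigorously, we use a Rosenthal/Fuk--Nagaev type inequality $\PP(\|S_\ell\|>t)\le\ell\PP(\|U_1\|>t/C)+\PP(\sum_iA_i\gtrsim t^2)$. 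Combined with Lemma~\ref{lem:positive-sum}, this gives $\PP(\|S_\ell\|>t)\le C(\ell+w_\nu(\ell))t^{-2\nu}$.

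Substituting into the annulus bound gives $f^{*\ell}(x)\le Cw_\nu(\ell)\|x\|^{-d-2\nu}$ for $\|x\|\ge1$, and together with the bounded region this proves \eqref{eq:refined-convolution}. The main obstacle is the tail of the sum for $\nu\ge1$, in particular getting exactly the weight $w_\nu(\ell)$ without losing a power of $\ell$. My first attempt there is symmetrization with random signs, as in Appendix~\ref{app:regularity-proof}: conditionally on the $U_i$, Hoeffding or Khintchine bounds $\PP_\varepsilon(\|\sum\varepsilon_iU_i\|>t)$ by $\exp(-ct^2/\sum A_i)$ up to the cost of large terms. Integrating this against the tail bound of Lemma~\ref{lem:positive-sum} for $\sum_iA_i$ then gives the result.
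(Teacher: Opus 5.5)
Your reduction is the same as the paper's. It consists of the sup bound for $\|x\|\le1$, radial monotonicity of $f^{*\ell}$ (your appeal to Anderson's theorem replaces the paper's slicing argument), and the annulus comparison. Together these reduce everything to $\PP(\|S_\ell\|>t)\le Cw_\nu(\ell)t^{-2\nu}$. Your truncation-plus-Chebyshev argument proves this for $0<\nu<1$. The gap is the case $\nu\ge1$, which is the substance of the lemma.

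\textbf{The case $\nu>1$.} The inequality you call rigorous, $\PP(\|S_\ell\|>t)\le\ell\PP(\|U_1\|>t/C)+\PP(\sum_iA_i\ge ct^2)$, is false whatever the constants. Here $\E A_1<\infty$. Take $t=K\sqrt\ell$ with $K$ fixed and let $\ell\to\infty$. The left side tends to $\PP(\|N(0,\Sigma)\|>K)>0$ by the central limit theorem. On the right, $\ell\PP(\|U_1\|>K\sqrt\ell/C)=O(\ell^{1-\nu})\to0$, and $\PP(\sum_iA_i\ge cK^2\ell)\to0$ by the law of large numbers once $cK^2>\E A_1$. A genuine Fuk--Nagaev bound carries an additional Gaussian-type term of order $(\ell/t^2)^{\nu}$. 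That term would be harmless here, but it is exactly what your display omits.

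\textbf{The case $\nu=1$.} The splitting at $b=t^2/(2\ell)$ does not carry over from Lemma~\ref{lem:positive-sum}. There, the small terms contribute at most $\ell b$ to a sum of nonnegative variables. For the vector sum, the triangle inequality only gives $\ell\sqrt b=t\sqrt{\ell/2}$. Controlling the small part by its second moment instead leaves a bound of order $\ell\{1+\log(t^2/\ell)\}t^{-2}$, with $\log t$ in place of $\log\ell$. Choosing $\sqrt b=t/(2\ell)$ would repair this case, but not $\nu>1$, where, as you noted, second moments cannot produce $\ell^\nu t^{-2\nu}$.

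\textbf{How to close the gap.} The route in your last sentence works, and it is precisely the paper's Step~2. It involves no truncation and no ``cost of large terms.'' By symmetry of $f$, $S_\ell\stackrel{d}{=}\sum_i\varepsilon_iU_i$ with independent random signs. For \emph{arbitrary} fixed $U_1,\ldots,U_\ell$ one has $\E_\varepsilon\exp(\theta\sum_i\varepsilon_iU_{ij})=\prod_i\cosh(\theta U_{ij})\le\exp(\theta^2A_\ell/2)$, where $A_\ell=\sum_i\|U_i\|^2$. A coordinatewise union bound then gives $\PP(\|S_\ell\|>r\mid U_1,\ldots,U_\ell)\le2d\exp\{-r^2/(2dA_\ell)\}$.

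Next, Lemma~\ref{lem:positive-sum}, applied to the nonnegative variables $\|U_i\|^2$ with tail $Cs^{-\nu}$, gives $\PP(A_\ell>s)\le Cw_\nu(\ell)s^{-\nu}$. The identity $e^{-\beta/a}=\int_0^a\beta s^{-2}e^{-\beta/s}\dd s$ together with Tonelli's theorem yields $\E e^{-\beta/A_\ell}\le Cw_\nu(\ell)\Gamma(\nu+1)\beta^{-\nu}$. With $\beta=r^2/(2d)$ this is $\PP(\|S_\ell\|>r)\le Cw_\nu(\ell)r^{-2\nu}$ for all $\nu>0$ at once. So neither a case split nor a Fuk--Nagaev or Rosenthal inequality is needed. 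Your second-moment argument remains a valid, more elementary alternative only in the range $0<\nu<1$.
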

\begin{proof}
The argument has three steps: control the sum of squared frequencies,
control the vector sum by symmetry, and use radial monotonicity to pass
from a probability tail to a pointwise density bound. Constants below may
depend on $d,\nu$ and the envelope constant $B$, but not on $\ell$ or $x$.

\emph{Step 1: squared frequencies.}
Let $U_1,\ldots,U_\ell$ be independent random vectors with density $f$, and
put $S_\ell=\sum_iU_i$ and $A_\ell=\sum_i\|U_i\|^2$. Integrating the
upper envelope of $f$ outside the ball of radius $\sqrt t$ gives
$\PP(\|U_i\|^2>t)\le Ct^{-\nu}$ for $t\ge1$; increasing $C$ makes the
same bound valid for $0<t<1$. Lemma~\ref{lem:positive-sum} therefore yields
\[
 \PP(A_\ell>t)\le C_\nu w_\nu(\ell)t^{-\nu},\qquad t>0.
\]

\emph{Step 2: the vector sum.}
Let $\varepsilon_1,\ldots,\varepsilon_\ell$ be independent signs with
$\PP(\varepsilon_i=1)=\PP(\varepsilon_i=-1)=1/2$, independent of all $U_i$.
Symmetry of $f$ gives $S_\ell\stackrel d=\sum_i\varepsilon_iU_i$.
For the $j$th coordinate, conditioning on the $U_i$ and using
$\cosh y\le e^{y^2/2}$ gives, for every real $t$,
\[
 \E_\varepsilon\exp\left(t\sum_i\varepsilon_i U_{ij}\right)
 =\prod_i\cosh(tU_{ij})
 \le\exp\left(\frac{t^2}{2}\sum_iU_{ij}^2\right).
\]
The exponential Markov inequality, applied to both signs and optimized
over $t$, gives, for $a>0$,
\[
 \PP_\varepsilon\!\left(
    \left|\sum_i\varepsilon_iU_{ij}\right|>a
    \,\middle|\,U_1,\ldots,U_\ell\right)
 \le2\exp\!\left\{-\frac{a^2}{2\sum_iU_{ij}^2}\right\}.
\]
If the denominator is zero, the coordinate is identically zero and the
right-hand side is taken as zero. A vector with norm greater than $r$
has some coordinate of absolute value greater than $r/\sqrt d$.
Using $\sum_iU_{ij}^2\le A_\ell$, a union bound therefore gives
\[
 \PP_\varepsilon\left(\left\|\sum_i\varepsilon_iU_i\right\|>r
                  \,\middle|\,U_1,\ldots,U_\ell\right)
 \le2d\exp\{-r^2/(2dA_\ell)\}.
\]
The right-hand side is interpreted as zero when $A_\ell=0$ and $r>0$.
For $b>0$, the identity
$e^{-b/a}=\int_0^a bt^{-2}e^{-b/t}\dd t$, $a>0$, and Tonelli's theorem
show that
\[
 \E e^{-b/A_\ell}
 =\int_0^\infty\frac{b}{t^2}e^{-b/t}\PP(A_\ell>t)\dd t
 \le C_\nu w_\nu(\ell)b^{-\nu}\Gamma(\nu+1).
\]
Averaging the conditional tail bound, with $b=r^2/(2d)$, now gives
\begin{equation}\label{eq:symmetric-frequency-tail}
 \PP(\|S_\ell\|>r)\le Cw_\nu(\ell)r^{-2\nu}.
\end{equation}

\emph{Step 3: the density bound.}
Convolution preserves radial nonincrease. Indeed, at positive levels the
superlevel sets of a radially nonincreasing density are empty or centered
balls, up to null sets, so each density is an integral of their indicators. The convolution of
two such indicators is the overlap volume of the balls. With both centers
on a fixed axis, slicing along parallel lines gives intervals of fixed
lengths whose overlap is nonincreasing in the separation of their centers.
Integrating over these slices and then over the two level parameters,
by Tonelli's theorem, proves this monotonicity. Orthogonal invariance
also gives radial symmetry. Thus $f^{*\ell}$ is radially nonincreasing.

Let $v_d$ be the volume of the unit ball in $\R^d$ and fix $x$ with
$\|x\|=r>0$. For $r/2\le\|y\|\le r$ we have
$f^{*\ell}(y)\ge f^{*\ell}(x)$, and hence
\begin{align*}
 v_d(1-2^{-d})r^df^{*\ell}(x)
 &\le\int_{r/2\le\|y\|\le r}f^{*\ell}(y)\dd y\\
 &=\PP(r/2\le\|S_\ell\|\le r)\\
 &\le\PP(\|S_\ell\|\ge r/2)
 \le Cw_\nu(\ell)r^{-2\nu}.
\end{align*}
The last bound follows from~\eqref{eq:symmetric-frequency-tail}; the
boundary sphere has zero probability because $S_\ell$ has a density.
This proves~\eqref{eq:refined-convolution} for $r\ge1$. For $r<1$, use
$\|f^{*\ell}\|_\infty\le\|f\|_\infty\le B$ and $w_\nu(\ell)\ge1$,
enlarging the constant to replace $r^{-d-2\nu}$ by
$(1+r)^{-d-2\nu}$ over the two regions.
\end{proof}

\subsection{Tail coefficients and the observed spectral density}
\label{app:observed-tail-proof}

\begin{lemma}\label{lem:conv-tail}
Let $f,g$ be probability densities with polynomial upper envelopes of order
$p>d$, and suppose that
\[
 f(x)\sim c_f\norm x^{-p},\qquad
 g(x)\sim c_g\norm x^{-p},\qquad c_f,c_g>0,
\qquad \norm x\to\infty.
\]
Then $(f*g)(x)\sim(c_f+c_g)\norm x^{-p}$.
Consequently, under Assumption~\ref{ass:spectra}, with $p=d+2\nu$ and
$c$ as in~\eqref{eq:latent-tail},
\begin{equation}\label{eq:multi-conv-tail}
 f^{*\ell}(x)\sim \ell c\norm x^{-p}\quad(\norm x\to\infty)
\end{equation}
for every fixed $\ell\ge1$.
\end{lemma}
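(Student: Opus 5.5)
The plan is to prove the two-density statement by splitting the convolution integral into regions where one argument is large and the other is moderate. The multi-fold tail~\eqref{eq:multi-conv-tail} then follows by induction on $\ell$.

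Fix $\varepsilon\in(0,1/2)$ and write
\[
 (f*g)(x)=\int_{\R^d}f(x-y)g(y)\dd y
 =I_1+I_2+I_3,
\]
where $I_1$ integrates over $\|y\|\le\varepsilon\|x\|$, $I_2$ over $\|x-y\|\le\varepsilon\|x\|$, and $I_3$ over the remaining region.

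On the region of $I_1$ we have $\|x-y\|\ge(1-\varepsilon)\|x\|$, and $\|x-y\|/\|x\|\in[1-\varepsilon,1+\varepsilon]$. The asymptotic form of $f$ therefore gives, uniformly in $y$ as $\|x\|\to\infty$,
\[
 f(x-y)=c_f\|x\|^{-p}\{1+O(\varepsilon)+o(1)\}.
\]
Since $\int_{\|y\|\le\varepsilon\|x\|}g\to1$, this yields $I_1=c_f\|x\|^{-p}\{1+O(\varepsilon)+o(1)\}$. Exchanging the roles of $f$ and $g$ through the substitution $y\mapsto x-y$ gives in the same way
\[
 I_2=c_g\|x\|^{-p}\{1+O(\varepsilon)+o(1)\}.
\]

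The main obstacle is $I_3$, where both $\|y\|$ and $\|x-y\|$ exceed $\varepsilon\|x\|$. Here I use the polynomial envelopes. The factor $f(x-y)$ is at most $B(1+\varepsilon\|x\|)^{-p}$, so
\[
 I_3\le B(1+\varepsilon\|x\|)^{-p}\int_{\|y\|>\varepsilon\|x\|}g(y)\dd y.
\]
The remaining integral is a tail mass of $g$. Because $p>d$, the envelope of $g$ makes this tail mass $O\{(\varepsilon\|x\|)^{d-p}\}$, which tends to zero. Hence
\[
 I_3=O\bigl(\varepsilon^{-p}\|x\|^{-p}(\varepsilon\|x\|)^{d-p}\bigr)=o(\|x\|^{-p})\quad\text{for fixed }\varepsilon.
\]
Combining the three pieces, dividing by $\|x\|^{-p}$, letting $\|x\|\to\infty$, and then $\varepsilon\downarrow0$ gives $(f*g)(x)\sim(c_f+c_g)\|x\|^{-p}$. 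The regions for $I_1$ and $I_2$ are disjoint because $\varepsilon<1/2$, so nothing is double-counted.

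For the consequence~\eqref{eq:multi-conv-tail}, I induct on $\ell$, applying the first part with $f$ and $g=f^{*(\ell-1)}$. Under Assumption~\ref{ass:spectra}, $f$ has the envelope~\eqref{eq:latent-envelope} and tail constant $c$. Lemma~\ref{lem:conv-bound} supplies the envelope $f^{*(\ell-1)}(x)\le B(\ell-1)^{p+1}(1+\|x\|)^{-p}$; its constant grows with $\ell$, but it is fixed for each fixed $\ell$, which is all the first part needs. The induction hypothesis gives $f^{*(\ell-1)}(x)\sim(\ell-1)c\|x\|^{-p}$. The first part then yields $f^{*\ell}(x)\sim\ell c\|x\|^{-p}$, starting from the trivial case $\ell=1$.
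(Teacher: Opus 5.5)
Your proof is correct and follows essentially the paper's route: the same three-region split of the convolution (one argument small, the other small, the remainder), with the remainder controlled by the polynomial envelope, followed by the same induction on $\ell$ that uses Lemma~\ref{lem:conv-bound} to supply an envelope for each fixed convolution order. The only difference is the cutoff scale. The paper uses a fixed radius $A$, so $\|x-u\|/\|x\|\to1$ exactly on the main regions and $A\to\infty$ is taken last. Your proportional cutoff $\varepsilon\|x\|$ makes the main masses tend to one automatically, at the cost of an $O(\varepsilon)$ distortion that you remove by letting $\varepsilon\downarrow0$.
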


\begin{proof}
Fix $A>0$ and let $r=\norm x>2A$. In the convolution integral
$\int f(u)g(x-u)\dd u$, the sets $\{\norm u\le A\}$ and
$\{\norm{x-u}\le A\}$ are disjoint. Multiplication by $r^p$ and the tail
limits give, on these two sets, respectively,
\[
 c_g\int_{\norm u\le A}f(u)\dd u,
 \qquad c_f\int_{\norm v\le A}g(v)\dd v,
\]
as $r\to\infty$. Convergence is uniform over bounded translations because
the tail limits hold over all directions and $\norm{x-u}/r\to1$ uniformly
for $\norm u\le A$.

In the remaining region both $\norm u>A$ and $\norm{x-u}>A$. At least
one of $\norm u$ and $\norm{x-u}$ is at least $r/2$. The polynomial
upper envelopes therefore bound this remaining integral by
\[
 Cr^{-p}\left\{\int_{\norm u>A}f(u)\dd u+
                  \int_{\norm v>A}g(v)\dd v\right\}.
\]
Take $r\to\infty$ and then $A\to\infty$. The remaining normalized
integral vanishes and the two main masses tend to one. This proves the
binary assertion. Induction, using Lemma~\ref{lem:conv-bound} to retain
an upper envelope for every fixed convolution order, proves
\eqref{eq:multi-conv-tail}.
\end{proof}

\begin{proof}[Proof of Proposition~\ref{prop:tail}]
Under Assumption~\ref{ass:spectra}, $f$ is bounded, continuous, and strictly
positive. By~\eqref{eq:observed-spectral-density} and the bound
$\|f^{*\ell}\|_\infty\le\|f\|_\infty$ in Lemma~\ref{lem:conv-bound},
the series defining $f_Y$ converges uniformly. Each convolution is
continuous and strictly positive, and at least one $a_\ell$ is positive.
Hence $f_Y$ is continuous and strictly positive.

Put $p=d+2\nu$.
For $r=\norm\lambda\ge1$, Lemma~\ref{lem:refined-convolution} gives
\[
 r^p f^{*\ell}(\lambda)\le Cw_\nu(\ell).
\]
Assumption~\ref{ass:Hermite} makes this summable with weights $a_\ell$.
Lemma~\ref{lem:conv-tail} and dominated convergence imply
\[
 r^p f_Y(\lambda)\longrightarrow c\sum_{\ell\ge1}\ell a_\ell
                             =cK_F'(1^-)=c\mathcal D(F)=c_Y.
\]
The tail limit supplies the two-sided envelope at large frequencies;
continuity and positivity supply it on the remaining compact ball.
\end{proof}

\section{Spectral comparison of kriging MSPEs under dense sampling}
\label{app:comparison}

This appendix supplies the second-order prediction ingredient used after the
spectral tail has been identified. We follow the residual of a single kriging
predictor and show that its normalized spectral energy leaves every bounded
frequency set under dense sampling. A ratio of spectral densities with a limit at
infinity can then be averaged against this energy. The argument uses only
second-order structure. It is related to the spectral approach to asymptotically
optimal prediction developed by \citet{Stein1993}, but the comparison result
needed here is established directly below; see also \citet{KirchnerBolin2022}.

\subsection{The normalized kriging residual}
\label{app:kriging-residual}

Write $e_s(\lambda)=e^{\mathrm i\lambda^{\mathsf T}s}$. For a continuous,
even, strictly positive integrable spectral density $f$, $L^2(f)$ is the
complex Hilbert space with inner product
$\langle g,h\rangle_{L^2(f)}=\int_{\R^d}g(\lambda)\overline{h(\lambda)}
f(\lambda)\dd\lambda$. Define
\begin{equation}\label{eq:Qf}
 Q_{f,n}(w)=\int_{\R^d}
 \left|e_{s_\star}(\lambda)-\sum_{j=1}^nw_je_{s_j}(\lambda)\right|^2
 f(\lambda)\dd\lambda,
 \qquad
 V_{f,n}=\min_{w\in\R^n}Q_{f,n}(w).
\end{equation}
Let $w_{f,n}$ denote the unique minimizer. For spectral densities $f_i$, we abbreviate
$Q_{f_i,n}$, $V_{f_i,n}$ and $w_{f_i,n}$ to $Q_{i,n}$, $V_{i,n}$ and
$w_{i,n}$. In particular, $Q_{0,n}(w_{1,n})$ is the true MSPE under the model with spectral density
$f_0$ of the predictor constructed using $f_1$.

\begin{lemma}
\label{lem:residual-energy}
Suppose $f$ has the properties above and, for some $b>0$ and $p>d$,
\[
 f(\lambda)\ge b(1+\|\lambda\|)^{-p},\qquad\lambda\in\R^d.
\]
For the designs in Assumption~\ref{ass:design}, put
\[
 u_{f,n}(\lambda)=
 \frac{e_{s_\star}(\lambda)-\sum_{j=1}^nw_{f,n,j}e_{s_j}(\lambda)}
      {\sqrt{V_{f,n}}},
 \qquad
 \Pi_{f,n}(\dd\lambda)=|u_{f,n}(\lambda)|^2f(\lambda)\dd\lambda.
\]
Then $V_{f,n}>0$, $V_{f,n}\to0$, and $\Pi_{f,n}$ is a probability measure
satisfying
\begin{equation}\label{eq:residual-local-energy}
 \Pi_{f,n}(B_R)\longrightarrow0,
 \qquad B_R=\{\lambda:\|\lambda\|\le R\},\quad R<\infty.
\end{equation}
Consequently, for every bounded Borel function $g$ such that
$g(\lambda)\to g_\infty$ as $\|\lambda\|\to\infty$,
\begin{equation}\label{eq:residual-averaging}
 \int_{\R^d}g(\lambda)\,\Pi_{f,n}(\dd\lambda)
 \longrightarrow g_\infty.
\end{equation}
\end{lemma}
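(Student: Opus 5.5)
Positivity of $V_{f,n}$ and $V_{f,n}\to0$ are the Gaussian facts already established after the proof of Proposition~\ref{prop:exact}. Take $Z$ a stationary Gaussian field with spectral density $f$. By the spectral isometry, $V_{f,n}$ is the kriging variance $\Var\{Z(s_\star)\mid Z(S_n)\}$. Strict positivity of $f$ makes the exponentials $e_{s_\star},e_{s_1},\ldots,e_{s_n}$ linearly independent in $L^2(f)$, so $V_{f,n}>0$. Density of the design and continuity of the covariance give $V_{f,n}\to0$. The normalization of $u_{f,n}$ makes $\Pi_{f,n}$ a probability measure.

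The main step is \eqref{eq:residual-local-energy}. I would use a smoothing, or duality, argument. By the normal equations, the residual $r_n=e_{s_\star}-\sum_j w_{f,n,j}e_{s_j}$ is orthogonal in $L^2(f)$ to every $e_{s_j}$, $j\le n$. Fix $R$. I want to show that $\int_{B_R}|r_n|^2f\,\dd\lambda=o(V_{f,n})$.

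Take a smooth compactly supported bump $\phi$ on $\R^d$ and set $\psi=\widehat\phi$, a Schwartz function with $\psi\ge c>0$ on $B_R$ after a suitable choice of $\phi$ (for example $\phi=\chi*\tilde\chi$ with $\chi$ supported in a small ball). Then
\[
\int_{B_R}|r_n|^2f \le c^{-1}\int \psi\,|r_n|^2 f .
\]
The quantity $\int\psi |r_n|^2 f$ is the MSPE of the same linear predictor for a field whose covariance is $\rho_\psi=\widehat{\psi f}$. This covariance is much smoother than $\rho$ because $\psi f$ decays faster than any polynomial.

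The key comparison I would aim for is $\int \psi|r_n|^2f\le \epsilon_n V_{f,n}$ with $\epsilon_n\to0$. To get it, write $|r_n|^2=r_n\overline{r_n}$ and use orthogonality to replace $\overline{r_n}$ by $\overline{r_n}-\overline{g}$ for any $g\in\operatorname{span}\{e_{s_j}:j\le n\}$, after expressing multiplication by $\psi$ through convolution in space. Concretely,
\[
\int\psi |r_n|^2 f=\Big\langle r_n,\ \psi r_n\Big\rangle_{L^2(f)},
\]
and $\psi r_n$ is an average of translates $e_{s_\star+t}-\sum_j w_j e_{s_j+t}$ weighted by $\phi(t)$. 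Cauchy--Schwarz then gives
\[
\int\psi|r_n|^2 f\le \sqrt{V_{f,n}}\;\Big\|\operatorname{proj}^{\perp}_n(\psi r_n)\Big\|_{L^2(f)} .
\]
It remains to show that the distance in $L^2(f)$ from $\psi r_n$ to the span of observed exponentials is $o(\sqrt{V_{f,n}})$. The intuition is that the lower envelope $f\ge b(1+\|\lambda\|)^{-p}$ makes $V_{f,n}$ decay no faster than polynomially in the local mesh, whereas the smooth functions $\psi e_{s}$ are approximable from the dense design at a faster rate.

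I expect this quantitative comparison to be the main obstacle, because the design is only assumed dense, with no rate. If the direct duality bound fails, I would argue by contradiction with a compactness argument instead. Suppose $\Pi_{f,n}(B_R)\ge\delta$ along a subsequence. The functions $u_{f,n}$ have unit $L^2(f)$ norm, so a further subsequence converges weakly to some $u\in L^2(f)$. The restrictions of $u_{f,n}$ to $B_R$ are entire functions of exponential type bounded by $\operatorname{diam}D$ and uniformly bounded in $L^2(B_R)$. They therefore converge uniformly on $B_R$ by a Paley--Wiener/Montel argument, so the mass $\delta$ persists and $u\ne0$. On the other hand, $u_{f,n}$ is orthogonal to $e_{s_j}$ for all $j\le n$. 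In the limit $u\perp e_s$ for every $s\in D$, by density and continuity of $s\mapsto e_s$ in $L^2(f)$. The normalized target coefficient $1/\sqrt{V_{f,n}}\to\infty$ then has to be reconciled. Writing $u_{f,n}\sqrt{V_{f,n}}\to 0$ while $u_{f,n}$ lies in the closed span $\mathcal L_D=\overline{\operatorname{span}}\{e_s:s\in D\}$, the weak limit $u$ lies in $\mathcal L_D$ and is orthogonal to it, so $u=0$, a contradiction.

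The delicate points are justifying that the weak limit lies in $\mathcal L_D$ and that local mass does not escape under weak convergence. The lower spectral bound is what supplies the uniform local control.

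Finally, \eqref{eq:residual-averaging} follows routinely. Given $\eta>0$, choose $R$ with $|g-g_\infty|<\eta$ outside $B_R$. Then
\[
\Big|\int g\,\dd\Pi_{f,n}-g_\infty\Big|\le 2\|g\|_\infty\,\Pi_{f,n}(B_R)+\eta ,
\]
and letting $n\to\infty$ and then $\eta\to0$ gives the claim.
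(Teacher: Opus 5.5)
Your treatment of $V_{f,n}>0$, $V_{f,n}\to0$, the normalization, and the final averaging step is fine. The Hilbert-space skeleton of your second route is also sound. Each $u_{f,n}$ lies in the closed, hence weakly closed, subspace $\mathcal L_D$, so membership of a weak limit in $\mathcal L_D$ is immediate, not delicate. Orthogonality to $e_{s_j}$, $j\le n$, together with density of the design and continuity of $s\mapsto e_s$, forces every weak limit point to vanish, i.e.\ $u_{f,n}\rightharpoonup0$. The paper's proof starts from exactly this.

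\textbf{The gap: passing from weak convergence to $\Pi_{f,n}(B_R)\to0$.} You argue that the restrictions to $B_R$ are exponential sums with frequencies in $D$, bounded in $L^2(B_R)$, hence precompact in $C(B_R)$ by a Paley--Wiener/Montel argument. That inference is false: a local $L^2$ bound plus bounded exponential type gives no local compactness. Take $d=1$, $D=[-1,1]$ and $s_0\in(-1,1)$. Iterated difference quotients in $s$ show that $P(\lambda)e_{s_0}(\lambda)$, for any polynomial $P$, is a locally uniform limit of exponential sums with frequencies in $D$. Let $P$ run through the $L^2(B_R)$-normalized Legendre polynomials. This produces such sums that are asymptotically orthonormal in $L^2(B_R)$, so they converge weakly to zero there while keeping unit mass on $B_R$. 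Such sequences are excluded only by the \emph{global} constraint $\|u_{f,n}\|_{L^2(f)}=1$ with $f\ge b(1+\norm{\lambda})^{-p}$ on all of $\R^d$. You assert that this constraint supplies local control but never show how. Your first (duality) route is, as you say, unfinished, and the required $o(\sqrt{V_{f,n}})$ approximation rate is not available from a dense design without a rate.

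\textbf{The missing idea: bound point evaluation by the global $L^2(f)$ norm.} Take a smooth compactly supported $\chi$ equal to one on a neighbourhood of $D$, with $\chi(s)=\int e^{\mathrm i\lambda^{\mathsf T}s}\widehat\chi(\lambda)\dd\lambda$. Every finite combination $u$ of the $e_s$, $s\in D$, satisfies the reproducing identity $u(\lambda)=\int\widehat\chi(\xi-\lambda)u(\xi)\dd\xi$, since termwise the integral equals $\chi(s)e_s(\lambda)=e_s(\lambda)$. Multiplying and dividing by $\sqrt{f(\xi)}$ and applying Cauchy--Schwarz gives
\[
 |u(\lambda)|^2\le\|u\|_{L^2(f)}^2\int_{\R^d}\frac{|\widehat\chi(\xi-\lambda)|^2}{f(\xi)}\dd\xi\le C_R\|u\|_{L^2(f)}^2,\qquad\lambda\in B_R .
\]
Here $C_R<\infty$ because
\[
 |\widehat\chi(\eta)|\le C_M(1+\norm{\eta})^{-M}\ \text{with}\ 2M>p+d,\qquad f(\xi)\ge b(1+\norm{\xi})^{-p},\qquad 1+\norm{\xi}\le(1+R)(1+\norm{\xi-\lambda}).
\]
This is precisely where the lower envelope enters. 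It makes $L_\lambda(h)=\int\widehat\chi(\xi-\lambda)h(\xi)\dd\xi$ a bounded functional on $L^2(f)$. Weak convergence then gives $u_{f,n}(\lambda)=L_\lambda(u_{f,n})\to0$ for each $\lambda$, while $|u_{f,n}|^2\le C_R$ on $B_R$. Dominated convergence yields \eqref{eq:residual-local-energy} directly, with no contradiction argument or subsequence extraction. The same device applied to $\nabla u$ would supply the equicontinuity your compactness route needs, but it then makes that route superfluous.
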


\begin{proof}
The argument is the mean-square projection argument underlying simple
kriging, written in its spectral representation. Work in the complex
Hilbert space
\[
 \mathscr H_f=
 \overline{\operatorname{span}_{\mathbb C}\{e_s:s\in D\}}^{\,L^2(f)},
 \qquad
 \mathscr H_n=\operatorname{span}_{\mathbb C}
                  \{e_{s_1},\ldots,e_{s_n}\}.
\]
Evenness of $f$ makes the covariance matrices and target vectors real, so
the orthogonal projection coefficients are the real weights $w_{f,n}$.
Strict positivity of $f$ and distinctness of the sites, including the
target, give $V_{f,n}>0$.

Let $C_f(h)=\int e^{\mathrm i\lambda^{\mathsf T}h}f(\lambda)\dd\lambda$.
Since $f$ is integrable,
\[
 \|e_s-e_t\|_{L^2(f)}^2
 =2\{C_f(0)-C_f(s-t)\}\longrightarrow0\qquad(t\to s).
\]
Density of the sites and nesting therefore imply
\begin{equation}\label{eq:subspace-approximation}
 \operatorname{dist}_{L^2(f)}(h,\mathscr H_n)\longrightarrow0,
 \qquad h\in\mathscr H_f.
\end{equation}
Taking $h=e_{s_\star}$ gives $V_{f,n}\to0$. Also,
$\|u_{f,n}\|_{L^2(f)}=1$ and $u_{f,n}\perp\mathscr H_n$, whence
\[
 |\langle u_{f,n},h\rangle_{L^2(f)}|
 \le\operatorname{dist}_{L^2(f)}(h,\mathscr H_n)\longrightarrow0.
\]
Thus $u_{f,n}\rightharpoonup0$ in $\mathscr H_f$.

Choose a real, even, smooth function $\chi$ with compact support, equal
to one on a neighbourhood of $D\cup(-D)$; compactness of $D$ permits
this choice. We use the Fourier pair
\[
 \widehat\chi(\lambda)=(2\pi)^{-d}\int_{\R^d}
       e^{-\mathrm i\lambda^{\mathsf T}s}\chi(s)\dd s,
 \qquad
 \chi(s)=\int_{\R^d}e^{\mathrm i\lambda^{\mathsf T}s}
                  \widehat\chi(\lambda)\dd\lambda.
\]
Every finite linear combination $u$ of the $e_s$, $s\in D$, satisfies
\begin{equation}\label{eq:reproduction}
 u(\lambda)=\int_{\R^d}\widehat\chi(\xi-\lambda)u(\xi)\dd\xi.
\end{equation}
Indeed, the identity holds term by term because the integral on the right for
$u=e_s$ equals $e_s(\lambda)\chi(s)=e_s(\lambda)$. To control evaluation
uniformly for $\lambda\in B_R$, choose an integer $M$ with $2M>p+d$.
Smoothness and compact support give
$|\widehat\chi(\eta)|\le C_M(1+\|\eta\|)^{-M}$. The polynomial lower
bound on $f$ and $1+\|\xi\|\le(1+R)(1+\|\xi-\lambda\|)$ then imply
\[
 \frac{|\widehat\chi(\xi-\lambda)|^2}{f(\xi)}
 \le C_{M,R}(1+\|\xi-\lambda\|)^{p-2M},
 \qquad\lambda\in B_R.
\]
The right-hand side is integrable in $\xi$, uniformly in $\lambda$, so
\begin{equation}\label{eq:residual-evaluation-bound}
 C_R:=\sup_{\lambda\in B_R}
 \int_{\R^d}\frac{|\widehat\chi(\xi-\lambda)|^2}{f(\xi)}\dd\xi
 <\infty.
\end{equation}
Weak convergence in $L^2(f)$ alone would not imply pointwise
convergence. The cutoff supplies the missing continuity: for fixed
$\lambda$, define
\[
 L_\lambda(h)=\int_{\R^d}\widehat\chi(\xi-\lambda)h(\xi)\dd\xi,
 \qquad h\in L^2(f).
\]
Multiplying and dividing the integrand by $\sqrt{f(\xi)}$, then using
Cauchy--Schwarz and~\eqref{eq:residual-evaluation-bound}, gives
\[
 |L_\lambda(h)|
 \le\|h\|_{L^2(f)}
      \left(\int_{\R^d}
       \frac{|\widehat\chi(\xi-\lambda)|^2}{f(\xi)}\dd\xi\right)^{1/2}
 \le C_R^{1/2}\|h\|_{L^2(f)},\qquad\lambda\in B_R.
\]
Thus $L_\lambda$ restricts to a continuous linear functional on
$\mathscr H_f$. The residual $u_{f,n}$ is a finite linear combination of
the $e_s$, so~\eqref{eq:reproduction} gives
$u_{f,n}(\lambda)=L_\lambda(u_{f,n})\to0$. The same bound and
$\|u_{f,n}\|_{L^2(f)}=1$ imply $|u_{f,n}(\lambda)|^2\le C_R$ on $B_R$.
Since $f$ is integrable on $B_R$, dominated convergence gives
\[
 \Pi_{f,n}(B_R)=\int_{B_R}|u_{f,n}(\lambda)|^2f(\lambda)\dd\lambda
 \longrightarrow0.
\]
Finally, given $\varepsilon>0$, choose $R$ so that
$|g-g_\infty|\le\varepsilon$ outside $B_R$. Since $\Pi_{f,n}$ has total
mass one,
\[
 \left|\int g\,\dd\Pi_{f,n}-g_\infty\right|
 \le \|g-g_\infty\|_\infty\Pi_{f,n}(B_R)+\varepsilon.
\]
Letting $n\to\infty$ and then $\varepsilon\downarrow0$ proves
\eqref{eq:residual-averaging}.
\end{proof}

\subsection{Comparison under covariance misspecification}
\label{app:kriging-comparison}

\begin{lemma}\label{lem:spectral-comparison}
Let $f_0,f_1$ be continuous, even, strictly positive integrable functions.
Suppose that for some $p>d$ and $0<b\le B<\infty$,
\begin{equation}\label{eq:comparison-envelope}
 b(1+\|\lambda\|)^{-p}\le f_0(\lambda)
       \le B(1+\|\lambda\|)^{-p},
\end{equation}
and that $f_1/f_0\to a\in(0,\infty)$ as $\|\lambda\|\to\infty$.
For the designs in Assumption~\ref{ass:design},
\begin{equation}\label{eq:comparison-conclusions}
 \frac{Q_{0,n}(w_{1,n})}{V_{0,n}}\longrightarrow1,
 \qquad
 \frac{V_{1,n}}{V_{0,n}}\longrightarrow a,
 \qquad
 \frac{V_{1,n}}{Q_{0,n}(w_{1,n})}\longrightarrow a.
\end{equation}
\end{lemma}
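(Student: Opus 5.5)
The plan is to prove the three conclusions by averaging spectral-density ratios against the normalized residual measures of Lemma~\ref{lem:residual-energy}. Start with $V_{1,n}/V_{0,n}$. By minimality, $V_{1,n}\le Q_{1,n}(w_{0,n})$. Write $Q_{1,n}(w_{0,n})$ as the integral of $|u_{f_0,n}|^2 f_1$, scaled by $V_{0,n}$. This gives
\[
 \frac{V_{1,n}}{V_{0,n}}\le\frac{Q_{1,n}(w_{0,n})}{V_{0,n}}
 =\int_{\R^d}\frac{f_1}{f_0}\,\dd\Pi_{f_0,n}.
\]
The ratio $g=f_1/f_0$ is continuous and strictly positive. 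It tends to $a$ at infinity, so it is bounded on $\R^d$. Lemma~\ref{lem:residual-energy} applies to $f_0$ because of the lower envelope in~\eqref{eq:comparison-envelope}. It gives that the right-hand side tends to $a$, hence $\limsup V_{1,n}/V_{0,n}\le a$.

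For the reverse inequality, interchange the roles of $f_0$ and $f_1$. This needs a polynomial lower envelope for $f_1$, so that Lemma~\ref{lem:residual-energy} can be applied to $f_1$. It holds because $f_1/f_0\to a>0$ and $f_1$ is continuous and positive on compacts. Then $f_1\ge b'(1+\|\lambda\|)^{-p}$ for some $b'>0$, and $f_0/f_1$ is bounded with limit $1/a$. One gets
\[
 \frac{V_{0,n}}{V_{1,n}}\le\int\frac{f_0}{f_1}\,\dd\Pi_{f_1,n}\longrightarrow\frac1a,
\]
so $\liminf V_{1,n}/V_{0,n}\ge a$, which gives the second conclusion.

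For the first conclusion, write $Q_{0,n}(w_{1,n})$ against the residual measure of $f_1$:
\[
 \frac{Q_{0,n}(w_{1,n})}{V_{1,n}}=\int\frac{f_0}{f_1}\,\dd\Pi_{f_1,n}\longrightarrow\frac1a .
\]
Combining with $V_{1,n}/V_{0,n}\to a$ gives $Q_{0,n}(w_{1,n})/V_{0,n}\to1$. The third conclusion is the quotient of the second by the first.

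The only genuine point to check is the envelope transfer to $f_1$ and the boundedness of both ratios. Both follow from continuity, strict positivity, and the existence of a positive finite limit at infinity: a compact-ball argument plus the tail. An upper envelope for $f_1$ is not needed, since Lemma~\ref{lem:residual-energy} uses only integrability and the lower bound. Positivity of $V_{1,n}$, $V_{0,n}$ and $Q_{0,n}(w_{1,n})$ comes from strict positivity and distinct sites, so all the ratios are well defined. The rest is bookkeeping with the identities $Q_{i,n}(w_{j,n})=V_{j,n}\int (f_i/f_j)\,\dd\Pi_{f_j,n}$.
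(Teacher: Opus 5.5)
Your proposal is correct and follows essentially the same route as the paper. Both arguments sandwich $V_{1,n}/V_{0,n}$ by minimality between $\{Q_{0,n}(w_{1,n})/V_{1,n}\}^{-1}$ and $Q_{1,n}(w_{0,n})/V_{0,n}$, evaluate these through the residual-averaging limit of Lemma~\ref{lem:residual-energy} with $g=f_1/f_0$ and $g=f_0/f_1$ (using the two-sided bound $0<m\le f_1/f_0\le M<\infty$ to transfer the lower envelope to $f_1$), and then obtain the other two limits by multiplication and division.
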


\begin{proof}
Continuity, positivity and the limiting ratio of the spectral densities give
\begin{equation}\label{eq:bounded-ratio}
 0<m\le f_1/f_0\le M<\infty.
\end{equation}
Thus both spectral densities satisfy the assumptions of
Lemma~\ref{lem:residual-energy}. Applying~\eqref{eq:residual-averaging}
with $f=f_0$ and $g=f_1/f_0$, and then interchanging the spectral densities, gives
\[
 A_n:=\frac{Q_{1,n}(w_{0,n})}{V_{0,n}}\longrightarrow a,
 \qquad
 B_n:=\frac{Q_{0,n}(w_{1,n})}{V_{1,n}}\longrightarrow a^{-1}.
\]
The minimizing property of the two predictors implies
\[
 V_{1,n}\le Q_{1,n}(w_{0,n}),
 \qquad
 V_{0,n}\le Q_{0,n}(w_{1,n}),
\]
and therefore
\[
 B_n^{-1}\le\frac{V_{1,n}}{V_{0,n}}\le A_n.
\]
Hence $V_{1,n}/V_{0,n}\to a$. The remaining conclusions follow from
\[
 \frac{Q_{0,n}(w_{1,n})}{V_{0,n}}
 =B_n\frac{V_{1,n}}{V_{0,n}}\longrightarrow1,
 \qquad
 \frac{V_{1,n}}{Q_{0,n}(w_{1,n})}=B_n^{-1}\longrightarrow a.
\]
\end{proof}

The residual argument uses only a limiting ratio of spectral densities and
a polynomial lower bound. No relative spectral remainder rate or restriction
$d\le3$ is required, and equivalence of Gaussian measures is not assumed.

\begin{remark}\label{rem:nonnested-designs}
Nesting is used only to establish~\eqref{eq:subspace-approximation}.
The residual argument and all the MSPE comparisons remain valid for
possibly nonnested finite designs excluding $s_\star$ whenever their
prediction spaces satisfy that approximation property. A sufficient
geometric condition is
\[
 \min_{t\in S_n}\|s-t\|\longrightarrow0\qquad\text{for every }s\in D.
\]
Covariance continuity proves approximation first for each $e_s$, then for
finite linear combinations and finally for their closure. Equivalent
spectral norms, as in~\eqref{eq:bounded-ratio}, preserve this property.
In particular, approximation of the target gives $V_{f,n}\to0$.
\end{remark}

\section{Proofs of the main prediction results}
\label{app:main-proofs}

The proofs follow the order of the statements in Section~\ref{sec:main}:
Theorem~\ref{thm:main}, Corollaries~\ref{cor:latent-Matern} and~\ref{cor:rough},
Proposition~\ref{prop:exponential-sandwich}, and Theorem~\ref{thm:general}. The ingredients are the exact oracle MSPE in
Appendix~\ref{app:Hermite}, the spectral-density bounds in
Appendix~\ref{app:convolutions}, and the kriging comparison in
Appendix~\ref{app:comparison}.

\subsection{Asymptotic MSPE equivalence}
\label{app:asymptotic-proofs}

\begin{proof}[Proof of Theorem~\ref{thm:main}]
The common latent prediction variance satisfies $v_n>0$ and $v_n\to0$.
Proposition~\ref{prop:exact} gives
\begin{equation}\label{eq:oracle-scaled-limit}
 \frac{\MSPEO}{v_n}\longrightarrow\mathcal D(F)\in(0,\infty).
\end{equation}
It remains to obtain the same limit for the largest MSPE, $\MSPER$.
Under the latent spectral density $f$, the minimizing weights are
$w_{f,n}=\boldsymbol w_{\rho,n}$ and $V_{f,n}=v_n$. Consequently,
\begin{equation}\label{eq:latent-residual-risk}
 \frac{\MSPER}{v_n}
 =\int_{\R^d}\frac{f_Y(\lambda)}{f(\lambda)}
                  \,\Pi_{f,n}(\dd\lambda)
 \longrightarrow\mathcal D(F).
\end{equation}
Here Proposition~\ref{prop:tail} makes $f_Y/f$ bounded with limit
$\mathcal D(F)$, and Lemma~\ref{lem:residual-energy} gives the convergence.

Nonconstancy of $F$ and~\eqref{eq:oracle-Hermite} imply $\MSPEO>0$.
Now divide the hierarchy
\[
 0<\MSPEO\le\MSPEN\le\MSPEL\le\MSPER
\]
by $v_n$. Its lower and upper endpoints both converge to
$\mathcal D(F)$ by~\eqref{eq:oracle-scaled-limit} and
\eqref{eq:latent-residual-risk}; the squeeze theorem therefore gives the same
limit for the two intermediate MSPEs. This proves~\eqref{eq:main-risk-scale}
and~\eqref{eq:main-ratio}. Proposition~\ref{prop:finite-gains} then gives
\eqref{eq:vanishing-gains}. Finally, since
$\mathcal G_n\subseteq\mathcal H_n$, conditional-expectation
orthogonality yields
\[
 \MSPEN-\MSPEO=\E[(\YN-\YO)^2].
\]
Dividing by $\MSPEO$ proves~\eqref{eq:oracle-information-gap}.
\end{proof}

\subsection{Mat\'ern specialization}
\label{app:matern-corollary}

\begin{proof}[Proof of Corollary~\ref{cor:latent-Matern}]
The spectral density in~\eqref{eq:matern-spectral-density} is continuous,
even, strictly positive, and radially decreasing. Moreover,
\[
 f(\lambda)
 \sim C_{\nu,d}\alpha^{-2\nu}\|\lambda\|^{-d-2\nu},
 \qquad\|\lambda\|\to\infty.
\]
Thus Assumption~\ref{ass:spectra} and~\eqref{eq:radial-monotone} hold,
with $c=C_{\nu,d}\alpha^{-2\nu}>0$. The remaining assumptions are those
of the corollary, so Theorem~\ref{thm:main} applies.
\end{proof}

\subsection{The rough-latent regime}
\label{app:rough-corollary}

\begin{proof}[Proof of Corollary~\ref{cor:rough}]
Together with~\eqref{eq:L2-F}, the weak-gradient condition implies
$F\in W^{1,2}(\gq)$. Lemma~\ref{lem:derivatives} gives
\[
 \sum_{\ell\ge1}\ell a_\ell=\E\|\nabla F(\bG)\|^2<\infty.
\]
For $0<\nu<1$, $w_\nu(\ell)=\ell$, so this is precisely
Assumption~\ref{ass:Hermite}. All other hypotheses of
Theorem~\ref{thm:main} are assumed, and its conclusions follow.
\end{proof}

\subsection{The exponential finite-sample bound}
\label{app:exponential}

\begin{proof}[Proof of Proposition~\ref{prop:exponential-sandwich}]
The exponential spectral density, under~\eqref{eq:spectral-convention}, is
\begin{equation}\label{eq:exponential-density}
 f_\alpha(\lambda)=K_d\alpha^{-1}
       (\alpha^{-2}+\|\lambda\|^2)^{-(d+1)/2},
 \qquad K_d=\frac{\Gamma\{(d+1)/2\}}{\pi^{(d+1)/2}}.
\end{equation}
Since
\begin{equation}\label{eq:exponential-product}
 \rho(h)^\ell=\exp(-\ell\|h\|/\alpha),
 \qquad f_\alpha^{*\ell}=f_{\alpha/\ell},
\end{equation}
we have, at every frequency,
\begin{equation}\label{eq:exponential-domination}
 \frac{f_{\alpha/\ell}(\lambda)}{f_\alpha(\lambda)}
 =\ell\left(
 \frac{\alpha^{-2}+\|\lambda\|^2}
      {\ell^2\alpha^{-2}+\|\lambda\|^2}
 \right)^{(d+1)/2}\le\ell.
\end{equation}
By~\eqref{eq:observed-spectral-density}, the nonnegative covariance-map
coefficients therefore give
\begin{equation}\label{eq:exponential-global-domination}
 0<f_Y(\lambda)
 =\sum_{\ell\ge1}a_\ell f_{\alpha/\ell}(\lambda)
 \le\left(\sum_{\ell\ge1}\ell a_\ell\right)f_\alpha(\lambda)
 =\mathcal D(F)f_\alpha(\lambda).
\end{equation}
Strict positivity of $f_Y$ also gives nonsingular observed covariance
matrices on distinct sites.

Integrating~\eqref{eq:exponential-global-domination} against the squared
Fourier transform of any finite coefficient vector gives the corresponding
covariance quadratic-form ordering. Thus the variance of any centered linear
combination of $Y$ values is at most $\mathcal D(F)$ times the variance of the same
linear combination of $Z_1$ values. Apply this to the prediction-error coefficients
$(1,-\boldsymbol w_{\rho,n}^{\mathsf T})$. Since the predictors use the
correct mean,
\[
 \MSPER
 \le \mathcal D(F)\Var\{Z_1(s_\star)
       -\boldsymbol w_{\rho,n}^{\mathsf T}Z_1(S_n)\}
 =\mathcal D(F)v_n.
\]
Proposition~\ref{prop:exact}, the MSPE hierarchy, and positivity of
\eqref{eq:oracle-Hermite} prove
\eqref{eq:exponential-finite-sandwich} and
\eqref{eq:exponential-finite-ratios}. If $v_n\to0$, then
$\Delta_F(v_n)/(\mathcal D(F)v_n)\to1$, so the sandwich gives the same asymptotic
MSPE scale. The projection identities give the remaining conclusions of
Theorem~\ref{thm:main}. Comparison with the best predictor based only on the nearest latent
observation gives
\[
 v_n\le1-\exp\{-2\operatorname{dist}(s_\star,S_n)/\alpha\},
\]
which proves the stated sufficient design condition.

For the final bound, suppose $K_F''(1^-)<\infty$. The nonnegative
covariance-map coefficients imply
$0\le K_F''(u)\le K_F''(1^-)$ for $0\le u\le1$. The integral remainder
formula gives
\[
 0\le \mathcal D(F)v-\Delta_F(v)
   =\int_0^v(v-t)K_F''(1-t)\dd t
   \le\tfrac12 K_F''(1^-)v^2.
\]
Combining this with~\eqref{eq:exponential-finite-sandwich} proves
\eqref{eq:exponential-finite-gap}.
\end{proof}

For completeness, each ratio in~\eqref{eq:exponential-domination} tends
to $\ell$ at infinity. Dominated convergence also gives
\begin{equation}\label{eq:exponential-tail-ratio}
 \frac{f_Y(\lambda)}{f_\alpha(\lambda)}
 =\sum_{\ell\ge1}a_\ell
      \frac{f_{\alpha/\ell}(\lambda)}{f_\alpha(\lambda)}
 \longrightarrow \mathcal D(F).
\end{equation}
This verifies~\eqref{eq:exponential-cY}, but is not needed for the
finite-sample sandwich or its asymptotic consequence.

\subsection{Removing radial monotonicity}
\label{app:general-proof}

\begin{proof}[Proof of Theorem~\ref{thm:general}]
Condition~\eqref{eq:conservative-Hermite} implies
$0<\mathcal D(F)<\infty$. Repeat the proof of
Proposition~\ref{prop:tail} using Lemma~\ref{lem:conv-bound} in place of
Lemma~\ref{lem:refined-convolution}. With $p=d+2\nu$, the bound
\[
 \|\lambda\|^p f^{*\ell}(\lambda)\le C\ell^{p+1},
 \qquad \|\lambda\|\ge1,
\]
and~\eqref{eq:conservative-Hermite} justify dominated convergence in
the spectral series. Thus $f_Y/f\to\mathcal D(F)$, and $f_Y$ is continuous
and strictly positive. The ratio is bounded by positivity and continuity
on compact sets and its limit at infinity. The single-residual comparison
\eqref{eq:latent-residual-risk} and the same oracle sandwich now prove
all the conclusions, without radial monotonicity.
\end{proof}

\section{Regularity checks for the models in the main text}
\label{app:examples}

The calculations follow the representative constructions in
Section~\ref{sec:examples}. We also give the definitions and calculations
for the spatial-scale modulation, product, and Gaussian-square fields,
which appear only in Table~\ref{tab:models} of the main text.
The Gaussian weak-derivative criterion is Lemma~\ref{lem:derivatives}.
Additional ratio-field constructions are given in
Appendix~\ref{app:additional-models}.

\subsection{Log-Gaussian fields}
\label{app:lognormal-regularity}

For the log-Gaussian transformation, every derivative is a constant times
$e^{\mu+\omega z}$, hence square integrable under Gaussian measure. Its
first derivative yields~\eqref{eq:lognormal-example}. The Gaussian
moment-generating function gives
$\E e^{\mu+\omega G}=e^{\mu+\omega^2/2}$ and
$\E e^{2\mu+\omega(G+G_u)}=e^{2\mu+\omega^2(1+u)}$.
Subtracting the squared mean gives the covariance map
\begin{equation}\label{eq:KF-lognormal}
K_F(u)=e^{2\mu+\omega^2}\bigl(e^{\omega^2u}-1\bigr),
\qquad K_F'(1^-)=\omega^2e^{2\mu+2\omega^2}.
\end{equation}

For completeness, the exact log-Gaussian MSPE used in
Section~\ref{sec:numerical} follows directly from scalar Gaussian
regression. Let $M_n=\E\{Z(s_\star)\mid\mathcal H_n\}$. Then
\begin{equation}\label{eq:latent-decomposition}
 Z(s_\star)=M_n+\sqrt{v_n}U_n,
 \qquad M_n\sim N(0,1-v_n),\quad U_n\sim N(0,1),
\end{equation}
with independent $M_n,U_n$. Invertibility gives
$\YN=\exp(\mu+\omega M_n+\omega^2v_n/2)$, and the law of total variance
therefore yields
\begin{equation}\label{eq:lognormal-exact-risk}
 \MSPEN=e^{2\mu+2\omega^2}(1-e^{-\omega^2v_n})
       \sim\omega^2e^{2\mu+2\omega^2}v_n.
\end{equation}

\subsection{Sinh--arcsinh, Tukey, and Gaussian-copula transformations}
\label{app:additional-scalars}

The sinh--arcsinh distribution was introduced by
\citet{JonesPewsey2009}; the corresponding spatial random-field
construction was studied by \citet{BlasiEtAl2022}. In the present notation,
\[
 Y(s)=\mu+\sigma\sinh\{(\asinh Z(s)+\varepsilon)/\kappa\},
 \qquad\mu,\varepsilon\in\R,\quad\sigma,\kappa>0.
\]
The associated transformation satisfies the general regularity condition
for every finite latent smoothness $\nu$.

The Tukey $g$-and-$h$ random field introduced and studied by
\citet{XuGenton2017} is
\begin{equation}\label{eq:tgh-example}
 Y(s)=\xi+\omega\tau_{g,h}\{Z(s)\},\qquad
 \tau_{g,h}(z)=
 \begin{cases}
  g^{-1}(e^{gz}-1)e^{hz^2/2},&g\ne0,\\
  z e^{hz^2/2},&g=0,
 \end{cases}
 .
\end{equation}
Here $\xi,g\in\R$, $\omega>0$, and $0\le h<1/2$.
All derivatives of the sinh--arcsinh and Tukey transformations have finite
Gaussian second moments under the stated parameter restrictions. For the
Tukey field, $h<1/2$ is essential to the present second-order formulation;
it is not imposed merely for invertibility. \citet{XuGenton2017} also
develop conditional prediction for their model. Here we apply the fixed-domain
comparison between linear and optimal nonlinear MSPEs.

\paragraph{Regularity check.}

For a fixed sinh--arcsinh transformation, it and each scalar derivative
have at most polynomial growth as $|z|\to\infty$. This follows by
repeated differentiation of $\asinh z$, whose derivatives are rational
combinations of $z$ and $(1+z^2)^{-1/2}$, and by
$e^{|\asinh z|/\kappa}\le C(1+|z|)^{1/\kappa}$.
All derivatives are bounded on compact sets and have finite Gaussian second
moments. For Tukey transformations with fixed $g$ and $0\le h<1/2$,
every derivative is bounded by
\[
 C(1+|z|)^M\exp\{hz^2/2+|g||z|\}
\]
for suitable constants depending on the derivative order. The square of
this bound is Gaussian-integrable because $h<1/2$. This includes the continuous definition at $g=0$. Since derivatives of
every fixed order are square integrable, Lemma~\ref{lem:derivatives}
proves both the refined and the conservative conditions in these cases.

Gaussian-copula random fields provide another scalar construction. Let $Q$
be a strictly increasing marginal quantile function and define
\[
 Y(s)=Q\{\Phi(Z(s))\},
 \qquad
 F(z)=Q\{\Phi(z)\}.
\]
If $Q$ is locally absolutely continuous, then
\[
 F'(z)=Q'\{\Phi(z)\}\phi(z)
\]
for almost every $z$. Since $\Phi(G)$ is uniform on $(0,1)$,
\[
 \E\{F(G)^2\}=\int_0^1 Q(u)^2\dd u,
\]
and a change of variable $u=\Phi(z)$ gives
\begin{equation}\label{eq:gaussian-copula-energy}
 \mathcal D(F)
 =\int_0^1 [Q'(u)]^2
   \phi\{\Phi^{-1}(u)\}^2\dd u.
\end{equation}
Thus
\[
 \int_0^1 Q(u)^2\dd u<\infty,
 \qquad
 \int_0^1 [Q'(u)]^2
   \phi\{\Phi^{-1}(u)\}^2\dd u<\infty
\]
is a direct sufficient criterion for $F\in W^{1,2}(\gamma_1)$. Indeed,
$\Phi$ maps each compact interval into a compact subinterval of $(0,1)$,
where it is a smooth diffeomorphism. Thus $Q\circ\Phi$ is locally
absolutely continuous and its weak derivative is the displayed chain-rule
expression. The two integrals are exactly its squared Gaussian function
and derivative norms. Consequently, Corollary~\ref{cor:rough} applies
for $0<\nu<1$ under the stated latent spectral assumptions. For smoother
latent fields, applicability of Theorem~\ref{thm:main} depends on the
corresponding higher-order Gaussian-Sobolev regularity of $Q\circ\Phi$.

\subsection{Skew-Gaussian fields and fractional regularity}
\label{app:fractional-examples}

The weak gradient of $\mu+\delta|z_1|+\tau z_2$ is
$(\delta\operatorname{sign}z_1,\tau)$ almost everywhere. Its components
are bounded, giving $\mathcal D(F)=\delta^2+\tau^2$.
The covariance map is
\begin{equation}\label{eq:skew-covariance}
K_F(u)=\delta^2 A(u)+\tau^2u,\qquad
A(u)=\frac2\pi\left\{\sqrt{1-u^2}+u\arcsin u-1\right\}.
\end{equation}
This follows from the bivariate Gaussian integral
\[
 \E(|G||G'|)=\frac2\pi\{\sqrt{1-u^2}+u\arcsin u\},
 \qquad\Cov(G,G')=u,
\]
and subtraction of $(\E|G|)^2=2/\pi$. The integral can also be found
by differentiation in $u$, using
$\E(\operatorname{sign}G\operatorname{sign}G')=2\arcsin(u)/\pi$,
with initial value $2/\pi$ at $u=0$.

\paragraph{Fractional regularity of the absolute value.}

For a standard normal variable $G$, the squared orthonormal Hermite
coefficient of $|G|$ at degree $2k$, $k\ge1$, is
\begin{equation}\label{eq:absolute-Hermite}
 b_{2k}^2=\frac2\pi\frac{\{(2k-3)!!\}^2}{(2k)!}
          \asymp k^{-5/2},
\end{equation}
where $(-1)!!=1$. To obtain this identity, integrate
$|x|H_{2k}(x)$ against the standard Gaussian density twice by parts:
the distributional second derivative of $|x|$ is $2\delta_0$, where
$\delta_0$ is the unit point mass at the origin, and
$H_{2k-2}(0)=(-1)^{k-1}(2k-3)!!$. The odd coefficients vanish.
Stirling's formula gives the displayed order. Consequently,
$|z|\in W^{s,2}(\gamma_1)$ exactly when $s<3/2$; the logarithm in
\eqref{eq:Hermite-condition} at $\nu=1$ is integrable as well. The
independent linear term in the skew-Gaussian example adds only a
first-degree component. This proves its refined inclusion for $\nu<3/2$.

\subsection{Weibull fields}
\label{app:weibull-regularity}

For the Weibull transformation, write $S=z_1^2+z_2^2$ at
$\bz=(z_1,z_2)$. At a standard Gaussian vector, $S/2$ is exponential
with mean one. Away from the origin,
\[
 \partial_jF(\bz)=\frac{\tau}{k}
          (S/2)^{1/k-1}z_j,\qquad j=1,2.
\]
Thus
\[
 \sum_{j=1}^2(\partial_jF)^2
      =\frac{2\tau^2}{k^2}(S/2)^{2/k-1}.
\]
Its expectation is $2\tau^2\Gamma(2/k)/k^2$, finite for every $k>0$.
This proves the total energy in~\eqref{eq:Weibull-energy}. These are weak derivatives across the
origin. For instance, cut off the function inside a ball of radius
$\epsilon$ and smooth the transition; the squared gradient error near
zero is of order $\epsilon^{4/k}$ and tends to zero. Gaussian tails
control the cutoff at infinity. The function itself has finite moments
of every positive order. Lemma~\ref{lem:derivatives} now verifies
$W^{1,2}(\gamma_2)$.

\subsection{Clayton-like fields and marginal quantiles}
\label{app:clayton}

\begin{proof}[Proof of Corollary~\ref{cor:clayton}]
Write a standard Gaussian vector in $\R^{q_0+2}$ as $(X,W)$, with
$X\in\R^{q_0}$, $W\in\R^2$, and independent blocks. Put
\[
 S=\norm X^2+\norm W^2,\qquad T=\frac{\norm X^2}{S}.
\]
The change of variables from the independent chi-squared block norms to
their sum and ratio gives
\begin{equation}\label{eq:clayton-radial-beta}
 S\sim\chi^2_{q_0+2},\qquad
 T\sim\operatorname{Beta}({q_0}/2,1),\qquad S\ \text{and}\ T\ \text{independent}.
\end{equation}
In particular, $\E(S^{-1})=1/q_0$ and $T^{q_0/2}$ is uniform on $(0,1)$.
Except on null sets, $F_{q_0}=T^{q_0/2}$ and direct differentiation gives
\[
 \norm{\nabla T}^2=\frac{4T(1-T)}S.
\]
Consequently,
\begin{equation}\label{eq:clayton-gradient-square}
 \norm{\nabla F_{q_0}}^2
       =\frac{q_0^2}{S}T^{q_0-1}(1-T).
\end{equation}
These are weak derivatives. Indeed,
$F_{q_0}(x,w)=(\norm x/\norm{(x,w)})^{q_0}$ is locally Lipschitz away from the
origin for every integer $q_0\ge1$, including $q_0=1$, and is bounded by one.
Its gradient norm is at most $q_0/\norm{(x,w)}$. Multiplying by a smooth
radial cutoff that vanishes on the ball of radius $\epsilon$ and equals
one outside the ball of radius $2\epsilon$ changes the squared gradient
norm by at most a constant times
\[
 \int_{\norm z<2\epsilon}\norm z^{-2}\dd z
 +\epsilon^{-2}\int_{\norm z<2\epsilon}\dd z
 =O(\epsilon^{q_0}).
\]
The Gaussian density is bounded on this ball. The function error also
vanishes in $L^2(\gamma_{q_0+2})$, and Gaussian tails permit cutoffs at
infinity. Closedness of the weak gradient therefore gives
$F_{q_0}\in W^{1,2}(\gamma_{q_0+2})$.

Taking expectations in~\eqref{eq:clayton-gradient-square} and using
\eqref{eq:clayton-radial-beta} gives
\begin{align}
 \mathcal D(F_{q_0})
 &=q_0\E[T^{q_0-1}(1-T)]\notag\\
 &=\frac{q_0^2}{2}\int_0^1 t^{3q_0/2-2}(1-t)\dd t
   =\frac{2q_0}{3(3q_0-2)}.\label{eq:clayton-energy-proof}
\end{align}
The integral is finite for $q_0\ge1$. This proves
\eqref{eq:clayton-energy}. Corollary~\ref{cor:rough} yields
\eqref{eq:clayton-uniform-risk}.

Next consider a strictly increasing, locally absolutely continuous
quantile $Q$. Since $F_{q_0}(\bG)$ is uniform, the square integrability of
$F_{q_0,Q}(\bG)$ is exactly the first condition in
\eqref{eq:clayton-quantile-conditions}. For a bounded Lipschitz function $Q$,
the weak chain rule and~\eqref{eq:clayton-gradient-square} imply
\begin{align}
 \E\norm{\nabla(Q\circ F_{q_0})}^2
 &=q_0\E\left([Q'(T^{q_0/2})]^2T^{q_0-1}(1-T)\right)\notag\\
 &=\frac{q_0^2}{2}\int_0^1
       [Q'(x^{q_0/2})]^2x^{3q_0/2-2}(1-x)\dd x\notag\\
 &=q_0\int_0^1[Q'(t)]^2t^{2-2/q_0}(1-t^{2/q_0})\dd t
   =I_{q_0}(Q).\label{eq:clayton-chain-energy}
\end{align}
The final change of variable is $t=x^{q_0/2}$.

For the stated, possibly unbounded $Q$, the same identity follows by
closedness, without assuming that $Q$ is globally Lipschitz. For $N\ge3$,
define
\[
 Q_N(t)=Q(1/2)+\int_{1/2}^t Q'(x)
       \ind_{\{1/N\le x\le1-1/N,\ Q'(x)\le N\}}\dd x.
\]
Here $Q'\ge0$ almost everywhere. Each $Q_N$ extends to a bounded
Lipschitz function on $[0,1]$, converges pointwise to $Q$ on $(0,1)$,
and satisfies
$|Q_N(t)|\le |Q(t)|+2|Q(1/2)|$.
Thus $Q_N\to Q$ in $L^2(0,1)$. The derivatives converge to $Q'$ almost
everywhere, are bounded by $Q'$, and converge in the weighted squared norm
in~\eqref{eq:clayton-quantile-energy} when $I_{q_0}(Q)<\infty$.
Applying~\eqref{eq:clayton-chain-energy} to differences of the $Q_N$ shows
that $Q_N\circ F_{q_0}$ is Cauchy in $W^{1,2}(\gamma_{q_0+2})$. Its $L^2$ limit
is $F_{q_0,Q}$, so
\[
 F_{q_0,Q}\in W^{1,2}(\gamma_{q_0+2}),\qquad
 \mathcal D(F_{q_0,Q})=I_{q_0}(Q).
\]
Strict increase makes the transformation nonconstant; hence this energy
is positive. Corollary~\ref{cor:rough} proves
\eqref{eq:clayton-quantile-risk}.

For Beta margins, write $H_{a,b}$ for the distribution function, with
density proportional to $x^{a-1}(1-x)^{b-1}$ on $(0,1)$. Integration at
the endpoints and differentiation of the inverse give
\[
 Q_{a,b}'(t)\asymp t^{1/a-1}\quad(t\downarrow0),\qquad
 Q_{a,b}'(t)\asymp(1-t)^{1/b-1}\quad(t\uparrow1).
\]
The integrand in~\eqref{eq:clayton-quantile-energy} is therefore of order
$t^{2/a-2/q_0}$ near zero and $(1-t)^{2/b-1}$ near one. The second endpoint
is integrable for every $b>0$. The first is integrable exactly when
\[
 2/a-2/q_0>-1.
\]
This holds for all $a>0$ when $q_0\ge2$ and exactly for $a<2$ when $q_0=1$;
at $q_0=1,a=2$ the divergence is logarithmic. Since the quantile is bounded,
its square integrability is automatic. This proves the Beta statements.
Failure of this energy criterion is not, by itself, a proof of predictive
inefficiency outside the corollary.
\end{proof}

\subsection{Spatial-scale modulation}
\label{app:scale-regularity}

Motivated by the spatial scale-mixture construction of
\citet{PalaciosSteel2006}, consider
\[
Y(s)=Z_1(s)\exp\{\sigma Z_2(s)\},\qquad \sigma\in\R.
\]
For this transformation,
\begin{equation}\label{eq:volatility-energies}
\mathcal D(F)=(1+\sigma^2)e^{2\sigma^2},\qquad
c_Y=c(1+\sigma^2)e^{2\sigma^2}.
\end{equation}
The covariance map and the exact latent-oracle MSPE are
\begin{align}
K_F(u)&=u\exp\{\sigma^2(1+u)\},\label{eq:KF-volatility}\\
\MSPEO&=e^{2\sigma^2}\left\{1-(1-v_n)e^{-\sigma^2v_n}\right\}.
\label{eq:volatility-oracle}
\end{align}
The derivatives of $z_1e^{\sigma z_2}$ are products of a polynomial of
degree at most one in $z_1$ and an exponential of a linear function of
$z_2$. They have finite Gaussian second moments at every fixed
integer derivative order, so the transformation satisfies the regularity
conditions for every finite latent smoothness $\nu$.
The two first-derivative contributions are
\[
 \E e^{2\sigma G_2}=e^{2\sigma^2},\qquad
 \E[\{\sigma G_1e^{\sigma G_2}\}^2]=\sigma^2e^{2\sigma^2}.
\]
For the covariance, independence gives
$\E(G_1G_{u,1})=u$ and
$\E e^{\sigma(G_2+G_{u,2})}=e^{\sigma^2(1+u)}$.
This proves the displayed covariance formula; its decrement gives the
oracle MSPE.

\subsection{Polynomial benchmarks}
\label{app:polynomial-regularity}

Two simple noninvertible benchmarks are the product field
\begin{equation}\label{eq:product-example}
Y(s)=Z_1(s)Z_2(s),\qquad \mathcal D(F)=2,\qquad K_F(u)=u^2,
\end{equation}
and the Gaussian-square field
\begin{equation}\label{eq:gamma-example}
Y(s)=\tau\sum_{j=1}^q Z_j(s)^2,\qquad
\tau>0,\qquad \mathcal D(F)=4q\tau^2.
\end{equation}
The latter has Gamma margins with shape $q/2$ and scale $2\tau$,
and covariance map $K_F(u)=2q\tau^2u^2$; the displayed representation
thus gives these particular Gamma shape values.

A polynomial of total degree $M_0$ has no Hermite terms of degree exceeding
$M_0$. This proves inclusion for every exponent in
Assumption~\ref{ass:Hermite} and~\eqref{eq:conservative-Hermite}.
The energies above follow directly from differentiation and
$\E G_j^2=1$; the covariance maps follow from independence and
$\Cov(G^2,G_u^2)=2u^2$. The exact oracle MSPEs follow from
\eqref{eq:oracle-risk}.

\section{Limiting distribution of the prediction error}
\label{app:prediction-error}

This appendix proves Proposition~\ref{prop:prediction-error-limit} and
verifies the examples and local rescaling in
Section~\ref{sec:error-calibration}. The proof uses Gaussian conditioning
and an exact $L^2$ identity for $W^{1,2}(\gamma_q)$ transformations,
rather than a separate central limit theorem.

\subsection{Gaussian conditioning and an exact error identity}
\label{app:error-limit-proof}

\begin{proof}[Proof of Proposition~\ref{prop:prediction-error-limit}]
Let $\boldsymbol G,\boldsymbol U$ be independent $N_q(0,I_q)$ vectors on
an auxiliary probability space. For $0<v<1$, set
\[
 X_v=\sqrt{1-v}\,\boldsymbol G+\sqrt v\,\boldsymbol U,
 \qquad
 T_vF=\frac{F(X_v)-\E\{F(X_v)\mid\boldsymbol G\}}{\sqrt v}.
\]
Gaussian conditioning implies that
$(\boldsymbol Z_\star,\{Y(s_\star)-\YO\}/\sqrt{v_n})$ has the same
joint distribution as $(X_{v_n},T_{v_n}F)$. Indeed, the latent conditional
mean has distribution $N_q\{0,(1-v_n)I_q\}$ and its residual is independent
$N_q(0,v_nI_q)$.

For every $H\in W^{1,2}(\gamma_q)$, the conditional-copy identity and
\eqref{eq:product-bound} give
\begin{equation}\label{eq:sobolev-error-contraction}
 \|T_vH\|_{L^2}^2
 =\frac{K_H(1)-K_H(1-v)}{v}
 \le\mathcal D(H).
\end{equation}
Put $L_F=\nabla F(\boldsymbol G)^{\mathsf T}\boldsymbol U$.
Independence gives $\|L_F\|_{L^2}^2=\mathcal D(F)$ and
$\E(L_F\mid\boldsymbol G)=0$. Gaussian integration by parts with
respect to $\boldsymbol U$ yields
\begin{align*}
 \E[(T_vF)L_F]
 &=\frac1{\sqrt v}\sum_{j=1}^q
   \E\!\left[F(X_v)\,\partial_jF(\boldsymbol G)\,U_j\right]\\
 &=\E\!\left[\nabla F(X_v)^{\mathsf T}\nabla F(\boldsymbol G)\right]
 =K_F'\!\left(\sqrt{1-v}\right).
\end{align*}
The conditional-mean term vanishes because $\E(L_F\mid\boldsymbol G)=0$;
the last equality is~\eqref{eq:KF-derivative-coupling}, since
$\Cov(X_v,\boldsymbol G)=\sqrt{1-v}\,I_q$. The integration-by-parts
identity holds first for polynomials and extends to $W^{1,2}(\gamma_q)$
by Hermite approximation at each fixed $v$: the bound
\eqref{eq:sobolev-error-contraction} and the identity
$\|L_H\|_{L^2}^2=\mathcal D(H)$ control the left-hand side, while
Cauchy--Schwarz and the standard Gaussian margins of $X_v$ and
$\boldsymbol G$ control the gradient product. Here
$L_H=\nabla H(\boldsymbol G)^{\mathsf T}\boldsymbol U$.

Expanding the squared norm now gives the exact identity
\begin{equation}\label{eq:prediction-error-L2-identity}
 \|T_vF-L_F\|_{L^2}^2
 =\frac{\Delta_F(v)}{v}+\mathcal D(F)
   -2K_F'\!\left(\sqrt{1-v}\right).
\end{equation}
By~\eqref{eq:KF-endpoint}, both $\Delta_F(v)/v$ and
$K_F'(\sqrt{1-v})$ tend to $\mathcal D(F)$ as $v\downarrow0$.
Thus $T_vF\to L_F$ in $L^2$. Since $X_v\to\boldsymbol G$ in $L^2$,
this also proves the joint limit with the latent target.
Proposition~\ref{prop:exact} gives $\MSPEO/v_n\to\mathcal D(F)$,
establishing the standardized oracle limit under the weaker assumptions
stated in the proposition.

For each observed-data predictor $A\in\{\mathrm{NL},\mathrm L,\rho\}$,
$\widehat Y_n^A$ is measurable with respect to $\mathcal H_n$.
Conditional-expectation orthogonality therefore gives
\begin{equation}\label{eq:oracle-error-transfer}
 \E[(\widehat Y_n^A-\YO)^2]
 =\operatorname{MSPE}_n^A-\MSPEO=o(v_n),
\end{equation}
where the last equality follows from the applicable MSPE-equivalence
result. Since $\operatorname{MSPE}_n^A/v_n\to\mathcal D(F)$, Slutsky's
theorem transfers the joint limit to each of these predictors.

Put $S=\|\nabla F(\boldsymbol G)\|^2/\mathcal D(F)$. Conditionally on
$\boldsymbol G$, the limit is centered normal with variance $S$, and
$\E S=1$. This proves its first two moments and
\eqref{eq:error-mixture-variance}. If $S=1$ almost surely, the limit is
standard normal. Conversely, its characteristic function is
$\E\exp(-t^2S/2)$. For any $t\ne0$, strict convexity implies
\[
 \E\exp(-t^2S/2)\ge\exp(-t^2/2),
\]
with equality only if $S$ is almost surely constant. A standard normal
limit forces equality, and its constant variance must be one. This proves
\eqref{eq:normal-error-criterion} without a fourth-moment assumption.
\end{proof}

\subsection{Examples and local log-Gaussian rescaling}
\label{app:error-limit-examples}

For the log-Gaussian field, $F'(G)=\omega e^{\mu+\omega G}$ and
$\mathcal D(F)=\omega^2e^{2\mu+2\omega^2}$ give
\eqref{eq:lognormal-error-limit}. Independence yields variance one and
fourth moment $3e^{4\omega^2}$. These are moments of the limiting law;
convergence of the fourth moments of the finite-sample errors is not
asserted. The skew-Gaussian calculation follows from the weak gradient displayed in
Section~\ref{sec:examples}, whose squared norm is
$\delta^2+\tau^2$ almost everywhere.

Finally, the joint convergence in
Proposition~\ref{prop:prediction-error-limit} and continuity of the
exponential imply, for each $A\in\{\mathrm{NL},\mathrm L,\rho\}$,
\[
 \left(Y(s_\star),
       \frac{Y(s_\star)-\widehat Y_n^A}{\sqrt{v_n}}\right)
 \xrightarrow{d}
 \left(e^{\mu+\omega G},\,
       \omega e^{\mu+\omega G}U\right).
\]
Since $\operatorname{MSPE}_n^A=O(v_n)$,
$\widehat Y_n^A-Y(s_\star)\to0$ in probability. We may replace the first
coordinate by $|\widehat Y_n^A|$ without changing its strictly positive
limit. To cover possible zero denominators at finite $n$, for every
$\eta>0$,
\[
 \PP(\widehat Y_n^A=0)
 \le\PP\{Y(s_\star)\le\eta\}
   +\PP\{|\widehat Y_n^A-Y(s_\star)|\ge\eta\}.
\]
First let $n\to\infty$, then $\eta\downarrow0$. Since $Y(s_\star)>0$
almost surely, this probability tends to zero, and assigning an arbitrary
value to the ratio on that event does not affect its limit. Division and
the continuous mapping theorem prove
\eqref{eq:lognormal-local-standardization}. The limiting standard normal
law gives marginal coverage $1-\eta$ for the corresponding intervals with
half-width $z_{1-\eta/2}\omega|\widehat Y_n^A|\sqrt{v_n}$. No convergence
of conditional predictive distributions is used or asserted.

\section{Counterexamples and the discontinuous boundary case}
\label{app:boundaries}

We prove Propositions~\ref{prop:kink}, \ref{prop:counterexample}, and
\ref{prop:probit-boundary} in the order in which they appear in
Section~\ref{sec:limits}. The first two delimit the regularity and
distributional assumptions of the positive results. The probit calculation
describes an exact oracle MSPE without asserting observed-data efficiency.

\subsection{A regularity counterexample within the latent-Gaussian class}
\label{app:kink}

We prove Proposition~\ref{prop:kink} using the fractional-regularity
calculation in Appendix~\ref{app:fractional-examples} and the
kriging-MSPE comparison in Appendix~\ref{app:comparison}.

\begin{proof}[Proof of Proposition~\ref{prop:kink}]
The map $F_\epsilon(z)=z+\epsilon|z|$ has positive slopes $1-\epsilon$
and $1+\epsilon$, hence is invertible and globally Lipschitz. Its mean is
$\epsilon\sqrt{2/\pi}$ and $\mathcal D(F_\epsilon)=1+\epsilon^2$.
Equation~\eqref{eq:absolute-Hermite} gives
$F_\epsilon\notin W^{3/2,2}(\gamma_1)$. Odd--even Gaussian symmetry makes
the cross-covariances of $G$ and $|G_u|$ vanish. With $A$ as defined in
\eqref{eq:skew-covariance}, therefore,
\[
 K_{F_\epsilon}(u)=u+\epsilon^2 A(u),\qquad
 C_Y(h)=\rho(h)+\epsilon^2 A(\rho(h)).
\]
Invertibility gives $\mathcal G_n=\mathcal H_n$. The exact oracle
identity yields
\begin{equation}\label{eq:kink-exact-MSPE}
 \MSPEN=\MSPEO=v_n+\epsilon^2\{A(1)-A(1-v_n)\}
          \sim(1+\epsilon^2)v_n.
\end{equation}

The latent correlation satisfies
\[
 \rho(h)=1-\frac{h^2}{2}+\frac{|h|^3}{3}+O(h^4).
\]
Integrating $A'(u)=2\arcsin(u)/\pi$ gives
\[
 A(1-v)=A(1)-v+\frac{4\sqrt2}{3\pi}v^{3/2}+O(v^{5/2}).
\]
Combining these expansions, the observed covariance obeys
\begin{equation}\label{eq:kink-covariance-expansion}
 C_Y(h)=C_Y(0)-\frac{1+\epsilon^2}{2}h^2
 +\left(\frac{1+\epsilon^2}{3}+\frac{2\epsilon^2}{3\pi}\right)|h|^3
 +O(h^4).
\end{equation}
To justify the passage from this local expansion to a spectral-density
tail, set
\[
 b_\epsilon=\frac{1+\epsilon^2}{3}+\frac{2\epsilon^2}{3\pi}.
\]
For $t\ge0$, put $g_\epsilon(t)=C_Y(t)$. Near zero on the positive
half-line, $1-\rho(t)=t^2a_0(t)$ with $a_0$ analytic and positive, and
the nonanalytic part of $A(1-v)$ is $v^{3/2}$ times an analytic function.
Thus $g_\epsilon$ has a convergent one-sided power series at zero, and
\eqref{eq:kink-covariance-expansion} gives
\[
 g_\epsilon'(0^+)=0,\qquad g_\epsilon'''(0^+)=6b_\epsilon.
\]
On $(0,\infty)$ the covariance is smooth, and it and its derivatives
through order four decay exponentially at infinity. Its fourth derivative
is bounded near zero by the one-sided analyticity, so
$g_\epsilon^{(4)}\in L^1(0,\infty)$.

Evenness and integrability of $C_Y$, with the convention
\eqref{eq:spectral-convention}, give
\[
 f_Y(\lambda)=\frac1\pi\int_0^\infty
                      g_\epsilon(t)\cos(\lambda t)\dd t.
\]
Four integrations by parts, using $g_\epsilon'(0^+)=0$ and the decay
at infinity, yield, for $\lambda\ne0$,
\[
 \lambda^4 f_Y(\lambda)
 =\frac{6b_\epsilon}{\pi}
  +\frac1\pi\int_0^\infty
          g_\epsilon^{(4)}(t)\cos(\lambda t)\dd t
 \longrightarrow\frac{6b_\epsilon}{\pi}
 \qquad(|\lambda|\to\infty),
\]
by the Riemann--Lebesgue lemma. Thus
\begin{equation}\label{eq:kink-spectral-tail}
 f_Y(\lambda)\sim\frac2\pi
       \left(1+\epsilon^2+\frac{2\epsilon^2}{\pi}\right)|\lambda|^{-4}.
\end{equation}
The latent spectral density is
$f(\lambda)=2/\{\pi(1+\lambda^2)^2\}$.
The representation~\eqref{eq:observed-spectral-density} and the
$L^\infty$ contraction in Lemma~\ref{lem:conv-bound} give uniform convergence
of the continuous Hermite spectral series for $A(\rho(h))$. The observed
spectral density is therefore continuous and strictly positive, since
this series has nonnegative coefficients and the term $f$ is present
separately. It follows from the tail and
positivity that $f_Y$ has a two-sided polynomial envelope.
Lemma~\ref{lem:spectral-comparison} therefore applies directly to $f,f_Y$,
without invoking the main theorem, and gives
\[
 \MSPEL/v_n\longrightarrow1+\epsilon^2+2\epsilon^2/\pi.
\]
Division by~\eqref{eq:kink-exact-MSPE} proves~\eqref{eq:kink-ratio}.
Integrability of $g_\epsilon^{(4)}$ on the positive half-line justifies
the spectral-density limit; the local covariance expansion alone is
not being used as a general Tauberian theorem.
\end{proof}

\subsection{A Poisson shot-noise counterexample}
\label{app:counterexample}

This section proves Proposition~\ref{prop:counterexample}.
The nonlinear upper bound uses the local shot-noise structure, whereas
the linear lower bound follows directly from the spectral density by
Cauchy--Schwarz and a compactly supported test function.

\begin{proof}[Proof of Proposition~\ref{prop:counterexample}]
Let $\{T_j\}$ be a stationary Poisson point process on $\R$ of intensity
$\lambda$, and define
\begin{equation}\label{eq:shotnoise}
 Y(t)=\sum_{T_j\le t}(t-T_j)e^{-a(t-T_j)},\qquad a>0.
\end{equation}
The kernel $k(u)=ue^{-au}\ind_{\{u\ge0\}}$ is continuous, bounded and
integrable. The series is almost surely finite-valued at each fixed $t$ and converges
locally uniformly: on a compact interval, points in a fixed bounded region
are finite in number, while the remote-past terms are bounded by a
summable Poisson integral of $C(1+|T_j|)e^{aT_j}$.
Thus $Y$ has continuous sample paths. Translation invariance of the Poisson
process gives stationarity. The Poisson exponential formula is finite for
every positive argument, since $k$ is bounded and integrable, so the
marginals have moments of every order. They are nonnegative and nondegenerate,
and hence non-Gaussian.

The first two moments of a Poisson integral give
\[
 \E Y(t)=\lambda\int_0^\infty ue^{-au}\dd u=\lambda/a^2,
\]
and
\begin{align*}
 C_Y(h)
 &=\lambda e^{-a|h|}\int_0^\infty u(u+|h|)e^{-2au}\dd u\\
 &=\frac{\lambda}{4a^3}(1+a|h|)e^{-a|h|}.
\end{align*}
Its spectral density, in the convention of this paper, is
\begin{equation}\label{eq:shotnoise-spectral-density}
 f_Y(\omega)=\frac{\lambda}{2\pi}(a^2+\omega^2)^{-2}.
\end{equation}

Write $h=1/N$, set $W(t)=e^{at}Y(t)$, and use the notation
$x_+=\max(x,0)$. From~\eqref{eq:shotnoise},
\[
 W(t)=\sum_j(t-T_j)_+e^{aT_j}.
\]
On every compact interval this is a convex, piecewise linear function,
with an upward slope jump of size $e^{aT_j}$ at each Poisson point. Define
an observed-data predictor by
\begin{equation}\label{eq:shotnoise-predictor}
 \widetilde Y_h=
 \max\{2W(-h)-W(-2h),\;2W(h)-W(2h)\}.
\end{equation}
Because $W(t)=e^{at}Y(t)$ and $a$ is known, these four values of $W$
are determined by the observed $Y$ values. The target is $Y(0)=W(0)$ and the
four required sites belong to $S_N$ for $N\ge2$.
By convexity, both secant extrapolations in~\eqref{eq:shotnoise-predictor}
are at most $W(0)$. If at most one Poisson point lies in $[-2h,2h]$, at
least one of the intervals $[-2h,0]$ and $[0,2h]$ contains no point in its
interior. The corresponding extrapolation is then exact, and so is their
maximum. Boundary coincidences have probability zero and can be ignored.

Let $N_h$ count the Poisson points in $[-2h,2h]$.
Terms from points outside this interval are either linear on the whole
interval or identically zero there. They contribute no extrapolation
error. Each inside point contributes at most $Ch e^{2ah}$ to either
error. For $h\le1$, therefore,
\[
 0\le Y(0)-\widetilde Y_h\le ChN_h\ind_{\{N_h\ge2\}}.
\]
Since $N_h$ is Poisson with mean $\mu_h=4\lambda h$,
\[
 \E[N_h^2\ind_{\{N_h\ge2\}}]
     =\mu_h^2+\mu_h-\mu_he^{-\mu_h}=O(h^2).
\]
Consequently,
\begin{equation}\label{eq:shotnoise-nonlinear-upper}
 \operatorname{MSPE}_N^{\mathrm{NL}}\le\E[\{Y(0)-\widetilde Y_h\}^2]=O(h^4).
\end{equation}

To obtain the linear MSPE order, the simple average of the observations
at $-h,h$ has MSPE
\[
 \tfrac32 C_Y(0)+\tfrac12 C_Y(2h)-2C_Y(h)
     =\frac{\lambda}{6}h^3+O(h^4).
\]
Thus $\operatorname{MSPE}_N^{\mathrm L}\le Ch^3$.

For the lower bound, take the compactly supported function
\[
 \psi(t)=
 \begin{cases}
  1-3t^2+2|t|^3,& |t|\le1,\\
  0,& |t|>1,
 \end{cases}
 \qquad \psi_h(t)=\psi(t/h).
\]
The function $\psi$ is $C^1$ and has a square-integrable weak second
derivative. Moreover, $\psi_h(0)=1$ and $\psi_h(s)=0$ for every
$s\in S_N$, since all observation sites lie outside $(-h,h)$.
For arbitrary real weights $w=(w_s)_{s\in S_N}$, define
\[
 r_w(\omega)=1-\sum_{s\in S_N}w_s e^{\mathrm i\omega s},
 \qquad
 Q_h(w)=\int_\R |r_w(\omega)|^2 f_Y(\omega)\dd\omega.
\]
Here $Q_h(w)$ is the true MSPE of
$m_Y+\sum_{s\in S_N}w_s\{Y(s)-m_Y\}$, with $m_Y=\lambda/a^2$.
Using the Fourier transform
$\widehat\psi_h(\omega)=(2\pi)^{-1}\int_\R
 e^{-\mathrm i\omega t}\psi_h(t)\dd t$, Fourier inversion gives
\[
 \int_\R r_w(\omega)\widehat\psi_h(\omega)\dd\omega
 =\psi_h(0)-\sum_{s\in S_N}w_s\psi_h(s)=1.
\]
This inversion is justified by the integrability of $\widehat\psi_h$,
which follows from compact support and the integrable weak second
derivative. Cauchy--Schwarz therefore implies
\[
 Q_h(w)\ge
 \left\{\int_\R
    \frac{|\widehat\psi_h(\omega)|^2}{f_Y(\omega)}\dd\omega
 \right\}^{-1}.
\]
Elementary integration gives
\[
 \int_\R(\psi'')^2\dd t=24,\qquad
 \int_\R(\psi')^2\dd t=\frac{12}{5},\qquad
 \int_\R\psi^2\dd t=\frac{26}{35}.
\]
By~\eqref{eq:shotnoise-spectral-density}, Parseval's identity and the
scaling $\psi_h(t)=\psi(t/h)$ yield
\begin{align*}
 \int_\R\frac{|\widehat\psi_h(\omega)|^2}{f_Y(\omega)}\dd\omega
 &=\frac1\lambda\int_\R
       \{a^4\psi_h^2+2a^2(\psi_h')^2+(\psi_h'')^2\}\dd t\\
 &=\frac{1}{\lambda h^3}
       \left\{24+\frac{24}{5}(ah)^2+\frac{26}{35}(ah)^4\right\}.
\end{align*}
The bound is uniform in $w$. Minimizing over all weights gives
\begin{equation}\label{eq:shotnoise-linear-lower}
 \operatorname{MSPE}_N^{\mathrm L}
 \ge\frac{\lambda h^3}
       {24+\frac{24}{5}(ah)^2+\frac{26}{35}(ah)^4}.
\end{equation}
In particular,
$\liminf_{N\to\infty}N^3\operatorname{MSPE}_N^{\mathrm L}
 \ge\lambda/24>0$.
Together with the upper bound, this gives
$\operatorname{MSPE}_N^{\mathrm L}\asymp h^3$. Combining with
\eqref{eq:shotnoise-nonlinear-upper} and $h=N^{-1}$ proves
\eqref{eq:counter-rates}.
\end{proof}

\subsection{Spatial probit: covariance and boundary behaviour}
\label{app:probit}\label{sec:infinite}

We prove Proposition~\ref{prop:probit-boundary} and give additional
covariance-increment and numerical-evaluation identities for the boundary
example in Section~\ref{sec:probit}.

\subsubsection{Covariance and latent-information MSPE}
\label{app:probit-risk}

For the decrement $\Delta_F$ defined in~\eqref{eq:Delta-definition},
the covariance-map identities
\begin{equation}\label{eq:Delta-identities}
 \MSPEO=\Delta_F(v_n),\qquad
 C_Y(0)-C_Y(h)=\Delta_F(1-\rho(h))
\end{equation}
remain valid when the endpoint slope is infinite. The domain $[0,2]$
in~\eqref{eq:Delta-definition} includes negative values of $\rho(h)$.
They identify a covariance decrement and a latent-information MSPE, not an
observed-data efficiency ratio.

\begin{proof}[Proof of Proposition~\ref{prop:probit-boundary}]
For two standard Gaussian variables $G,G_u$ with correlation $u$,
$\PP(G>a)=p_a$ and
\[
 \PP(G>a,G_u>a)=\Phi_2(-a,-a;u).
\]
Subtracting $p_a^2$ gives~\eqref{eq:probit-covariance-map}, and the value
at $u=1$ is the Bernoulli variance. Plackett's differentiation identity
\citep{Plackett1954} gives, for $-1<u<1$,
\begin{equation}\label{eq:probit-map-derivative}
 K_a'(u)=\phi_2(a,a;u)
 =\frac{1}{2\pi\sqrt{1-u^2}}
      \exp\left\{-\frac{a^2}{1+u}\right\}.
\end{equation}
The same identity follows by integrating
$\partial_u\phi_2(x,y;u)=\partial_x\partial_y\phi_2(x,y;u)$ over
$(-\infty,-a]^2$. Consequently, as $t\downarrow0$,
\[
 K_a'(1-t)\sim
 \frac{e^{-a^2/2}}{2\pi\sqrt{2}}\,t^{-1/2}.
\]
This singularity is integrable at zero. Hence, writing
$\Delta_a=\Delta_{F_a}$,
\begin{equation}\label{eq:probit-endpoint}
 \Delta_a(v)=\int_0^v K_a'(1-t)\dd t
 \sim\frac{e^{-a^2/2}}{\pi\sqrt{2}}\,v^{1/2}
 =\frac{\phi(a)}{\sqrt{\pi}}\,v^{1/2}.
\end{equation}
In particular, $K_a'(1^-)=\infty$. Since $F_a\in L^2(\gamma_1)$, the
Hermite characterization in Lemmas~\ref{lem:KF-identities} and
\ref{lem:derivatives} implies $F_a\notin W^{1,2}(\gamma_1)$.

Let $\boldsymbol Z_n=(Z(s_1),\ldots,Z(s_n))^{\mathsf T}$ and
$M_n=\boldsymbol r_n^{\mathsf T}R_n^{-1}\boldsymbol Z_n$. Gaussian
conditioning yields
\[
 Z(s_\star)\mid\mathcal H_n\sim N(M_n,v_n),\qquad
 \YO=\Phi\left(\frac{M_n-a}{\sqrt{v_n}}\right).
\]
Proposition~\ref{prop:exact} applies to the bounded transformation $F_a$
without a differentiability assumption and proves~\eqref{eq:probit-oracle}.
Since $v_n\to0$ under the latent spectral and dense-design assumptions,
\eqref{eq:probit-endpoint} proves~\eqref{eq:probit-oracle-asymptotic}.

For $a=0$, $K_0(0)=0$ and~\eqref{eq:probit-map-derivative} integrates to
\[
 K_0(u)=\frac{1}{2\pi}\arcsin u,\qquad -1\le u\le1.
\]
Therefore $K_0(1)=1/4$ and
$K_0(1)-K_0(1-v_n)=(2\pi)^{-1}\arccos(1-v_n)$.
The expansion $\arccos(1-v)\sim\sqrt{2v}$ completes the proof of
\eqref{eq:probit-symmetric}.
\end{proof}

\subsubsection{Covariance increments and the regularity boundary}
\label{app:probit-increments}

Combining~\eqref{eq:probit-endpoint} with the observed covariance gives
\begin{equation}\label{eq:probit-covariance-decrement}
 C_Y(0)-C_Y(h)
 \sim\frac{\phi(a)}{\sqrt{\pi}}\{1-\rho(h)\}^{1/2},
 \qquad \|h\|\to0.
\end{equation}
For a Mat\'ern latent correlation with smoothness $\nu$, its standard
small-lag expansion yields
\begin{equation}\label{eq:probit-increments}
 C_Y(0)-C_Y(h)\asymp
 \begin{cases}
  \|h\|^\nu, & 0<\nu<1,\\
  \|h\|\{\log(1/\|h\|)\}^{1/2}, & \nu=1,\\
  \|h\|, & \nu>1.
 \end{cases}
\end{equation}
Indeed, $1-\rho(h)$ is respectively of order $\|h\|^{2\nu}$,
$\|h\|^2\log(1/\|h\|)$ and $\|h\|^2$. These are statements about
second-order increments. They neither assert sample-path smoothness of the
binary field nor, without a separate argument, identify a spectral-density
tail. In particular the positive-theorem mechanism
$f_Y/f\to\mathcal D(F)\in(0,\infty)$ cannot be invoked here.

The indicator retains a nonzero first Hermite coefficient,
$\E[G\ind_{\{G>a\}}]=\phi(a)$. Its spectral series~\eqref{eq:observed-spectral-density} therefore
contains $\phi(a)^2f$, so a strictly positive latent spectral density gives strictly
positive observed covariance matrices as well. Thus the finite-sample
ordering in Proposition~\ref{prop:finite-gains} still applies. This
ordering and the exact oracle formula do not decide whether the additional
latent information retains an asymptotically nonnegligible advantage over binary observations.

\subsubsection{Stable evaluation of the latent-information MSPE}
\label{app:probit-evaluation}

Changing variables $u=\cos t$ in the derivative integral gives the
nonnegative, nonsingular expression
\begin{equation}\label{eq:probit-stable-oracle}
 \Delta_a(v)=\frac{1}{2\pi}
 \int_0^{\arccos(1-v)}
       \exp\left\{-\frac{a^2}{1+\cos t}\right\}\dd t,
 \qquad 0\le v\le1.
\end{equation}
The upper limit may be computed as $2\arcsin\sqrt{v/2}$ to avoid loss of
precision for small $v$. This evaluates the oracle MSPE at $v=v_n$
without subtracting nearly equal bivariate normal probabilities. At $a=0$
it reduces to $\Delta_0(v)=\pi^{-1}\arcsin\sqrt{v/2}$.

\section{Further constructions and extensions}
\label{app:additional-models}

This appendix records the extension to different latent covariances and
two additional ratio-field calculations. These are not needed for the
main proof. Except in Appendix~\ref{app:heterogeneous-latents}, latent
fields are independent, standardized, and share the correlation $\rho$.

\subsection{Independent latent fields with different covariances}
\label{app:heterogeneous-latents}

\begin{remark}\label{rem:heterogeneous-latents}
Suppose~\eqref{eq:L2-F} and Assumption~\ref{ass:design} hold, but let the
independent standardized latent fields have correlation functions
$\rho_1,\ldots,\rho_q$ and continuous, positive, even, unit-mass spectral
densities
\[
 f_j(\lambda)\sim c_j\|\lambda\|^{-d-2\nu},\qquad
 c_j>0,\quad j=1,\ldots,q,
\]
with a common $\nu>0$. Under Assumption~\ref{ass:Hermite}, assume that each
$f_j$ is radially nonincreasing. Alternatively, impose
\eqref{eq:conservative-Hermite} without radial monotonicity.
Let
\[
 v_{j,n}=\Var\{Z_j(s_\star)\mid Z_j(S_n)\},\qquad
 \mathcal D_{\boldsymbol c}(F)
 =\sum_{j=1}^q\frac{c_j}{c_1}\mathcal D_j(F),
 \quad\boldsymbol c=(c_1,\ldots,c_q).
\]
Write $\widehat Y_n^{\rho_1}$ for the observed-data kriging predictor whose
weights use $\rho_1$, and $\operatorname{MSPE}_n^{\rho_1}$ for its true
MSPE. The nonlinear, true-covariance, and oracle predictors retain their
definitions, with the present latent fields. Then
\begin{equation}\label{eq:heterogeneous-risk}
 \MSPEO\sim\MSPEN\sim\MSPEL\sim\operatorname{MSPE}_n^{\rho_1}
 \sim\mathcal D_{\boldsymbol c}(F)v_{1,n}.
\end{equation}
Thus a single latent correlation remains sufficient for first-order MSPE
efficiency. The common-correlation formulation in the main text avoids
carrying the component-specific parameters through the argument.
\end{remark}

\begin{proof}[Justification of Remark~\ref{rem:heterogeneous-latents}]
First, Lemma~\ref{lem:spectral-comparison}, applied to $f_1$ and $f_j$,
gives
\begin{equation}\label{eq:heterogeneous-variance-ratios}
 \frac{v_{j,n}}{v_{1,n}}\longrightarrow\frac{c_j}{c_1},
 \qquad j=1,\ldots,q.
\end{equation}
Indeed, continuity, positivity and the common tail order give the required
two-sided envelopes, while $f_j/f_1\to c_j/c_1$. In particular,
$v_{1,n}\to0$, and the variance ratios are uniformly bounded for large $n$
because $q$ is finite.

The conditional-copy calculation in Appendix~\ref{app:oracle-proof} now
has cross-covariance matrix
$\operatorname{diag}(1-v_{1,n},\ldots,1-v_{q,n})$. Applying the Hermite
pairing identity coordinatewise gives the exact formula
\begin{equation}\label{eq:heterogeneous-oracle}
 \MSPEO=\sum_{\bk\ne\boldsymbol0}b_{\bk}^2
 \left\{1-\prod_{j=1}^q(1-v_{j,n})^{k_j}\right\}.
\end{equation}
For numbers $0\le v_j\le1$ and nonnegative integers $k_j$,
\[
 0\le1-\prod_j(1-v_j)^{k_j}\le\sum_jk_jv_j.
\]
For every fixed multi-index, expansion of the finite product and
\eqref{eq:heterogeneous-variance-ratios} show that the braces in
\eqref{eq:heterogeneous-oracle}, divided by $v_{1,n}$, tend to
$\sum_jk_jc_j/c_1$. The displayed bound dominates these normalized terms
by $C|\bk|$ for all large $n$. Since
$\sum_{\bk}|\bk|b_{\bk}^2=\mathcal D(F)<\infty$, dominated convergence
yields
\[
 \frac{\MSPEO}{v_{1,n}}\longrightarrow
 \sum_{\bk\ne\boldsymbol0}b_{\bk}^2\sum_jk_j\frac{c_j}{c_1}
 =\mathcal D_{\boldsymbol c}(F).
\]

For the observed covariance and spectral density, the corresponding
coordinatewise identities are
\[
 C_Y(h)=\sum_{\bk\ne\boldsymbol0}b_{\bk}^2
                 \prod_j\rho_j(h)^{k_j},\qquad
 f_Y=\sum_{\bk\ne\boldsymbol0}b_{\bk}^2 f_{\bk},
\]
where $f_{\bk}$ is the convolution of $k_j$ copies of $f_j$ for each $j$;
factors with $k_j=0$ are omitted. There is at least one factor because
$\bk\ne\boldsymbol0$. Each $f_{\bk}$ has unit mass, so Tonelli's theorem
and $\sum_{\bk\ne\boldsymbol0}b_{\bk}^2=K_F(1)<\infty$ justify the
spectral series in $L^1$, just as in~\eqref{eq:observed-spectral-density}.
The finite family $f_1,\ldots,f_q$ has a common polynomial upper envelope. Lemma~\ref{lem:conv-tail} therefore gives
\[
 f_{\bk}(\lambda)\sim
 \left(\sum_jk_jc_j\right)\|\lambda\|^{-d-2\nu}
\]
for each fixed $\bk$. The uniform bounds in
Lemmas~\ref{lem:conv-bound} and~\ref{lem:refined-convolution} also apply
to these mixed convolutions. Their proofs require only a common envelope
for the independent frequency vectors; in the refined case every factor
is symmetric and radially nonincreasing. Consequently the dominating
weights are $C|\bk|^{d+2\nu+1}$ or $Cw_\nu(|\bk|)$, respectively.
The assumed Hermite summability permits the limit to pass through the
spectral series, giving
\[
 \frac{f_Y(\lambda)}{f_1(\lambda)}
 \longrightarrow\mathcal D_{\boldsymbol c}(F).
\]
The ratio is bounded: the series defines a continuous positive density,
and the limiting ratio is finite and positive. Lemma~\ref{lem:residual-energy}
then gives
$\operatorname{MSPE}_n^{\rho_1}/v_{1,n}\to\mathcal D_{\boldsymbol c}(F)$.
Finally,
$\MSPEO\le\MSPEN\le\MSPEL\le\operatorname{MSPE}_n^{\rho_1}$
and the two endpoint limits prove~\eqref{eq:heterogeneous-risk}.
\end{proof}

\subsection{An auxiliary Beta random field}
\label{app:beta-field}

The auxiliary Beta random field studied in
\citet[Section~2]{BevilacquaEtAl2024} is obtained as a ratio of two
independent Gamma fields built from squared Gaussian fields. In the present
notation, let
\[
 U(s)=\sum_{j=1}^{q_1} Z_j(s)^2,\qquad
 V(s)=\sum_{j=q_1+1}^{q_1+q_2} Z_j(s)^2,
\]
and define
\begin{equation}\label{eq:beta-field}
 Y(s)=\frac{U(s)}{U(s)+V(s)},
 \qquad q_1,q_2\in\N.
\end{equation}
The Gaussian fields are independent and share the correlation $\rho$.
The corresponding transformation is
\[
 F(\bz)=\frac{\sum_{j=1}^{q_1}z_j^2}{\sum_{j=1}^{q_1+q_2}z_j^2}
 \quad(\bz\ne\boldsymbol0),\qquad F(\boldsymbol0)=0.
\]
The marginal distribution is Beta$(q_1/2,q_2/2)$. This is the auxiliary
ratio field in that work, not the subsequent Clayton-like field obtained by
an additional marginal transformation. With $q=q_1+q_2>2$,
\begin{equation}\label{eq:beta-energy}
 \mathcal D(F)=\frac{4q_1q_2}{q(q+2)(q-2)}<\infty.
\end{equation}
Thus Corollary~\ref{cor:rough} covers this bounded noninvertible
construction under~\eqref{eq:radial-monotone} and $0<\nu<1$. At $q=2$ the energy diverges despite boundedness of the
response, so that theorem gives no conclusion. The Clayton-like construction in Section~\ref{sec:examples} applies
an additional transformation and allows prescribed marginal distributions.

\paragraph{Regularity check.}

For the bounded ratio at a fixed site, write $U=U(s)$, $V=V(s)$,
$S=U+V$, and $B=U/S$.
The independence of the two Gaussian-square sums gives independent
$S\sim\chi_q^2$ and $B\sim\operatorname{Beta}(q_1/2,q_2/2)$.
The gradient norm is
\[
 \norm{\nabla F}^2=\frac{4UV}{S^3}=\frac{4B(1-B)}S.
\]
Since
\[
 \E B(1-B)=\frac{q_1q_2}{q(q+2)},\qquad
 \E S^{-1}=\frac1{q-2},
\]
we obtain~\eqref{eq:beta-energy}. A cutoff at $S^{1/2}=0$ has squared
gradient error $O(\epsilon^{q-2})$, proving the weak derivative assertion
when $q>2$. At $q=2$, the angular factor has strictly positive expectation
and $\E S^{-1}=\infty$, so the Gaussian Sobolev energy diverges.

\subsection{Higher smoothness for the uniform Clayton-like field}
\label{app:clayton-smooth}

Corollary~\ref{cor:clayton} applies to the whole rough-latent regime,
not just to the exponential covariance. For smoother latents, a sufficient
condition for the \emph{uniform} construction is that $q_0$ be even and
that some integer $M$ satisfy
\begin{equation}\label{eq:clayton-general-condition}
 M>\nu,\qquad q_0+2>2M.
\end{equation}
Then the transformation has square-integrable weak derivatives through
order $M$, and Theorem~\ref{thm:main} applies under
\eqref{eq:radial-monotone}; the verification is given below. Without spectral
monotonicity, replacing $M>\nu$ by $M\ge d+2\nu+1$ gives the alternative
Theorem~\ref{thm:general}. Arbitrary additional marginal quantiles require
their own higher-order regularity checks at these smoothness values.

\paragraph{Regularity check.}

When $q_0$ is even, $F_{q_0}$ is smooth off the origin and homogeneous of degree
zero. Its derivatives of total order $k$ are bounded in absolute value by
$C_k\norm z^{-k}$, because their restrictions to the unit sphere are
smooth and bounded. In dimension $q=q_0+2$ these bounds are locally square
integrable through order $M$ when $q>2M$. Gaussian tails control infinity.
With the radial cutoff used above, every derivative error of order
$k\le M$ has squared Gaussian norm $O(\epsilon^{q-2k})$, tending to zero.
Thus these are Gaussian-square-integrable weak derivatives of $F_{q_0}$.
Lemma~\ref{lem:derivatives} and $M>\nu$ verify
Assumption~\ref{ass:Hermite}. If instead $M\ge d+2\nu+1$, they also
verify~\eqref{eq:conservative-Hermite} for the alternative theorem.

\printbibliography

\end{document}